\numberwithin{equation}{section}
\newtheorem{theorem}{Theorem}[section]     
\newtheorem{definition}[theorem]{Definition}
\newtheorem{proposition}[theorem]{Proposition}
\newtheorem{lemma}[theorem]{Lemma}
\newtheorem{de}[theorem]{Definition}
\newtheorem{corollary}[theorem]{Corollary}
\newtheorem{remark}[theorem]{Remark}
\def\d{\partial}
\def\p{\partial}
\def\n{\noindent}
\def\f{\frac}
\def\proof{\noindent\hspace{2em}{\itshape Proof: }}
\def\QEDclosed{\mbox{\rule[0pt]{1.3ex}{1.3ex}}} % for a filled box
\def\QED{\QEDclosed} 
\def\endproof{\hspace*{\fill}~\QED\par\endtrivlist\unskip}
\newcommand{\eqa}{\begin{eqnarray}}
\newcommand{\eeqa}{\end{eqnarray}}
\newcommand{\beq}{\begin{equation}}
\newcommand{\eeq}{\end{equation}}
\newcommand{\nn}{\nonumber}
\newcommand{\pal}{\partial}
\begin{document}

\title{Poisson bracket on $1$-forms and evolutionary partial differential
 equations}
\author{Alessandro Arsie* and Paolo Lorenzoni**\\
\\
{\small *Department of Mathematics and Statistics}\\
{\small University of Toledo,}
{\small 2801 W. Bancroft St., 43606 Toledo, OH, USA}\\
{\small **Dipartimento di Matematica e Applicazioni}\\
{\small Universit\`a di Milano-Bicocca,}
{\small Via Roberto Cozzi 53, I-20125 Milano, Italy}\\
{\small *alessandro.arsie@utoledo.edu,  **paolo.lorenzoni@unimib.it}}

\date{}

\maketitle

\begin{abstract}
We introduce a bracket on $1$-forms defined on ${\cal J}^{\infty}(S^1, \mathbb{R}^n)$, the infinite jet extension of the space of loops and prove that it satisfies the standard properties of a Poisson bracket. Using this bracket, we show that certain hierarchies appearing in the framework of $F$-manifolds with compatible flat connection $(M, \nabla, \circ)$ are Hamiltonian in a generalized sense. Moreover, we show that if a metric $g$ compatible with $\nabla$ is also invariant with respect to $\circ$, then this generalized Hamiltonian set-up reduces to the standard one. 
%Finally, analyzing an example in dimension $n=3$, we show that the two local Poisson structures $P_1$ and $P_2$ one can construct starting from any bi-flat $F$-manifold are not compatible in general.
\end{abstract}

\section{Introduction}

In the study of integrable evolutionary PDEs, many geometric structures have been introduced, starting from the celebrated Dubrovin-Novikov framework relating the 
existence of a local Poisson tensor of hydrodynamic type to the presence of a flat non-degenerate metric $g$.
More precisely, in the Dubrovin-Novikov set-up given 
 two local functionals $F[u]=\int_{S^1}f(u,u_x,\dots)\,dx$ and $G[u]=\int_{S^1}g(u,u_x,\dots)\,dx$,
 their Poisson bracket is defined as
\beq\label{LPHT}
\{F,G\}=\int_{S^1}\f{\delta F}{\delta u^i}
\left(g^{ij}\d_x-g^{il}\Gamma_{lk}^ju^k_x\right)
\f{\delta G}{\delta u^j}\,dx
\eeq
where $\Gamma_{lk}^j$ are the Christoffel symbols
 of the Levi-Civita connection associated to the metric $g$.
\newline
\newline
\newline
Many important examples of evolutionary  PDEs are Hamiltonian with respect to a local Poisson bracket of hydrodynamic type.
 This means that they can be written as
\beq\label{Hform}
u^i_t=\left(g^{ij}\d_x-g^{il}\Gamma_{lk}^ju^k_x\right)
\f{\delta H}{\delta u^j},\qquad i=1,\dots,n
\eeq
for a suitable choice of the Hamiltonian $H[u]=\int_{S^1}h(u,u_x,\dots)\,dx$ (in the case of quasilinear equations
 the density $h$ depends only on $u$). In equation \eqref{Hform}, $\f{\delta H}{\delta u^j}$ denotes variational derivative of the functional $H$, see formula \eqref{var.der}. In this lucky case there is no need to extend the bracket to $1$-forms.
 
 However, the equations admitting this Hamiltonian formulation are  exceptions, even in the integrable case.
 For this reason, immediately after the seminal paper of Dudrovin and Novikov, many authors devoted lots of work
 to extend their formalism. These efforts led to the definition of non local Poisson bracket of hydrodynamic type \cite{fm90,fe91}.
 Besides the local part \eqref{LPHT} they contain an additional term of the form
\beq\label{LPHTcor}  
\sum_{\beta}\epsilon_{\beta}\left(w_{\beta}\right)^i_ku^k_x
\left(\frac{d}{dx}\right)^{\!-1}\!\!\!\left(w_{\beta}\right)^j_hu^h_x
\eeq
which is due to the non-vanishing of the curvature of $g$:
\beq\label{qexp}
R^{ij}_{kh}=\sum_\alpha\varepsilon_{\alpha} \left\{\left(w_\alpha\right)^i_k\left(w_\alpha\right)^j_h
-\left(w_\alpha\right)^j_k\left(w_\alpha\right)^i_h\right\}.
\eeq
The affinors $w^i_j$ appearing in the formula are related to symmetries and can be interpreted as Weingarten operators.
In practice, the presence of such term might have some drawbacks due to the ambiguity of the action of the
 operator $\d_x^{-1}$ on the differential of local functionals and to the difficulties
 in finding the quadratic expansion  \eqref{qexp}.
\newline
\newline
Recently it was observed that a special class of non-local Poisson structures of hydrodynamic type related to local Poisson
 structures by a reciprocal transformation can be interpreted as local Jacobi structure, see \cite{LZ}. 
 
 This discovery enabled the authors of \cite{LZ}
 to avoid the disadvantages created by the presence of the non-local term and it allowed them to compute the Lichnerowicz-Jacobi cohomology groups extending
 the results obtained by Getzler \cite{G} in the case of local Poisson structures.	

The results we present here go in the same direction but, instead of dropping the requirement of locality we drop the requirement
 of exactness of the 1-form $\omega$ defining the equation (or the hierarchy in the integrable case):
\beq\label{hameqne}
u^i_t=P^{ij}\omega_j=\left(g^{ij}\d_x-g^{il}\Gamma_{lk}^ju^k_x\right)
\omega_j,\qquad i=1,\dots,n.
\eeq
% Subsequently, many other structures have been introduced, from Frobenius manifolds to $F$-manifolds, mainly motivated by the study of WDVV equations and topological field theories. These frameworks have been useful also in the study of integrable PDEs; for instance it has been realized in \cite{LPR} that the existence of an $F$-manifold structure with compatible flat connection introduced by Manin gives rise to a straightforward generalization of the principal hierarchy of Dubrovin, an integrable hierarchy of dispersionless PDEs. 

The motivation for this work comes from the study of integrable hierarchies related to $F$-manifold with compatible flat and bi-flat structure
 \cite{LPR,LP,AL1,AL2}. Both these classes of manifolds are equipped with a flat torsionless connection $\nabla$ that selects a class of local Poisson tensors $P$ of hydrodynamic type.

% More specifically, given $\nabla$, the tensor $P$ is given by $P=g^{ij}\partial_x +\Gamma^{ij}_k u^k_x$, where $g^{ij}$ and $\Gamma^{ij}_k$ are constructed as follows. Choose a flat coordinate system $t_1, \dots, t_n$ in which $\nabla_i=\partial_{t_i}$ and choose a non-degenerate, not necessarily positive definite constant metric $g_{ij}$. Then in a different coordinate system $u^1, \dots u^n$, not necessarily flat for $\nabla$, we have $g_{ij}(u^1, \dots, u^n)$ while $\nabla_i:=\nabla_{\frac{\partial }{\partial u^i}}$ is expressed in terms of the partial derivative with respect to $u^i$ and in terms of the Christoffel symbols associated to $g_{ij}(u^1, \dots, u^n)$, $\Gamma_{ij}^k(u^1, \dots u^n)$. The quantities $g^{ij}$ and $\Gamma^{ij}_k$ appearing in $P$ are simply the inverse of the metric $g_{ij}(u^1, \dots, u^n)$ and $-g^{il}\Gamma_{lk}^j(u^1, \dots, u^n)$ respectively. We point out since now that any non degenerate (pseudo)-metric compatible with $\nabla$ define a local Poisson tensor of hydrodynamic type in this way.  
% 
% As we already observed, \textcolor{red}{
%in the case of an $F$-manifold with compatible flat structure}, one can construct a straightforward generalization of the principal hierarchy of Dubrovin, an integrable hierarchy of dispersionless PDEs. 
We show that the equations of these hierarchies can always be put in the form \eqref{hameqne}
 where the Poisson tensor $P$ belongs to the class defined by the connection $\nabla$, \footnote{By this we mean that $P$ is a Poisson structure of Dubrovin-Novikov type associated to a metric $g$ compatible with $\nabla$.} and $\omega$ is a suitable $1$-form. In the case of Frobenius manifolds, it turns out that the $1$-form $\omega$ is indeed exact, namely it can be written as $\omega=\delta H$, where the functional $H$ is interpreted as the Hamiltonian of the PDEs \eqref{hameqne}.  
In general however, these hierarchies do not admit any usual local Hamiltonian structure, namely the $1$-forms $\omega$ are not exact. 
 
% In the case of bi-flat $F$-manifolds, introduced in \cite{AL2},  these hierarchies are naturally equipped with not just one local Poisson tensor, but actually with two local Poisson tensors of hydrodynamic type, $P_1$ and $P_2$. Moreover, these Poisson tensors originate from two flat connections $\nabla_1$ and $\nabla_2$ that appear in a recursion relation generalizing the standard Lenard-Magri chain, what we called the twisted Lenard-Magri chain (see \cite{AL1}). 
%The presence of these Poisson tensors and their relevance in the context of these integrable hierarchies cry out for an investigation. 

The aim of this work is indeed to explore these issues and it is twofold. 

First of all it is natural to ask what is the relevance of these local Poisson tensors for the corresponding integrable hierarchies, when it is well-known that for most of them there is no corresponding Hamiltonian functional, so they can not be written in standard Hamiltonian form. We provide an answer introducing a Poisson bracket on $1$-forms, that are not necessarily closed (the case in which a Hamiltonian functional does exist correspond to {\em exact} $1$-forms). 

To do so we build on the work of \cite{GD, MM}. 
Indeed, let us recall that on any Poisson manifold $(M,P)$ there is 
 a $\mathbb{R}$-bilinear, skew symmetric operation 
 $\{\cdot,\cdot\}:\Lambda_1 M\times\Lambda_1 M\to
\Lambda_1 M$, where $\Lambda_1M$ denotes the vector space of $1$-forms on $M$,  extending the usual Poisson bracket
 between smooth functions. It is defined by
\begin{equation}\label{PB1intro}
\{\alpha,\beta\}:={\rm Lie}_{P\beta}\alpha-{\rm Lie}_{P\alpha}\beta+d\langle \beta,P\alpha\rangle.
\end{equation}
and satisfies
\begin{eqnarray*}
&&\{df,dg\}=d\{f,g\}\\
&&\{\alpha,f\beta\}=f\{\alpha,\beta\}+[(P\alpha)(f)]\beta.
\end{eqnarray*}
Here we consider an infinite-dimensional analogue of such a bracket. Our Poisson manifold will be the space
 $\mathcal{L}(M)=\{S^1\to\mathbb{R}^n\}$
 of $C^{\infty}$ maps from the circle to $\mathbb{R}^n$ endowed
 with a local Poisson bivector $P$ of hydrodynamic type.
 
 The Poisson bracket we are going to introduce on $1$-forms has the following expression. Let $\alpha=\int dx \wedge \alpha_i \delta u^i$ and $\beta=\int dx \wedge \beta_j \delta u^j$ be two $1$-forms. 
Then  $\{\alpha, \beta\}$ is a $1$-form $\{\alpha, \beta\}=\int dx\wedge \{\alpha, \beta\}_i\delta u^i$, 
where 
\beq\begin{split}\label{}
\{\alpha, \beta\}_i=\p_x^s\left(g^{kl}\p_x\beta_l+\Gamma^{kl}_mu^m_x\beta_l\right)\f{\p\alpha_i}{\p u^k_{(s)}}-\p_x^s\left(g^{kl}\p_x\alpha_l+\Gamma^{kl}_mu^m_x\alpha_l\right)\f{\p\beta_i}{\p u^k_{(s)}}\\
+\left(\alpha_k\p_x\beta_l-\beta_k\p_x\alpha_l\right)\Gamma^{lk}_i-\alpha_k\beta_l\left[\Gamma^{k}_{is}\Gamma^{sl}_m-\Gamma^l_{is}\Gamma^{sk}_m\right]u^m_x,
\end{split}\eeq
where $g^{kl}$ is the flat (contravariant)-metric in the chosen coordinates and $\Gamma^{lk}_m$ are the corresponding Christoffel symbols. 

Although it is not unexpected, one result about this bracket is that it satisfies Jacobi identity. Moreover, this bracket equips the vector space of $1$-forms with a Lie algebra structure  and the Poisson tensor $P$ gives rise to an anti-homomorphism of Lie algebras from the Lie algebra of $1$-forms equipped with this bracket and the Lie algebra of evolutionary vector fields equipped with the Lie bracket. 

Given an $F$-manifold with compatible flat connection $(M, \nabla, \circ)$, the second issue we are addressing is the role played by the existence of a metric $g$ invariant with respect to the product $\circ$ (namely $g(X\circ Y, Z)=g(X, Y\circ Z)$) and compatible with $\nabla$ ($\nabla g=0$) with respect to the nature of the $1$-forms we are considering. 

We prove that the invariance of the metric with respect to $\circ$ is responsible for the exactness of the $1$-forms, namely for the existence of a true Hamiltonian functional. 

However, in the general case of an $F$-manifold with  compatible flat connection $(M, \nabla, \circ)$ where such a metric does not exist, (see for instance \cite{AL2} where we provided plenty of examples in which any metric compatible with the connection $\nabla$ is {\em not invariant} with respect to the product $\circ$), it turns out that the $1$-forms on which the Poisson tensor acts are {\em not exact}.

In this context we provide also an alternative proof of the commutativity of the flows in the principal hierarchy. Indeed, in this set-up the commutativity of the flows is equivalent to the Poisson-involutivity of the corresponding $1$-forms with respect to the Poisson bracket introduced above. 
\newline
\newline
%The last question we tackle concerns the pairs $(P_1,P_2)$ of Poisson structures  one can construct starting from any bi-flat $F$-manifold. 
%%As already remarked above, these two local Poisson structures enter, via the flat connections they originate from, a novel type of recurrence relation called twisted Lenard-Magri chain. 
%It is natural to ask if these pairs of Poisson structures are compatible, also in view of exploring the possibility of applying the Dubrovin-Zhang formalism for obtaining integrable deformations of dispersionless systems like the $\epsilon$-system. 
%
%Exploring an example in dimension $n=3$ arising from the $\epsilon$-system we show indirectly that $P_1$ and $P_2$ are local Poisson tensors that, in general, are {\em not compatible}. %In spite of this, they still appear in a recurrence relation. This also means that the corresponding pencil of flat connections $\nabla^{\lambda}:=\nabla_1-\lambda \nabla_2$ is not a flat pencil in the sense of Dubrovin (cita). 
%\newline
%\newline
The paper is organized as follows. In Section 2 we introduce $1$-forms, evolutionary vector fields and operations on them, essentially following \cite{DZ}, to fix notation and for the sake of being self-contained. In Section 3 we introduce the Poisson bracket on $1$-forms and we derive its expression in flat and general coordinates. In Section 4 we prove the main properties of this bracket. 

In Section 5 we recall the concept of $F$-manifold with compatible flat connection and we show how the Poisson bracket just introduced fits in the description of the Hamiltonian structure of the corresponding principal hierarchy. In Section 6 we continue the exploration of the Hamiltonian structure of the principal hierarchy, in particular focusing on the role played by the invariance of the metric $g$ with respect to product $\circ$. 

In Section 7 we work out an example in dimension $3$.

\section{Differential forms on the formal loop space}

%Let $M$ be a $n$-dimensional smooth manifold and consider the space of loops ${\cal L}(M) = C^{\infty}(S^1, M)$ as the phase space of our evolutionary equations. 
In this section, we are going to recall the construction of differential forms on the formal loop space ${\cal L}(\mathbb{R}^n):=C^{\infty}(S^1, \mathbb{R}^n)$, together with some important operations, following essentially \cite{DZ}. %Also, since we are limited in scope because we are focusing on dispersionless PDEs of hydrodynamic type, we will give definitions that are more special compared to the general set-up in \cite{DZ}. 
 From now on, every time two indices are repeated in a formula, they are summed over a suitable range, which is usually clear from the context. This applies also to the indices that denote derivatives; for instance in $\alpha^{(t)}_i\delta u^i_{(t)}$ sum over $i=1, \dots, n$ and sum over $t=0,1,\dots$ is intended.

Let $U\subset \mathbb{R}^n$ be an open subset of $\mathbb{R}^n$ with coordinates $u^1, \dots, u^n$.
Denote ${\cal A} = {\cal A}(U)$ the space of polynomials in the independent
variables $u^{i}_{(s)}$, $i=1, \dots, n$, $s=1, 2, \dots$, where $u^i_{(s)}$ has to be thought as the $s$-th derivative of $u^i$ with respect to the angular coordinate $x$. In the sequel where $u^i_{(0)}$ will appear, it will be identified with $u^i$. 
An element $f\in {\cal A}$ can be described as
\beq\label{dif.pol}
f(x; u; u_x, u_{xx}, \dots):=\sum_{m\geq 0} f_{i_1 s_1; \dots ;i_m s_m} (x;u) u^{i_1}_{(s_1)} \dots u^{i_m}_{(s_m)}
\eeq
where the coefficients $f_{i_1 s_1; \dots ; i_ms_m}(x;u)$ are smooth functions
on $S^1\times M$. These elements are usually called  differential
polynomials; observe that they are not required to be polynomial with respect to $(u^1, \dots, u^n)$ in general. In our case, the coefficients $f_{i_1 s_1; \dots ; i_ms_m}$ will not depend explicitly on the  $x$-coordinate. 
We recall the definition of the total derivative with respect to $x$ acting on $f\in {\cal A}$:
\beq\label{totalderivative}\p_x f=\frac{\p f}{\p x} +\frac {\p f}{\p u^i} u^{i}_x + \dots+
\frac{\p f}{\p u^i_{(s)} }u^{i}_{(s+1)}+\dots,\eeq
and analogously we denote with $f_{(s)}$ the expression $\p^s_x f$, $s=1, 2, \dots.$

We start defining the $0$-forms on ${\cal L}(\mathbb{R}^n)$, namely functions, that will be represented in terms of functionals. 

Consider the space ${\cal A}_{0,0} :={\cal A}/ \mathbb{R},$ obtained by identifying two differential polynomials that differ by a constants, and define ${\cal A}_{0,1}:= {\cal A}_{0,0}\, dx$. The differential operator $d: {\cal A}_{0,0} \to {\cal A}_{0,1}$, $f\mapsto df:=(\p_x f)\, dx$ allows us to consider the quotient vector space
\beq\label{functionals}
\Lambda_0 :={\cal A}_{0,1}/\mathrm{Im}(d),
\eeq
whose elements can be written as integrals
over the circle $S^1$
\beq\label{functional}
\bar I_{f}[u]:=\int_{S^1} f(x; u; u_x, u_{xx}, \dots) dx, 
\eeq
since if $f(x; u; u_x, u_{xx}, \dots) dx=(\p_x g)\, dx$ for some $g\in {\cal A}$, then the corresponding $I_f$ is the zero functional. 
Also let us remark that due to the definition of ${\cal A}_{0,0}$, functionals $I_{f}$ whose density $f$ is a constant are identified with the zero functional. Since we have mod out constants in the definition of ${\cal A}_{0,0}$ we obtain a short exact sequence 
$$0\to {\cal A}_{0,0} \stackrel{d}\to {\cal A}_{0,1} \stackrel{\pi}\to
\Lambda_0 \to 0,$$
where $\pi$ is the projection map. 

% The full ring of functions $\Omega^0({\cal L}(M))$ on ${\cal L}(M)$ ($0$-forms) coincides with the tensor algebra of $\Lambda_0$, 
%$$\Omega^0({\cal L}(M)):=\mathbb{R}\oplus \Lambda_0\oplus (\Lambda_0)^{\otimes^2}\oplus\dots.$$
% Constructing $k$ copies $x_1, \dots, x_k$ of the independent variable $x$ and consequently $k$ copies of all the other variables involved $u^i(x_1), \dots, u^i(x_k), u^i_x(x_1), %\dots, u^i_x(x_k),\dots$ elements of $\Lambda_0^{\otimes \, k}$ can be written as multiple integrals involving differential polynomials as follows
%\beq\label{k-functional}
%\int_{\mathbb{T}^k} f(x_1, \dots, x_k; u(x_1), \dots, u(x_k); u_x(x_1), \dots, u_x(x_k)
%\dots) dx_1 \dots dx_k.
%\eeq

In order to define differential forms, we proceed as follows. First we introduce the Grassmann algebra $A_{\bullet, \bullet}$ over $\mathbb{R}$ generated by $\delta u^i_{(s)}$, where $i=1, \dots, n$, $s=0, 1, \dots$, ($\delta u^i_{(0)}$ is identified with $\delta u^i$) and by $dx$. This turns out to be bi-graded $A_{\bullet, \bullet}=\bigoplus_{k\geq 0,\, l=0,1}A_{k,l}$, where a monomial element $a_{k,0}\in A_{k,0}$ is of the form $a\delta u^{i_1}_{(s_1)}\wedge \dots \wedge\delta u^{i_k}_{(s_k)}$, $a\in \mathbb{R}$, while a monomial element $a_{k,1}\in A_{k,1}$ is written as $a\delta u^{i_1}_{(s_1)}\wedge \dots \wedge\delta u^{i_k}_{(s_k)}\wedge dx$, $a\in \mathbb{R}$. 
An arbitrary element $a_{k,0}\in A_{k,0}$ will be a {\em finite sum} of the form $a_{i_1, s_1, \dots, i_k, s_k} \delta u^{i_1}_{(s_1)}\wedge \dots \wedge \delta u^{i_k}_{(s_k)}$, where sum over repeated indices is assumed and where the coefficients $a_{i_1, s_1, \dots, i_k, s_k}$ are constants that are skew-symmetric for all exchanges of the pairs $(i_p, s_p)$ with $(i_q, s_q)$.

For $k\geq 1$, $l=0,1$ we define ${\cal A}_{k,l}:={\cal A}\otimes_{\mathbb{R}}A_{k,l}$, while ${\cal A}_{0,l}$ have been defined above. Notice that in the definition of ${\cal A}_{k,l}$, for $k\geq 1$, the ring ${\cal A}$ appears directly, without having to take quotient with respect to constants. 
%We define \textcolor{red}{attenzione, qui e' da correggere, per $k>0$ allora non e'  ${\cal A}_{0,0}$ ma ${\cal A}$!} ${\cal A}_{\bullet, \bullet}:={\cal A}_{0,0}\otimes_{\mathbb{R}}A_{\bullet, \bullet}$, and in particular $\mathcal{A}_{k,l}:=\mathcal{A}_{0,0}\otimes_{\mathbb{R}} A_{k,l}$. 

Therefore also $\mathcal{A}_{\bullet, \bullet}$ is bi-graded, $\mathcal{A}_{\bullet, \bullet}=\bigoplus_{k\geq 0, \, l=0,1}\mathcal{A}_{k,l}$. 
Concretely, a homogeneous element $\omega$ of degree $k\geq 1$,  $\omega\in{\cal A}_{k,0}$ can be written as a finite sum
\beq\label{omega}
\omega= \frac{1}{k!} \omega_{i_1s_1; \dots; i_ks_k} \delta u^{i_1}_{(s_1)}\wedge
\dots \wedge \delta u^{i_k}_{(s_k)},
\eeq 
where sum over all repeated indices is intended and where the coefficients $\omega_{i_1s_1; \dots; i_ks_k}$ in  $\mathcal{A}$ are skew-symmetric for all exchanges of pairs $(i_p, s_p)$ with $(i_q, s_q)$. Analogously, a homogeneous element $\omega$ of bi-degree $(k,1)$ in $\mathcal{A}_{k,1}$ can be written as a finite sum
\beq\label{omega2}
\omega= \frac{1}{k!} \omega_{i_1s_1; \dots; i_ks_k} \delta u^{i_1}_{(s_1)}\wedge
\dots \wedge \delta u^{i_k}_{(s_k)}\wedge dx, 
\eeq
with the same conventions over repeated indices and skew-symmetry. 
So far we have not introduced the exterior differential ${\cal D}$ for the Grassmann algebra ${\cal A}_{\bullet, \bullet}$. We are going to construct it as a sum of two differentials. 
First we extend the differential $d: \mathcal{A}_{0,0}\to \mathcal{A}_{0,1}$ to a differential $d:\mathcal{A}_{k,0}\to \mathcal{A}_{k,1}$ defined 
as $d(\omega):=dx\wedge(\p_x \omega),$  where the derivation $\p_x$ satisfies the Leibniz rule
$$
\p_x (\omega_1 \wedge \omega_2) = \p_x \omega_1 \wedge \omega_2 + \omega_1
\wedge \p_x \omega_2
$$
 and the action of $\p_x$ is given by (\ref{totalderivative}) on the coefficients of the differential form and
by $\p_x \delta u^{i}_{(s)} = \delta u^{i}_{(s+1)}$ on the $\delta$-differentials.  It is clear that $d^2=0$ since $dx\wedge dx=0$. 
Similar to what happens for ${\cal A}_{0,0}$ and ${\cal A}_{0,1}$, for any $k\geq 1$ we have short exact sequences
$$0\to {\cal A}_{k,0} \stackrel{d}\to {\cal A}_{k,1} \stackrel{\pi}\to
\Lambda_k \to 0,$$
where $\pi$ is the projection onto the quotient $\Lambda_k:= {\cal A}_{k,1}/\mathrm{Im}(d)$. 
We call the elements of $\Lambda_k$ (local) $k$-forms on the loop space ${\cal L}(\mathbb{R}^n)$. 
We represent $k$-forms as integrals $$
\int dx\wedge\omega, \quad \omega \in {\cal A}_{k,0},
$$
and if $dx\wedge\omega=d(\alpha)$, for some $\alpha\in {\cal A}_{k,0}$, then the corresponding $k$-form in $\Lambda_k$ is identically zero.
As an example, notice that since any one-form can be written as 
$$\int dx\wedge\omega_{i, s}\,\delta u^i_{(s)}=\int dx\wedge\omega_{i, s}\, \p^s_x(\delta u^i),$$ 
using integration by parts we get
$$\int dx\wedge(-1)^s \p^s_x(\omega_{i,s})\delta u^i=\int dx\wedge \alpha_i \,\delta u^i ,$$
where $\alpha_i:=(-1)^s \p^s_x(\omega_{i,s}).$

%%We define $\mathcal{A}_{*, 0}:=\oplus_{k=0}{\cal A}_{k,0}$ the Grassmann algebra generated over 
%$\mathcal{A}_{0,0}$ with generators $\delta u^i_{(s)}$, where $i=1, \dots, n$, $s=0, 1, \dots$, and $\delta u^i_{(0)}$ is identified with $\delta u^i$. 

We define now another differential $\delta: {\cal A}_{k,0}\to {\cal A}_{k+1,0}$. This will be extended to a differential $\delta : \Lambda_k \to \Lambda_{k+1}$ that will act as the exterior differential on forms on the loop space, so it will increase the degree of a form. 
This is the operator $\delta: {\cal A}_{k, 0}\to {\cal A}_{k+1, 0}$ which is defined on monomials and then extended by linearity as follows
\beq\label{deltaoperator}\delta\left(f\delta u^{i_1}_{(s_1)}\wedge\dots\wedge \delta u^{i_k}_{(s_k)}\right)=\f{\p f}{\p u^j_{(t)}}\delta u^j_{(t)}\wedge \delta u^{i_1}_{(s_1)}\wedge\dots\wedge \delta u^{i_k}_{(s_k)},\eeq
where again sum over repeated indices is intended. 
Notice that, as in the usual case, $\delta^2=0$ identically, as it can be checked immediately on monomials. 
The map $\delta$ on ${\cal A}_{k,1}$ is defined by the same formula with the requirement that $\delta dx =0$. 
\begin{lemma}\label{anticommutativity}
The operators $d$ and $\delta$ satisfies $\delta \circ d=-d \circ \delta$. 
\end{lemma}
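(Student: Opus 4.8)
The plan is to verify the identity directly on monomials. Since $d$ and $\delta$ are $\mathbb{R}$-linear and each is governed by a (graded) Leibniz rule on the generators of $\mathcal{A}_{\bullet,\bullet}$, it suffices to check $\delta\circ d=-d\circ\delta$ on a homogeneous element
\[
\omega=f\,\delta u^{i_1}_{(s_1)}\wedge\dots\wedge\delta u^{i_k}_{(s_k)}\in\mathcal{A}_{k,0},
\]
and I would abbreviate $\Omega:=\delta u^{i_1}_{(s_1)}\wedge\dots\wedge\delta u^{i_k}_{(s_k)}$. The two operators act in genuinely different ways: $d(\cdot)=dx\wedge\partial_x(\cdot)$ is the total $x$-derivative \eqref{totalderivative} followed by wedging with $dx$, whereas $\delta$ (on $\mathcal{A}_{k,0}$ or $\mathcal{A}_{k,1}$) only differentiates the coefficient function and prepends a $\delta u^j_{(t)}$ as in \eqref{deltaoperator}, since $\delta\,dx=0$. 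The bookkeeping device I would use throughout is that $dx$ is odd and anticommutes with every $\delta u^i_{(s)}$, so $dx\wedge\Theta=(-1)^{\deg\Theta}\,\Theta\wedge dx$ for $\Theta$ built from the $\delta u$'s.

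Next I would compute the two composites. Applying $d$ then $\delta$, using $\partial_x\Omega=\sum_a \delta u^{i_1}_{(s_1)}\wedge\dots\wedge\delta u^{i_a}_{(s_a+1)}\wedge\dots\wedge\delta u^{i_k}_{(s_k)}$ together with $\delta\,dx=0$, yields
\[
\delta d\omega=(-1)^k\Big(\tfrac{\partial(\partial_x f)}{\partial u^j_{(t)}}\,\delta u^j_{(t)}\wedge\Omega+\tfrac{\partial f}{\partial u^j_{(t)}}\,\delta u^j_{(t)}\wedge\partial_x\Omega\Big)\wedge dx .
\]
Applying $\delta$ then $d$, and using $\partial_x\,\delta u^j_{(t)}=\delta u^j_{(t+1)}$ in the Leibniz expansion of $\partial_x(\delta u^j_{(t)}\wedge\Omega)$, yields
\[
d\delta\omega=(-1)^{k+1}\Big(\partial_x\!\big(\tfrac{\partial f}{\partial u^j_{(t)}}\big)\delta u^j_{(t)}\wedge\Omega+\tfrac{\partial f}{\partial u^j_{(t)}}\,\delta u^j_{(t+1)}\wedge\Omega+\tfrac{\partial f}{\partial u^j_{(t)}}\,\delta u^j_{(t)}\wedge\partial_x\Omega\Big)\wedge dx .
\]
The two terms containing $\partial_x\Omega$ are literally identical, and thanks to the opposite signs $(-1)^k$ and $(-1)^{k+1}$ they cancel in $\delta d\omega+d\delta\omega$. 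Everything therefore reduces to the single scalar identity (summed over $j,t$ and wedged with $\Omega$)
\[
\tfrac{\partial(\partial_x f)}{\partial u^j_{(t)}}\,\delta u^j_{(t)}=\partial_x\!\big(\tfrac{\partial f}{\partial u^j_{(t)}}\big)\delta u^j_{(t)}+\tfrac{\partial f}{\partial u^j_{(t)}}\,\delta u^j_{(t+1)} .
\]

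The genuine content, and the step I would isolate, is the commutator relation between $\partial_x$ and the partials in the jet variables,
\[
\Big[\tfrac{\partial}{\partial u^j_{(t)}},\,\partial_x\Big]=\tfrac{\partial}{\partial u^j_{(t-1)}}\qquad(t\ge 1),
\]
with the right-hand side read as $0$ for $t=0$; this is immediate from $\partial_x=\sum_{i,s}u^i_{(s+1)}\tfrac{\partial}{\partial u^i_{(s)}}$ and $\tfrac{\partial u^i_{(s+1)}}{\partial u^j_{(t)}}=\delta^i_j\,\delta_{s+1,t}$ (and is unaffected by any explicit $\partial/\partial x$ term, which commutes with $\partial/\partial u^j_{(t)}$). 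Substituting $\tfrac{\partial(\partial_x f)}{\partial u^j_{(t)}}=\partial_x(\tfrac{\partial f}{\partial u^j_{(t)}})+\tfrac{\partial f}{\partial u^j_{(t-1)}}$ into the displayed identity and reindexing $t\mapsto t+1$ in the sum arising from the extra term makes it match exactly the term $\tfrac{\partial f}{\partial u^j_{(t)}}\delta u^j_{(t+1)}$ produced by $\partial_x\delta u^j_{(t)}=\delta u^j_{(t+1)}$, so the identity holds and the lemma follows. The main obstacle is not conceptual but a matter of disciplined bookkeeping: correctly tracking the $(-1)^k$ factors from passing $dx$ through a $k$-form, and performing the shift $t\mapsto t+1$ so that the ``anomalous'' contribution from differentiating the $\delta$-differentials cancels the commutator term rather than doubling it.
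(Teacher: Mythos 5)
Your proof is correct and follows essentially the same route as the paper's: reduce to a monomial $f\,\delta u^{i_1}_{(s_1)}\wedge\dots\wedge\delta u^{i_k}_{(s_k)}$, expand both composites, and let the commutator identity $\frac{\partial}{\partial u^j_{(t)}}\circ\partial_x=\partial_x\circ\frac{\partial}{\partial u^j_{(t)}}+\frac{\partial}{\partial u^j_{(t-1)}}$ together with the anticommutation of $dx$ with the $\delta u^j_{(t)}$ produce the cancellation. The only cosmetic difference is that you collect the sign into an explicit $(-1)^k$ by moving $dx$ to the right, whereas the paper leaves $dx$ in place and invokes $dx\wedge\delta u^j_{(t)}=-\delta u^j_{(t)}\wedge dx$ at the end.
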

\proof
It is easy to check that \beq\label{auxiliary}\f{\p }{\p u^i_{(s)}}\circ \p_x=\p_x \circ \f{\p }{\p u^i_{(s)}}+\f{\p }{\p u^i_{(s-1)}}, \; s\geq 1; \quad \f{\p }{\p u^i}\circ \p_x=\p_x \circ \f{\p }{\p u^i}.\eeq
If we consider a monomial 
$$\omega:=f\delta u^{i_1}_{(s_1)}\wedge \dots \wedge\delta u^{i_k}_{(s_k)} \in {\cal A}_{k,0},$$
then $$d\omega=(\p_x f)\,dx\wedge \delta u^{i_1}_{(s_1)}\wedge \dots \wedge\delta u^{i_k}_{(s_k)}+f\,dx\wedge \delta u^{i_1}_{(s_1+1)}\wedge \dots \wedge\delta u^{i_k}_{(s_k)}$$
$$+\dots+f\,dx\wedge \delta u^{i_1}_{(s_1)}\wedge \dots\wedge \delta u^{i_k}_{(s_k+1)},$$
so that 
$$\delta d\omega=\f{\p}{\p u^j_{(t)}}(\p_x f)\delta u^j_{(t)}\wedge dx \wedge \delta u^{i_1}_{(s_1)}\wedge \dots \wedge\delta u^{i_k}_{(s_k)}$$
$$+\f{\p f}{\p u^j_{(t)}}\delta u^j_{(t)}\wedge dx \wedge \delta u^{i_1}_{(s_1+1)}\wedge \dots \wedge\delta u^{i_k}_{(s_k)}+\dots$$
$$+\f{\p f}{\p u^j_{(t)}}\delta u^j_{(t)}\wedge dx \wedge \delta u^{i_1}_{(s_1)}\wedge \dots \wedge \delta u^{i_k}_{(s_k+1)}.$$
On the other hand $$d\delta \omega=\p_x\left( \f{\p f}{\p u^j_{(t)}}\right)dx\wedge\delta u^j_{(t)}\wedge \delta u^{i_1}_{(s_1)}\wedge \dots \wedge\delta u^{i_k}_{(s_k)}$$
$$+ \f{\p f}{\p u^j_{(t)}}dx\wedge\delta u^j_{(t+1)}\wedge \delta u^{i_1}_{(s_1)}\wedge \dots \wedge\delta u^{i_k}_{(s_k)}+\f{\p f}{\p u^j_{(t)}}dx\wedge\delta u^j_{(t)}\wedge \delta u^{i_1}_{(s_1+1)}\wedge \dots \wedge\delta u^{i_k}_{(s_k)}$$
$$+\dots+\f{\p f}{\p u^j_{(t)}}dx\wedge\delta u^j_{(t)}\wedge \delta u^{i_1}_{(s_1)}\wedge \dots\wedge \delta u^{i_k}_{(s_k+1)}.$$
Using relations \eqref{auxiliary} and $dx\wedge \delta u^j_{(t)}=-\delta u^j_{(t)}\wedge dx$ we get the result. 
\endproof

Lemma \ref{anticommutativity} allows to define $\delta : \Lambda_k \to \Lambda_{k+1}$. Indeed, let $\alpha\in \Lambda_k$ and let $\tilde \alpha$ and $\alpha'$ two of its representatives in ${\cal A}_{k,1}$. Since $\tilde\alpha-\alpha'=d\beta$, for some $\beta\in {\cal A}_{k,0}$, we have $\delta(\tilde\alpha)-\delta(\alpha')=\delta\circ d(\beta)=-d\circ\delta(\beta)$, which shows that $\delta(\tilde\alpha), \delta(\alpha')$ in ${\cal A}_{k+1,1}$ define the same form in $\Lambda_{k+1}$, call it $\delta(\alpha).$ 

The operator $\delta$ satisfies $\delta\circ \delta=0$ as it is immediate to see. 
The exterior differential of the differential Grassmann algebra ${\cal A}_{\bullet, \bullet}$ is the differential operator ${\cal D}:=\delta+d$, which indeed satisfies ${\cal D}\circ {\cal D}=0$. 
Moreover, each of the short exact sequences

$$
0\to {\cal A}_{k,0} \stackrel{d}\to {\cal A}_{k,1} \stackrel{\pi}\to
\Lambda_k \to 0
$$
 is included as a row in a bicomplex, called variational bicomplex, in which both arrows and columns
 are exact. The rows are related via the differential $\delta$. For more information see \cite{DT} and \cite{DZ}. 
%$$
%\begin{array}{ccccccccc}
% & & \Big\uparrow\vcenter{%
% \rlap{$\scriptstyle{\delta}$}} & & \Big\uparrow\vcenter{%
% \rlap{$\scriptstyle{\delta}$}} & & \Big\uparrow\vcenter{%
% \rlap{$\scriptstyle{\delta}$}} & & \\
%0 & \to & {\cal A}_{2,0} & \stackrel{d}\rightarrow & {\cal A}_{2,1} &
%\stackrel{\pi}\rightarrow & \Lambda_2 & \to & 0 \\
% & & \Big\uparrow\vcenter{%
% \rlap{$\scriptstyle{\delta}$}} & & \Big\uparrow\vcenter{%
% \rlap{$\scriptstyle{\delta}$}} & & \Big\uparrow\vcenter{%
% \rlap{$\scriptstyle{\delta}$}} & & \\
% 0 & \to & {\cal A}_{1,0} & \stackrel{d}\rightarrow & {\cal A}_{1,1} &
%\stackrel{\pi}\rightarrow & \Lambda_1 & \to & 0 \\
% & & \Big\uparrow\vcenter{%
% \rlap{$\scriptstyle{\delta}$}} & & \Big\uparrow\vcenter{%
% \rlap{$\scriptstyle{\delta}$}} & & \Big\uparrow\vcenter{%
% \rlap{$\scriptstyle{\delta}$}} & & \\
% 0 & \to & {\cal A}_{0,0} & \stackrel{d}\rightarrow & {\cal A}_{0,1} &
%\stackrel{\pi}\rightarrow & \Lambda_0 & \to & 0 \\
% & & \Big\uparrow\vcenter{%
% \rlap{$\scriptstyle{\delta}$}} & & \Big\uparrow\vcenter{%
% \rlap{$\scriptstyle{\delta}$}} & & \Big\uparrow\vcenter{%
% \rlap{$\scriptstyle{\delta}$}} & & \\
% & & 0 & & 0 & & 0 & &
% \end{array}
% $$
%
%Let us recall the following result:
%\begin{theorem}[\cite{DT}] Both arrows and columns of the
%variational bicomplex are exact on the loop space ${\cal L}(\mathbb{R}^n)$.
%\end{theorem}

 On $\Lambda_0$ the differential $\delta$ acts as follows
 $$\delta \int f\, dx=\int dx\wedge \f{\p f}{\p u^j_{(t)}}\delta u^j_{(t)},$$
 (sum over $t$ and $j$)
 and due to the fact that $\delta u^j_{(t)}=\p_x^t \delta u^j$, integrating by parts one obtains
\beq\label{euler-lagrange}
\delta \int f\, dx =\int dx \wedge\left((-1)^t \p_x^t \frac {\pal f} {\pal
u^{j}_{(t)}} \right)\delta u^j, 
\eeq
(sum over $t$ and $j$). We will use the notation
\beq\label{var.der}
\frac{\delta \bar f}{\delta u^i(x)} := (-1)^t \p_x^t \frac {\p f} {\p
u^{i}_{(t)}} 
\eeq
(sum over $t$) for the components of the 1-form, $\delta\bar f = \delta\int f\, dx$. This is the expression of variational derivative that appears in formula \eqref{Hform} in the Introduction. 
Sometimes, with abuse of notation we will denote with $\f{\delta f}{\delta u^i(x)}$  the expression 
$(-1)^t \p_x^t \frac {\p f} {\p
u^{i}_{(t)}}$ for a density $f$.

Let us also recall the following well-known result: a necessary and sufficient condition for 
$$
\frac{\delta\bar f}{\delta u^i(x)}=0, ~~i=1, \dots, n.
$$
is the existence of a differential polynomial $g=g(x;u;u_x;\dots)$ such that
$f=\pal_x g$.

\subsection{The space $\Lambda^1$ of vector fields on the formal loop space}
We denote with $\Lambda^1$ the space of vector fields on the formal loop space.
A vector field $\xi$ on ${\cal L}(\mathbb{R}^n)$ is a formal infinite sum of the form 
\beq\label{vectorfield}
\xi = \xi^0 \f{\p}{\p x} + 
\xi^{i,k} 
\f{\p}{\p u^{i}_{(k)}}, \quad \xi^0, \,\xi^{i,k} \in {\cal A},
\eeq
(sum over $i=1,\dots, n$, $k\geq 0$)
where $
\f{\p }{\p u^{i}_{(0)}}$ is $\f{\p}{\p u^i}.$
The derivative of a functional $\bar f=\int f(x; u; u_x, \dots) dx \in \Lambda_0$
along $\xi$ is given by
\beq\label{liederivativefunctional}
\xi \,\bar f := \int \left( \xi^0 \f{\p f}{\p x} +  \xi^{i,k} \f{\p f}{\p u^{i}_{(k)}}\right) dx,
\eeq
(sum over $i$ and $k$),
which is again an element in $\Lambda_0$. 
The Lie bracket of two vector fields $\xi$ and $\eta$ with components as given in \eqref{vectorfield} is a vector field  defined by
\eqa
&&[\xi,\eta] =(\xi^0 \eta^0_x - \eta^0 \xi^0_x
+\xi^{j,t}\f{\p \eta^0}{\p u^{j}_{(t)}}-
\eta^{j,t}\f{\p\xi^0}{\p u^{j}_{(t)}}) \f {\p}{\p x}
\nn\\
&&\quad
+\left( \xi^0 \f{\p \eta^{i,s}}{\p x} 
- \eta^0 \f{\p\xi^{i,s}}{\p x}
+ \xi^{j,t}\f{\p \eta^{i,s}} {\p u^{j}_{(t)}}
-\eta^{j,t} \f{\p\xi^{i,s}}{\pal u^{j}_{(t)}}\right) \f{\p}{\p
u^{i}_{(s)}},\label{commutator}
\eeqa
sum over $(i,j=1, \dots, n)$ and $s, t\geq 0.$

In the sequel we are going to deal with restricted classes of vector fields. First we have the following
\begin{definition}[\cite{O}, page 291]\label{evolutionary}
A vector field $$\xi=\xi^0 \f{\p}{\p x} + 
\xi^{i,k} 
\f{\p}{\p u^{i}_{(k)}}$$
is called {\em evolutionary} if $\xi^0=0$,  $\xi^{i,k}=\p_x^k(\xi^i)$ for $\xi^i \in {\cal A}$ and the differential functions $\xi^i$ do not depend explicitly on $x$. 
\end{definition}

%begin{definition}
%An evolutionary vector field $\xi$ is called {\em translation invariant} if the coefficients $\xi^i$, $i=1, \dots, n$, do not depend explicitly on $x$, so $\f{\p \xi^i}{\p x}=0$, $i=1, \dots, n$. 
%\end{definition}
An evolutionary vector field $\xi$ is parametrized by $n$ functions $\xi^1, \dots, \xi^n$  and can therefore be written as:
$$\xi=\p_x^k(\xi^i)\f{\p }{\p u^i_{(k)}}.$$
  In the case of an evolutionary vector field, the corresponding system of evolutionary PDEs is described via the system: 
 \beq\label{evolutionarypde}u^i_t=\xi^i(u, u_x, u_{xx}, \dots), \; i=1, \dots, n, \; \xi^i\in {\cal A}.\eeq
 
\subsection{Operations on forms and vector fields}
Let $\xi$ be an evolutionary vector field and $\omega\in {\cal A}_{k,0}$ as in \eqref{omega}, $k\geq 1$.
\begin{definition}
The contraction of $\omega\in {\cal A}_{k,0}$ with $\xi$ evolutionary vector field is given by the natural extension of the usual formula. Assuming the coefficients of the $k$-form 
$$\omega=\f{1}{k!}\omega_{i_1 s_1;\dots; i_{k}s_{k}}\delta u^{i_1}_{(s_1)}\wedge \dots \wedge \delta u^{i_{k}}_{(s_{k})},$$
antisymmetric w.r.t permutations of pairs $i_p,s_p\leftrightarrow i_q,s_q$, one obtains the following expression:
$$i_{\xi}\omega=\f{1}{(k-1)!}\p_x^k(\xi^j)\omega_{jk;i_1 s_1;\dots; i_{k-1}s_{k-1}}\delta u^{i_1}_{(s_1)}\wedge \dots \wedge \delta u^{i_{k-1}}_{(s_{k-1})},$$ 
An analogous formula holds for $\omega\in {\cal A}_{k,1}$. 
\end{definition}
It turns out that for $\xi$ evolutionary $i_{\xi}\circ d+d\circ i_{\xi}=0$ identically, so that contraction with respect to $\xi$ is a well-defined operation $i_{\xi}: \Lambda_k\to \Lambda_{k-1}$, see \cite{DZ}. 

Consider the functional $\bar f[u]:=\int_{S^1} f \; dx\; \in \Lambda_0$. Then the $1$ form $\omega:=\delta \bar f$ is given by 
$$\omega=\int dx\wedge \f{\p f}{\p u^i_{(t)}}\delta u^i_{(t)}=\int dx\wedge (-1)^t \p_x^t\left(\f{\p f}{\p u^i_{(t)}} \right)\delta u^{i}=\int dx\wedge \f{\delta f}{\delta u^i(x)}\delta u^i\;  \in \Lambda_1.$$
If $\xi$ is an evolutionary vector field, with components $\xi^1, \dots, \xi^n$, then 
$$i_{\xi}\omega=\int dx \;\p_x^t(\xi^i) \f{\p f}{\p u^i_{(t)}}=\int dx\; \xi^i(-1)^t \p_x^t\left( \f{\p f}{\p u^i_{(t)}} \right)=\int dx\; \xi^i\f{\delta f}{\delta u^i(x)}\; \in \Lambda_0.$$
Notice that $i_{\xi} \omega$ coincides (as it should be) with the Lie derivative of the functional $\bar f$ with respect to the vector field $\xi$ as given in the general formula \eqref{liederivativefunctional} (use the fact that $\xi$ is evolutionary and integrate by parts). 

%In the sequel we will need to compute the Lie derivative of a form with respect to a vector field. 
%\begin{definition}
%Let $\xi$ be an evolutionary vector field and $\omega$ a $k$-form. Then the Lie derivative of $\omega$ along $\xi$ is defined as the following $k$-form:
%\beq \label{liederivativeform}
%L_{\xi} \omega:=i_{\xi }\delta (\omega)+\delta(i_{\xi}\omega).
%\eeq
%\end{definition}
%Notice that the Lie derivative has been defined through Cartan's formula. 
%From now on we assume that the coefficients of the forms we are going to use depend only on the functions $u^i$ and not on their derivatives. This is because in this paper we %limit ourselves to dispersionless hierarchies. We perform some computations that will be used to define a Poisson bracket on $1$-form in the next section.

%Let $\omega$ be a $1$-form written as 
%$$\omega=\int dx\wedge f_i \delta u^i,$$
%where $f_i$ depends only on $u^1, \dots, u^n$ and not on their derivatives. 
%Then 
%$$\delta \omega=\int dx\wedge \f{\p f_i}{\p u^j} \delta u^j \wedge\delta u^i=\int dx\wedge \f{1}{2}\left(\f{\p f_i}{\p u^j}-\f{\p f_j}{\p u^i}\right)\delta u^j \wedge\delta u^i.$$ 
%Let $\xi$ be an evolutionary vector field, then 
%$$i_{\xi}\delta(\omega)=\int dx\wedge \xi^i \;\f{1}{2}\left(\f{\p f_i}{\p u^j}-\f{\p f_j}{\p u^i}\right)\delta u^j,$$
%since $\f{1}{2}\left(\f{\p f_i}{\p u^j}-\f{\p f_j}{\p u^i}\right)$ does not depend on the derivatives $u^i_{(s)}$, $s\geq 1$. 

Let us introduce also the Poisson structure as a Poisson bracket on functionals. 
Following \cite{DZ} we represent a Poisson structure
 in the form
\beq\label{pb}
\{ u^i(x), u^j(y)\} 
=\sum_s A^{ij}_s(u(x); u_x(x), u_{xx}(x),
\dots)\delta^{(s)}(x-y), 
\eeq
where $A^{ij}_s$ satisfies suitable conditions. In particular, in this paper we will focus on Poisson structures of the form  
$A^{ij}_1=g^{ij}(u^1, \dots, u^n)$ and $A^{ij}_0=\Gamma^{ij}_k u^k_x$, $A^{ij}_s=0$ $s\geq 2$ and these define a Poisson structure iff $g^{ij}$ is a non-degenerate contravariant flat metric (not necessarily positive definite) and $\Gamma^{ij}_k=-g^{il}\Gamma^j_{lk}$, where $\Gamma^j_{lk}$ are the Christoffel symbols of the Levi-Civita connection associated to the inverse metric $g_{ij}$ (see \cite{DN}). From now on this will be assumed. 

In the case we are dealing with, the Poisson bracket of two local functionals
$\bar f= \int f(x; u; u_x, \dots) dx$ and  $\bar h = \int h(x; u; u_x, \dots)
dx$  is then given by 
\eqa\label{Poisson1}
&&\{ \bar f, \bar h\}
= \iint dx dy \frac{\delta \bar f}{\delta u^i(x)}
\{ u^i(x), u^j(y)\} \frac{\delta \bar h}{\delta u^j(y)}
\nn\\
&&\quad
=\int dx \frac{\delta \bar f}{\delta u^i(x)} \left(g^{ij}(u^1, \dots, u^n)\p_x +\Gamma^{ij}_k u^k_x\right) \left( \frac{\delta \bar h}{\delta u^j(x)}\right)
\in \Lambda_0.
\eeqa
and thus it is again a local functional. 
From the dynamical point of view, the important property of the local Poisson brackets is that the Hamiltonian
systems
\beq\label{ham.pde}
u^i_t 
=\{ u^i(x), \bar H\} = \left(g^{ij}(u^1, \dots, u^n)\p_x +\Gamma^{ij}_k u^k_x\right)
\frac{\delta \bar H}{\delta u^j(x)}
\eeq
with Hamiltonians like
$$
\bar H= \int H(u; u_x, \dots) dx
$$
are evolutionary PDEs (\ref{evolutionarypde}). 

It will be important for the next section to interpret \eqref{Poisson1} in a slightly different way. Indeed we can view \eqref{Poisson1} as the pairing between the $1$-form $\delta \bar f$ and the vector field $P(\delta \bar g)$ obtained via the action of the Poisson structure $P$ on the $1$-form $\delta \bar h$. So we set
\beq\label{Poisson2}
\int dx \frac{\delta \bar f}{\delta u^i(x)} \left(g^{ij}(u^1, \dots, u^n)\p_x +\Gamma^{ij}_k u^k_x\right) \left( \frac{\delta \bar h}{\delta u^j(x)}\right)=<\delta \bar f, P \delta \bar h>,
\eeq
so that the evolutionary vector field $P\delta \bar g$ is given by 
\beq \label{vectorfield2}
P\delta \bar h=\p^s_x\left[\left(g^{ij}\p_x+\Gamma^{ij}_k u^k_x\right)(-1)^l\p_x^l\left(\f{\p h}{\p u^j_{(l)}} \right)\right]\f{\p }{\p u^i_{(s)}}.
\eeq
When $P\delta \bar h$ is paired with $\delta \bar f=\int dx\wedge \f{\delta \bar f}{\delta u^j(x)}\delta u^j$ one obtains exactly formula \eqref{Poisson2}.

The goal of the next section is to extend the formalism to include a Poisson brackets on $1$-form and to show that certain evolutionary equation can be still written in the form \eqref{ham.pde}, although {\em there is no Hamiltonian functional available}. 
\section{Poisson brackets on $1$-forms}
First we extend formula \eqref{Poisson2} and \eqref{vectorfield2} to deal with $1$-forms that are not closed.
Consider a $1$-forms $\alpha$  given by 
$$\alpha=\int_{S^1}dx\wedge\alpha^{(t)}_i\delta u^i_{(t)}.$$
Using integration by parts, we can always reduce it to {\em standard form} or {\em reduced form}, in the sense that only the differential $\delta u^i$ are involved. 
Indeed, one gets
$$\alpha=\int_{S^1}dx\wedge\alpha_i\delta u^i, $$
where $$\alpha_i=(-1)^t \p_x^t \f{\p }{\p u^i_{(t)} }\alpha_i^{(t)}.$$
We will often write a $1$-form $\alpha$ also skipping the integral sign, directly as $\alpha=\alpha_i\delta u^i$, but thinking that operations like integration by parts do not change $\alpha$. 
Moreover, when performing computation with the bracket we are going to define, we will always consider $1$-forms in standard form; this requirement is motivated especially by the way in which the Poisson structure acts on exact $1$-forms, namely on differentials of functionals. 

The evolutionary vector field $P\beta$ is given by 
\beq\label{vectorfield3}P\beta=\p^s_x\left[\left(g^{ij}\p_x+\Gamma^{ij}_k u^k_x\right)\beta_j\right]\f{\p }{\p u^i_{(s)}},\eeq
and it is parametrized by the $n$ functions \beq\label{vectorfield3components}(P\beta)^i:=\left(g^{ij}\p_x+\Gamma^{ij}_k u^k_x\right)\beta_j,\eeq 
which can be viewed as its components. 
Then the extension of \eqref{Poisson2} to general $1$-forms is given by 
\beq\label{Poisson3}
<\alpha, P\beta>=i_{P\beta}\alpha=\int \alpha_i\left(g^{ij}\p_x+\Gamma^{ij}_k u^k_x\right)\beta_j\, dx\; \in \Lambda_0.
\eeq

Before defining the Poisson bracket, we need to introduce the Lie derivative of a $1$-form $\alpha$ along a vector field $X$. 
Suppose $\alpha=\alpha^{(t)}_i \delta u^i_{(t)}$ and $X=X^{k,s}\f{\p }{\p u^k_{(s)}}$. Any element of the loop space $u\in C^{\infty}(S^1, M)$, can be seen as a section $\sigma_{u}: S^1 \to S^1 \times M$ of the trivial bundle $\pi: S^1 \times M \to S^1$. 
We can think of $u^i_{(t)}$, for $i=1,\dots, n$, $t=0, 1, \dots$ as coordinates on the infinite jet, describing the infinite prolongation of the section $\sigma_u$. 
In this framework, we can thus define the Lie derivative via the usual formula in coordinates as
\beq\label{Liederivative1}
\mathrm{Lie}_X\alpha=(\mathrm{Lie}_X\alpha)_i^{(t)}\delta u^i_{(t)}=\left( X^{k,s}\f{\p }{\p u^k_{(s)}}\alpha_i^{(t)}+\alpha_k^{(s)}\f{\p }{\p u^i_{(t)}}X^{k,s}\right)\delta u^i_{(t)}.
\eeq
In formula \eqref{Liederivative1} we should have written the expression for the Lie derivative more correctly as 
$$\mathrm{Lie}_X\alpha=\int dx\wedge (\mathrm{Lie}_X\alpha)_i^{(t)}\delta u^i_{(t)}.$$
In the sequel, whenever the integral sign is omitted, it is intended the the components of a form, including $\delta u^i_{(t)}$, are defined up to a total derivative.

We specialize formula \eqref{Liederivative1} to the case in which $X=P\beta$, assuming that $1$-form $\beta$ is written in standard form, $P$ is written in flat coordinates and considering also $\alpha$ in standard form. 
In this case we have the following expression:
\begin{definition}
The Lie derivative of $\alpha$ with respect to $P\beta$ written in flat coordinates, where $\alpha$ and $\beta$ are in standard form, is given by 
\beq\label{Liederivative2}
\begin{split}
\mathrm{Lie}_{P\beta}\alpha=\p_x^s\left(\eta^{kl}\p_x \beta_l \right)\f{\p }{\p u^k_{(s)}}\alpha_i \delta u^i+\alpha_k \f{\p }{\p u^i_{(s)}}\left(\eta^{kl}\p_x \beta_l \right)\delta u^i_{(s)}\\
=\left\{\p_x^s\left(\eta^{kl}\p_x \beta_l \right)\f{\p }{\p u^k_{(s)}}\alpha_i+(-1)^t \p_x^s \left[ \alpha_k \f{\p }{\p u^i_{(s)}}\left(\eta^{kl}\p_x \beta_l \right)\right]\right\}\delta u^i.
\end{split}
\eeq
\end{definition}

The Lie derivative thus defined satisfies Cartan's formula:
\begin{proposition}\label{cartanth}
If the Lie derivative is defined as in \eqref{Liederivative2}, then 
\beq\label{cartanformula}
\mathrm{Lie}_{P\beta}\alpha=i_{P\beta}\delta\alpha+\delta i_{P\beta}\alpha=i_{P\beta}\delta\alpha+\delta<\alpha, P\beta>,
\eeq
where $<\alpha, P\beta>$ denotes the pairing of the $1$-form $\alpha$ with the vector field $P\beta$. 
\end{proposition}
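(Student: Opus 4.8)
The plan is to compute both sides of \eqref{cartanformula} directly in flat coordinates and check that they agree in standard form, i.e.\ up to a total $x$-derivative. Throughout I work with $\alpha=\alpha_i\delta u^i$ and $\beta=\beta_j\delta u^j$ in standard form, and I record that in flat coordinates the vector field $P\beta$ is evolutionary with components $(P\beta)^k=\eta^{kl}\p_x\beta_l$, so that $(P\beta)^{k,s}=\p_x^s(\eta^{kl}\p_x\beta_l)$; this evolutionarity is precisely what makes $i_{P\beta}$ well defined on $\Lambda_k$.

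First I would compute the term $\delta\, i_{P\beta}\alpha$. By \eqref{Poisson3} the pairing $i_{P\beta}\alpha=\int dx\,\alpha_k\,\eta^{kl}\p_x\beta_l$ is a functional, and applying the Euler--Lagrange expression \eqref{euler-lagrange} to its density $h:=\alpha_k\eta^{kl}\p_x\beta_l$ gives $\delta\, i_{P\beta}\alpha=(-1)^s\p_x^s\big(\p h/\p u^i_{(s)}\big)\,\delta u^i$. Expanding the derivative by Leibniz, $\p h/\p u^i_{(s)}=(\p\alpha_k/\p u^i_{(s)})(P\beta)^k+\alpha_k\,\p(P\beta)^k/\p u^i_{(s)}$, splits this into two contributions.

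Next I would compute $i_{P\beta}\delta\alpha$. Using \eqref{deltaoperator}, $\delta\alpha=(\p\alpha_i/\p u^k_{(s)})\,\delta u^k_{(s)}\wedge\delta u^i$, and contracting with the evolutionary field $P\beta$ through the derivation property of the interior product (with $i_{P\beta}\delta u^k_{(s)}=(P\beta)^{k,s}$ and $i_{P\beta}\delta u^i=(P\beta)^i$) produces again two terms. The first, $(P\beta)^{k,s}(\p\alpha_i/\p u^k_{(s)})\,\delta u^i$, already matches the first summand of \eqref{Liederivative2}. The second term carries a factor $\delta u^k_{(s)}$, which I reduce to standard form by integrating by parts $s$ times, picking up the sign $(-1)^s$; it becomes $-(-1)^s\p_x^s\big((\p\alpha_k/\p u^i_{(s)})(P\beta)^k\big)\,\delta u^i$.

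The conclusion is then a matter of adding the two pieces: the term $-(-1)^s\p_x^s\big((\p\alpha_k/\p u^i_{(s)})(P\beta)^k\big)$ coming from $i_{P\beta}\delta\alpha$ cancels exactly the first of the two Leibniz contributions in $\delta\, i_{P\beta}\alpha$, and what survives is precisely the two summands of \eqref{Liederivative2} (with the understanding that the exponent in the second summand there is $s$, the order of the reduced $\delta$-derivative). The main obstacle is purely bookkeeping: keeping the repeated-index summations over $s$ straight and tracking the signs generated by the repeated integration by parts that brings every term to standard form, since the cancellation manifests itself only after all contributions have been reduced to the $\delta u^i$ basis.
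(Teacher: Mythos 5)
Your proposal is correct and follows essentially the same route as the paper's own proof: compute $\delta\,i_{P\beta}\alpha$ via the Euler--Lagrange formula with a Leibniz split of the density, compute $i_{P\beta}\delta\alpha$ and integrate the $\delta u^k_{(s)}$-term by parts, and observe that the cross terms cancel, leaving the second line of \eqref{Liederivative2}. Your parenthetical remark that the exponent in the second summand of \eqref{Liederivative2} should be $s$ is also consistent with what the paper's computation actually produces.
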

\proof
The last equality in \eqref{cartanformula} is obvious because by definition $i_{P\beta}\alpha=<\alpha, P\beta>$. We prove \eqref{cartanformula} in flat coordinates and assuming $\alpha$ and $\beta$ are in standard form. 
We have
$$\delta i_{P\beta}\alpha=\delta \int \,dx \,(\alpha_k\eta^{kl}\p_x\beta_l)=\int\, dx\,\wedge \f{\p }{\p u^i_{(s)}}\left( \alpha_k\eta^{kl}\p_x\beta_l \right)\delta u^i_{(s)}$$
$$=\int\, dx\,\wedge (-1)^s \p_x^s\left\{  \f{\p }{\p u^i_{(s)}}\left( \alpha_k\eta^{kl}\p_x\beta_l \right)\right\}\delta u^i$$
\beq\label{cartan3}=\int\, dx\,\wedge (-1)^s \p_x^s\left\{ \left(\eta^{kl}\p_x\beta_l \right)\f{\p \alpha_k }{\p u^i_{(s)}}\right\}\delta u^i+\int dx\,\wedge (-1)^s \p_x^s\left\{\alpha_k  \f{\p }{\p u^i_{(s)}}(\eta^{kl}\p_x\beta_l)\right\}\delta u^i.\eeq
On the other hand, using $\delta \alpha=\int dx\wedge \f{\p \alpha_i}{\p u^m_{(t)}}\delta u^m_{(t)}\wedge \delta u^i$, we have 
$$i_{P\beta}\delta \alpha=\int dx\wedge \p_x^s\left(\eta^{kl}\p_x \beta_l \right)\f{\p \alpha_i}{\p u^m_{(t)}} \delta u^i \delta^m_k \delta^s_t-\int dx\wedge \p_x^s\left(\eta^{kl}\p_x \beta_l \right)\f{\p \alpha_i}{\p u^m_{(t)}}\delta u^m_{(t)}\delta^i_k \delta^s_{0},$$
where $\delta^m_k$ and so on are Kronecker's delta. 
In the last expression, considering the second term on the right hand side, setting $s=0$ and $i=k$, renaming $t$ to $s$ and $m$ to $i$ and integrating by parts we obtain 
\beq\label{cartan4}i_{P\beta}\delta \alpha=\int dx\wedge \p_x^s\left(\eta^{kl}\p_x \beta_l \right)\f{\p \alpha_i}{\p u^k_{(s)}} \delta u^i-\int dx\wedge (-1)^s \p_x^s\left\{\left(\eta^{kl}\p_x\beta_l \right)\f{\p \alpha_k}{\p u^i_{(s)}} \right\} \delta u^i.\eeq
Summing the expression of $\delta i_{P\beta}\alpha$ given in \eqref{cartan3} with the expression of $i_{P\beta}\delta \alpha$ in \eqref{cartan4} we obtain immediately the second line of \eqref{Liederivative2}. The claim is proved. 
\endproof

Following \cite{MM} and \cite{GD}, we introduce the following bracket between two $1$-forms $\alpha$ and $\beta$. 
\begin{definition}
The Poisson bracket between two $1$-forms $\alpha$ and $\beta$ is defined as
\beq\label{Poissononforms}
\{ \alpha, \beta\}:=\mathrm{Lie}_{P\beta}\alpha-\mathrm{Lie}_{P\alpha}\beta+\delta<\beta, P\alpha>.
\eeq
\end{definition}
%\textcolor{red}{Due to equation \eqref{Liederivative2}, the expression for the Poisson bracket \eqref{Poissononforms} can be simplified to 
%\beq\label{Poissononforms2}
%\{ \alpha, \beta\}=i_{P\beta}\delta \alpha-i_{P\alpha}\delta\beta+\delta <\alpha, P\beta>.
%\eeq}
In the next Section we will prove the following properties of \eqref{Poissononforms}:
\begin{enumerate}
\item If $\alpha$ and $\beta$ are exact $1$-forms, $\alpha=\delta \bar{f}$, $\beta=\delta \bar{g}$, where $\bar f$, $\bar g$ are local functionals, then 
$\{\alpha, \beta\}=\{\delta \bar f, \delta \bar g\}=\delta \{\bar f, \bar g\}$, where $\{\bar f, \bar g\}$ is the usual Poisson bracket among local functionals;
\item $\{\cdot,\cdot\}$ equips the space of $1$-forms $\Lambda_1$ with a Lie algebra structure;
\item the Poisson structure induces an (anti)-homomorphism of Lie algebras between $(\Lambda_1, \, \{\cdot, \cdot\})$ and the space of evolutionary vector fields equipped with the Lie bracket given by the Lie commutator. 
\end{enumerate}

Let us remark that the bracket defined in \eqref{Poissononforms} does not in general fulfill Leibniz rule, as it is typical of Poisson structures on evolutionary PDEs. Despite this fact, we keep calling it Poisson bracket on $1$-forms, as it is customary to do in the infinite dimensional set-up.

First we derive a coordinate expression for the bracket $\{\alpha, \beta\}$ both in flat coordinates and in general coordinates. 

\begin{proposition}
Let $\alpha=\int dx \wedge \alpha_i \delta u^i$ and $\beta=\int dx \wedge \beta_j \delta u^j$ be two reduced $1$-forms. 
Then in flat coordinates $\{\alpha, \beta\}$ is a reduced $1$-form $\{\alpha, \beta\}=\int dx\wedge \{\alpha, \beta\}_j \delta u^j$, 
where 
\beq\label{poissoncoordinatesf}
\{\alpha, \beta\}_j=\eta^{kl}\left[\left( \p_x^{s+1}\beta_l\right)\f{\p \alpha_j}{\p u^k_{(s)}}-\left(\p_x^{s+1}\alpha_l \right)\f{\p \beta_j}{\p u^k_{(s)}}\right].
\eeq
\end{proposition}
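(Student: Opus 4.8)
The plan is to reduce everything to flat coordinates, where $\Gamma^{ij}_k=0$ and $g^{ij}=\eta^{ij}$ is constant, so that $P\alpha,P\beta$ are the evolutionary vector fields with components $(P\beta)^k=\eta^{kl}\partial_x\beta_l$ and prolongations $(P\beta)^{k,s}=\partial_x^s(\eta^{kl}\partial_x\beta_l)=\eta^{kl}\partial_x^{s+1}\beta_l$. Since in flat coordinates a reduced $1$-form is uniquely determined by the coefficient of $\delta u^j$, it will suffice to compute that coefficient and match it termwise with the right-hand side of \eqref{poissoncoordinatesf}.

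First I would simplify the defining expression \eqref{Poissononforms}. Applying Cartan's formula (Proposition \ref{cartanth}) to the middle term, $\mathrm{Lie}_{P\alpha}\beta=i_{P\alpha}\delta\beta+\delta\langle\beta,P\alpha\rangle$, the two copies of $\delta\langle\beta,P\alpha\rangle$ cancel and the bracket collapses to the compact form $\{\alpha,\beta\}=\mathrm{Lie}_{P\beta}\alpha-i_{P\alpha}\delta\beta$. This is the starting identity; it has the advantage of never requiring the exact term to be expanded on its own.

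Next I would read off the two surviving pieces in reduced form. From \eqref{Liederivative2} the coefficient of $\delta u^j$ in $\mathrm{Lie}_{P\beta}\alpha$ splits as $\eta^{kl}(\partial_x^{s+1}\beta_l)\frac{\partial\alpha_j}{\partial u^k_{(s)}}$ (the first term of the target) plus a residual $B:=(-1)^s\partial_x^s\big[\eta^{kl}\alpha_k\frac{\partial(\partial_x\beta_l)}{\partial u^j_{(s)}}\big]$. For $i_{P\alpha}\delta\beta$ I would use the contraction already carried out in \eqref{cartan4} (with $\alpha$ and $\beta$ interchanged): its coefficient of $\delta u^j$ is $\eta^{kl}(\partial_x^{s+1}\alpha_l)\frac{\partial\beta_j}{\partial u^k_{(s)}}$ minus a residual, so that $-i_{P\alpha}\delta\beta$ contributes $-\eta^{kl}(\partial_x^{s+1}\alpha_l)\frac{\partial\beta_j}{\partial u^k_{(s)}}+D$, where $D:=(-1)^s\partial_x^s\big[\eta^{kl}(\partial_x\alpha_l)\frac{\partial\beta_k}{\partial u^j_{(s)}}\big]$. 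The two ``directional-derivative'' terms then assemble into exactly \eqref{poissoncoordinatesf}, and what remains is to prove that the two residuals cancel, i.e. $B+D=0$.

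The main obstacle is precisely this cancellation, which I would establish as an identity of differential polynomials. The key tool is the commutation relation \eqref{auxiliary}, $\frac{\partial}{\partial u^j_{(s)}}\circ\partial_x=\partial_x\circ\frac{\partial}{\partial u^j_{(s)}}+\frac{\partial}{\partial u^j_{(s-1)}}$ (with $\frac{\partial}{\partial u^j_{(-1)}}=0$), which inside $B$ lets me replace $\frac{\partial(\partial_x\beta_l)}{\partial u^j_{(s)}}$ by $\partial_x\frac{\partial\beta_l}{\partial u^j_{(s)}}+\frac{\partial\beta_l}{\partial u^j_{(s-1)}}$. Shifting the summation index $s\mapsto s+1$ in the term carrying $\frac{\partial\beta_l}{\partial u^j_{(s-1)}}$ and expanding the resulting extra $\partial_x$ by the Leibniz rule produces one contribution that cancels the $\partial_x\frac{\partial\beta_l}{\partial u^j_{(s)}}$ term and one contribution equal to $-\,(-1)^s\partial_x^s\big[\eta^{kl}(\partial_x\alpha_k)\frac{\partial\beta_l}{\partial u^j_{(s)}}\big]$; after relabelling $k\leftrightarrow l$ and using $\eta^{kl}=\eta^{lk}$ this is exactly $-D$, so $B=-D$ and $B+D=0$. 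Everything else in the argument is bookkeeping of indices and integration by parts, so this telescoping step is the only genuinely nontrivial point.
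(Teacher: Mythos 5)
Your proof is correct. It differs from the paper's in its organization rather than in substance. The paper expands all three terms of \eqref{Poissononforms} directly in non-reduced form (keeping the $\delta u^i_{(s)}$), observes that one piece of $\delta\langle\beta,P\alpha\rangle$ cancels against one piece of $-\mathrm{Lie}_{P\alpha}\beta$ after relabelling, and then shows that the two surviving residual terms assemble into the total derivative $\p_x\bigl\{\alpha_k\eta^{kl}\tfrac{\p\beta_l}{\p u^i_{(s)}}\delta u^i_{(s)}\bigr\}$, which dies under the integral sign. You instead invoke Cartan's formula at the outset to collapse the definition to $\{\alpha,\beta\}=\mathrm{Lie}_{P\beta}\alpha-i_{P\alpha}\delta\beta$, work throughout in reduced form, and prove the cancellation $B+D=0$ as an identity of differential polynomials via the commutation relation \eqref{auxiliary} and an index shift. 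The two cancellation arguments are really the same computation seen from opposite sides: your $B+D$ is precisely the reduced form of the paper's residual, and showing it vanishes identically is equivalent to showing the residual is an exact $x$-derivative. What your route buys is a cleaner structural starting point --- the identity $\{\alpha,\beta\}=\mathrm{Lie}_{P\beta}\alpha-i_{P\alpha}\delta\beta$ mirrors the finite-dimensional picture and is the same simplification the paper itself exploits later in the proof of Proposition \ref{usualpoisson} --- at the cost of having to carry out the telescoping cancellation explicitly under the $(-1)^s\p_x^s$ reduction, which you do correctly. One small point of hygiene: the second line of \eqref{Liederivative2} carries a typo ($(-1)^t$ should read $(-1)^s$); your $B$ implicitly corrects this, which is the right reading.
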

\proof
In the proof we remove the integral sign, just to simplify notation, with the understanding that total derivatives with respect to $x$ can be safely eliminated. 
Using formula \eqref{Poissononforms} and the expressions for the Lie derivatives one gets
$$\{\alpha, \beta\}=\p_x^s\left( \eta^{kl}\p_x \beta_l\right)\f{\p \alpha_i}{\p u^k_{(s)}}\delta u^i+\alpha_k\f{\p }{\p u^i_{(s)}}\left(\eta^{kl}\p_x \beta_l \right)\delta u^i_{(s)}$$
$$-\p_x^s\left( \eta^{kl}\p_x \alpha_l\right)\f{\p \beta_i}{\p u^k_{(s)}}\delta u^i-\beta_k\f{\p }{\p u^i_{(s)}}\left(\eta^{kl}\p_x \alpha_l \right)\delta u^i_{(s)}$$
$$+\f{\p \beta_l}{\p u^i_{(s)}}\eta^{lm}\p_x\alpha_m\delta u^i_{(s)}+\beta_l \eta^{lm}\f{\p }{\p u^i_{(s)}}\p_x \alpha_m \delta u^i_{(s)}.$$
Therefore we have 
$$\{\alpha, \beta\}=\eta^{kl}\left[\left( \p_x^{s+1}\beta_l\right)\f{\p \alpha_i}{\p u^k_{(s)}}-\left(\p_x^{s+1}\alpha_l \right)\f{\p \beta_i}{\p u^k_{(s)}}\right]\delta u^i+\text{ residual terms}.$$
We show that the residual terms constitutes a total derivative with respect to $x$ and this proves \eqref{poissoncoordinatesf}.
Indeed the non-trivial residual terms are given by 
$$\alpha_k\f{\p }{\p u^i_{(s)}}\left(\eta^{kl}\p_x \beta_l \right)\delta u^i_{(s)}+\f{\p \beta_l}{\p u^i_{(s)}}\eta^{lm}\p_x\alpha_m\delta u^i_{(s)},$$
since the other two cancel out after relabeling indices. 
Using $\f{\p }{\p u^i_{(s)}}\circ \p_x=\p_x\circ \f{\p }{\p u^i_{(s)}}+\f{\p }{\p u^i_{(s-1)}}$ and renaming indices these can be rewritten as 
$$\alpha_k \eta^{kl}\p_x\left(\f{\p \beta_l}{\p u^i_{(s)}}\right)\delta u^i_{(s)}+\alpha_k \eta^{kl}\f{\p \beta_l}{\p u^i_{(s)}}\delta u^i_{(s+1)}+(\p_x\alpha_l)\eta^{kl}\f{\p \beta_l}{\p u^i_{(s)}}\delta u^i_{(s)}=\p_x\left\{\alpha_k \eta^{kl}\f{\p \beta_l}{\p u^i_{(s)}}\delta u^i_{(s)} \right\}.$$
\endproof

The expression of the bracket in general coordinates is given in the following:
\begin{proposition}
Let $\alpha=\int dx \wedge \alpha_i \delta u^i$ and $\beta=\int dx \wedge \beta_j \delta u^j$ be two reduced $1$-forms. 
Then in arbitrary coordinates $\{\alpha, \beta\}$ is a reduced $1$-form $\{\alpha, \beta\}=\int dx\wedge \{\alpha, \beta\}_i\delta u^i$, 
where 
\beq\begin{split}\label{poissoncoordinatesa}
\{\alpha, \beta\}_i=\p_x^s\left(g^{kl}\p_x\beta_l+\Gamma^{kl}_mu^m_x\beta_l\right)\f{\p\alpha_i}{\p u^k_{(s)}}-\p_x^s\left(g^{kl}\p_x\alpha_l+\Gamma^{kl}_mu^m_x\alpha_l\right)\f{\p\beta_i}{\p u^k_{(s)}}\\
+\left(\alpha_k\p_x\beta_l-\beta_k\p_x\alpha_l\right)\Gamma^{lk}_i-\alpha_k\beta_l\left[\Gamma^{k}_{is}\Gamma^{sl}_m-\Gamma^l_{is}\Gamma^{sk}_m\right]u^m_x,
\end{split}\eeq
where $g^{kl}$ is the flat (contravariant)-metric in the chosen coordinates and $\Gamma^{lk}_m$ are the corresponding Christoffel symbols. 
Moreover, if $\alpha_i$ and $\beta_i$ are functions depending only on the coordinates $u^1, \dots, u^n$, but not on their derivatives, then 
\beq\begin{split}\label{poissoncoordinatesa2}
\{\alpha, \beta\}_i=\left(\nabla_m \beta_l g^{kl}\nabla_k \alpha_i-\nabla_m \alpha_l g^{kl}\nabla_k \beta_i \right)u^m_x,
\end{split}\eeq
where $\nabla_m$ indicates covariant derivative with respect to the vector field $\f{\p }{\p u^m}$. 
\end{proposition}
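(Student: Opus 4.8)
The plan is to establish \eqref{poissoncoordinatesa} by recomputing the three terms of the defining formula \eqref{Poissononforms} directly in arbitrary coordinates, following the strategy of the preceding (flat-coordinate) proposition verbatim, and then to obtain \eqref{poissoncoordinatesa2} as a specialization. The only new feature relative to the flat case is that $g^{kl}=g^{kl}(u)$ is now non-constant and the Christoffel symbols $\Gamma^{kl}_m$ do not vanish, so the components \eqref{vectorfield3components} of $P\beta$ read $(P\beta)^k=g^{kl}\p_x\beta_l+\Gamma^{kl}_mu^m_x\beta_l$ and $X^{k,s}=\p_x^s(P\beta)^k$. Substituting this $X$ into the Lie-derivative formula \eqref{Liederivative1}, with $\alpha$ in standard form so that only $\alpha^{(0)}_i=\alpha_i$ survives, gives after reduction to standard form by integration by parts $\mathrm{Lie}_{P\beta}\alpha=\big\{\p_x^s(P\beta)^k\,\f{\p\alpha_i}{\p u^k_{(s)}}+(-1)^t\p_x^t[\alpha_k\,\f{\p(P\beta)^k}{\p u^i_{(t)}}]\big\}\delta u^i$, and symmetrically for $\mathrm{Lie}_{P\alpha}\beta$; the term $\delta\langle\beta,P\alpha\rangle=\delta\, i_{P\alpha}\beta$ is handled exactly as $\delta\, i_{P\beta}\alpha$ in the proof of Proposition \ref{cartanth}.

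First I would isolate the leading contributions: the terms $\p_x^s(P\beta)^k\,\f{\p\alpha_i}{\p u^k_{(s)}}-\p_x^s(P\alpha)^k\,\f{\p\beta_i}{\p u^k_{(s)}}$ reproduce exactly the first line of \eqref{poissoncoordinatesa}. What remains are the second-slot contributions $\alpha_k\,\f{\p(P\beta)^k}{\p u^i_{(t)}}$ (and its $\alpha\leftrightarrow\beta$ partner) together with $\delta\langle\beta,P\alpha\rangle$. Expanding $\f{\p(P\beta)^k}{\p u^i_{(t)}}$ produces terms proportional to $g^{kl}$, to its derivatives, to $\Gamma^{kl}_m$ and to its derivatives; using the commutation rule \eqref{auxiliary} and repeated integration by parts, I expect all the highest-derivative pieces (those already present in the flat computation) to reassemble into a total $x$-derivative and hence to drop out in $\Lambda_1$, exactly as the residual $\p_x\{\alpha_k\eta^{kl}\,\f{\p\beta_l}{\p u^i_{(s)}}\delta u^i_{(s)}\}$ did in the flat case. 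The genuinely new surviving terms are the single-derivative term $(\alpha_k\p_x\beta_l-\beta_k\p_x\alpha_l)\Gamma^{lk}_i$ and the quadratic expression $-\alpha_k\beta_l[\Gamma^k_{is}\Gamma^{sl}_m-\Gamma^l_{is}\Gamma^{sk}_m]u^m_x$. The main obstacle is precisely this bookkeeping: tracking the Christoffel- and metric-derivative contributions from all three pieces of \eqref{Poissononforms} and verifying by integration by parts that everything except the two displayed correction terms is a total $x$-derivative.

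For the second statement I would specialize \eqref{poissoncoordinatesa} to $\alpha_i=\alpha_i(u)$, $\beta_i=\beta_i(u)$. Then $\f{\p\alpha_i}{\p u^k_{(s)}}$ and $\f{\p\beta_i}{\p u^k_{(s)}}$ survive only for $s=0$, and $\p_x\beta_l=\f{\p\beta_l}{\p u^m}u^m_x$, so every term carries an overall factor $u^m_x$. The key algebraic observation is that, using the defining relation $\Gamma^{ij}_k=-g^{il}\Gamma^j_{lk}$ and the symmetry $\Gamma^j_{lk}=\Gamma^j_{kl}$ of the Levi-Civita symbols, one has $g^{kl}\p_x\beta_l+\Gamma^{kl}_mu^m_x\beta_l=g^{kl}(\nabla_m\beta_l)u^m_x$, so the first line of \eqref{poissoncoordinatesa} becomes $g^{kl}(\nabla_m\beta_l)\,\f{\p\alpha_i}{\p u^k}\,u^m_x-g^{kl}(\nabla_m\alpha_l)\,\f{\p\beta_i}{\p u^k}\,u^m_x$. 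It then remains to show that the two correction terms of \eqref{poissoncoordinatesa} combine to form exactly $-g^{kl}(\nabla_m\beta_l)\Gamma^q_{ki}\alpha_q u^m_x+g^{kl}(\nabla_m\alpha_l)\Gamma^q_{ki}\beta_q u^m_x$, which is precisely the amount needed to promote the ordinary derivatives $\f{\p\alpha_i}{\p u^k},\f{\p\beta_i}{\p u^k}$ into the covariant derivatives $\nabla_k\alpha_i,\nabla_k\beta_i$. This matching, carried out by relabeling dummy indices and invoking once more $\Gamma^{ij}_k=-g^{il}\Gamma^j_{lk}$ and the symmetry of the lower indices, is the only slightly delicate point; flatness of $g$ is not needed, since the corrections assemble into ordinary covariant derivatives rather than a curvature term.
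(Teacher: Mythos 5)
Your outline follows the same route as the paper (expand the three terms of \eqref{Poissononforms} in general coordinates, peel off the first line of \eqref{poissoncoordinatesa}, and show the rest collapses to the two correction terms modulo total $x$-derivatives), and your treatment of the specialization \eqref{poissoncoordinatesa2} is essentially identical to the paper's: rewrite $g^{kl}\p_x\beta_l+\Gamma^{kl}_mu^m_x\beta_l$ as $g^{kl}(\nabla_m\beta_l)u^m_x$ and absorb the correction terms into covariant derivatives of $\alpha_i,\beta_i$. That part is fine.

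However, there is a genuine gap in the first (and main) part. You defer the whole derivation of the two correction terms to ``bookkeeping'' via the commutation rule and integration by parts, but those tools alone cannot produce the quadratic term $-\alpha_k\beta_l\left[\Gamma^{k}_{is}\Gamma^{sl}_m-\Gamma^l_{is}\Gamma^{sk}_m\right]u^m_x$. After the total-derivative cancellations one is left (as in \eqref{residual5}) with a term proportional to $\alpha_k\beta_l\,\p_i\Gamma^{kl}_n\,u^n_x$, in which the derivative index $i$ is the free index of the form while the contraction with $u^n_x$ is over $n$; to convert this into $\alpha_k\beta_l\,\p_x\Gamma^{kl}_i$ (which then combines by parts with $\alpha_k\beta_l\Gamma^{kl}_i\delta u^i_{(1)}$ to give the $\left(\alpha_k\p_x\beta_l-\beta_k\p_x\alpha_l\right)\Gamma^{lk}_i$ term) one must swap the indices $i$ and $n$ on $\p\Gamma^{kl}$, and the obstruction to that swap is exactly the curvature of $g$. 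The paper invokes the zero-curvature condition \eqref{zerocurvaturecontra}, and it is precisely the quadratic-in-$\Gamma$ part of that identity that survives as the last term of \eqref{poissoncoordinatesa}. Your proposal never mentions flatness of $g$ in this step (you only remark, correctly but about the easier second part, that flatness is not needed there), so as written the argument does not close: you would be unable to account for where the quadratic Christoffel term comes from, nor to justify discarding $\p_i\Gamma^{kl}_n u^n_x$ as a total derivative. You need to state and use the identity
\begin{equation*}
\p_i \Gamma^{kl}_n u^n_x=\p_x \Gamma^{kl}_i +g_{mi}\left(\Gamma^{mk}_s\Gamma^{sl}_n-\Gamma^{ml}_s \Gamma^{sk}_n\right) u^n_x,
\end{equation*}
which is a restatement of the flatness of the contravariant metric.
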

\proof
To derive the general formula \eqref{poissoncoordinatesa} we use again \eqref{Poissononforms}, where $P$ is expressed this time in general coordinates. 
Expanding all the terms in \eqref{Poissononforms}, and skipping the integral sign to simplify notation we find
\beq\nonumber
\begin{split}
\{\alpha, \beta\}=\p_x^s\left(g^{kl}\p_x\beta_l+\Gamma^{kl}_mu^m_x\beta_l\right)\f{\p\alpha_i}{\p u^k_{(s)}} \delta u^i+\alpha_k \f{\p }{\p u^i_{(t)}}\left(g^{kl}\p_x\beta_l+\Gamma^{kl}_n u^n_x\beta_l\right)\delta u^i_{(t)}\\
-\p_x^s\left(g^{kl}\p_x\alpha_l+\Gamma^{kl}_mu^m_x\alpha_l\right)\f{\p\beta_i}{\p u^k_{(s)}}\delta u^i-\beta_k \f{\p }{\p u^i_{(t)}}\left(g^{kl}\p_x\alpha_l+\Gamma^{kl}_n u^n_x\alpha_l\right)\delta u^i_{(t)}\\
+\f{\p \beta_l}{\p u^i_{(t)}}\left(g^{lm}\p_x \alpha_m+\Gamma^{lm}_n u^n_x\alpha_m \right)\delta u^i_{(t)}+\beta_l \f{\p }{\p u^i_{(t)}}\left(g^{lm}\p_x \alpha_m+\Gamma^{lm}_n u^n_x\alpha_m \right)\delta u^i_{(t)}.
\end{split}
\eeq
From this expression we obtain immediately 
\beq\nonumber
\begin{split}
\{\alpha, \beta\}=\left\{\p_x^s\left(g^{kl}\p_x\beta_l+\Gamma^{kl}_mu^m_x\beta_l\right)\f{\p\alpha_i}{\p u^k_{(s)}} -\p_x^s\left(g^{kl}\p_x\alpha_l+\Gamma^{kl}_mu^m_x\alpha_l\right)\f{\p\beta_i}{\p u^k_{(s)}}\right\}\delta u^i\\
+\alpha_k \f{\p }{\p u^i_{(t)}}\left(g^{kl}\p_x\beta_l+\Gamma^{kl}_n u^n_x\beta_l\right)\delta u^i_{(t)}+\f{\p \beta_l}{\p u^i_{(t)}}\left(g^{lm}\p_x \alpha_m+\Gamma^{lm}_n u^n_x\alpha_m \right)\delta u^i_{(t)}\\
+\beta_l \f{\p }{\p u^i_{(t)}}\left(g^{lm}\p_x \alpha_m+\Gamma^{lm}_n u^n_x\alpha_m \right)\delta u^i_{(t)}-\beta_k \f{\p }{\p u^i_{(t)}}\left(g^{kl}\p_x\alpha_l+\Gamma^{kl}_n u^n_x\alpha_l\right)\delta u^i_{(t)},
\end{split}
\eeq
from which we recognize that the first two terms in \eqref{poissoncoordinatesa} in the first line, while the third line vanishses, relabeling indices. It remains to prove that the two terms in the second line in the previous expression are equal (up to total derivatives with respect to $x$) to the last line in \eqref{poissoncoordinatesa}.
In the expression 
$$\alpha_k \f{\p }{\p u^i_{(t)}}\left(g^{kl}\p_x\beta_l+\Gamma^{kl}_n u^n_x\beta_l\right)\delta u^i_{(t)},$$
we split the sum over $t$ into three terms corresponding to $t=0$, $t=1$ and $t\geq 2$ and then we use the identity $\f{\p }{\p u^i_{(t)}}\circ \p_x=\p_x\circ \f{\p }{\p u^i_{(t)}}+\f{\p }{\p u^i_{(t-1)}}$ for $t\geq 1$ and $\f{\p }{\p u^i_{(0)}}\circ \p_x=\p_x\circ \f{\p }{\p u^i_{(0)}}$. 
In this way we obtain 
\beq\label{casino}
\begin{split}
\alpha_k \f{\p }{\p u^i_{(t)}}\left(g^{kl}\p_x\beta_l+\Gamma^{kl}_n u^n_x\beta_l\right)\delta u^i_{(t)}+\f{\p \beta_l}{\p u^i_{(t)}}\left(g^{lm}\p_x \alpha_m+\Gamma^{lm}_n u^n_x\alpha_m \right)\delta u^i_{(t)}=\\
\alpha_k \left(\f{\p g^{kl}}{\p u^i}\p_x\beta_l \right)\delta u^i +\underbrace{\alpha_k\left(g^{kl}\p_x\f{\p \beta_l}{\p u^i}\right)\delta u^i}_{(1)}+\alpha_k\left( \f{\p \Gamma^{kl}_n}{\p u^i}u^n_x \beta_l\right)\delta u^i+\underbrace{\alpha_k \left(\Gamma^{kl}_nu^n_x\f{\p \beta_l}{\p u^i}\right)\delta u^i}_{(5)}\\
+\underbrace{\alpha_k\left(g^{kl}\left(\p_x \f{\p \beta_l}{\p u^i_{(1)}}+\f{\p \beta_l}{\p u^i} \right) \right)\delta u^i_{(1)}}_{(2)}+\alpha_k \Gamma^{kl}_i \beta_l\delta u^i_{(1)}+\underbrace{\alpha_k \left(\Gamma^{kl}_n u^n_x\f{\p \beta_l}{\p u^i_{(1)}}\right)\delta u^i_{(1)}}_{(6)}\\
+\underbrace{\alpha_k \left(g^{kl}\left(\p_x \f{\p \beta_l}{\p u^i_{(t)}}+\f{\p \beta_l}{\p u^i_{(t-1)}}\right) \right)\delta u^i_{(t)}}_{(3)\, \text{ for } t\geq2}+
\underbrace{\alpha_k\left(\Gamma^{kl}_n u^n_x\f{\p \beta_l}{\p u^i_{(t)}}\right)\delta u^i_{(t)}}_{(7)\, \text{ for } t\geq2}\\
+\underbrace{\f{\p \beta_l}{\p u^i_{(t)}}\left(g^{lm}\p_x \alpha_m \right)\delta u^i_{(t)}}_{(4)\, \text{ for } t\geq 0}+
\underbrace{\f{\p \beta_l}{\p u^i_{(t)}}\left(\Gamma^{lm}_n u^n_x \alpha_m \right)\delta u^i_{(t)}}_{(8)\, \text{ for } t\geq 0}.
\end{split}
\eeq
Summing the terms labeled $(1), (2), (3), (4)$ in \eqref{casino} we get immediately, after suitable relabeling of some indices, 
$$(1)+(2)+(3)+(4)=g^{kl}\p_x\left(\f{\p \beta_l}{\p u^i_{(t)}}\alpha_k \delta u^i_{(t)}\right) \; \text{sum over }t\geq 0.$$
On the other hand, summing all the terms labeled $(5), (6), (7), (8)$ in \eqref{casino} and relabeling $m$ to $k$ in term $(8)$ we obtain 
$$(5)+(6)+(7)+(8)=\alpha_k \f{\p \beta_l}{\p u^i_{(t)}}\left(\Gamma^{kl}_n+\Gamma^{lk}_n \right)u^n_x \delta u^i_{(t)}\; \text{sum over }t\geq 0.$$
Now it is well-known  that  $(\Gamma^{kl}_n+\Gamma^{lk}_n)=\f{\p}{\p u^n} g^{kl}, $
so that $\left(\Gamma^{kl}_n+\Gamma^{lk}_n \right)u^n_x=\p_x g^{kl}$. Therefore, the sum of all labeled terms in \eqref{casino} is equal to 
$$\p_x\left(g^{kl}\f{\p \beta_l}{\p u^i_{(t)}}\alpha_k \delta u^i_{(t)} \right),$$
and therefore it can be safely discarded. 

Now it remains to deal with 
\beq\label{residual5}\alpha_k \left(\f{\p g^{kl}}{\p u^i}\p_x\beta_l \right)\delta u^i +\alpha_k\left( \f{\p \Gamma^{kl}_n}{\p u^i}u^n_x \beta_l\right)\delta u^i+\alpha_k \Gamma^{kl}_i \beta_l\delta u^i_{(1)}.\eeq
First we express $ \f{\p \Gamma^{kl}_n}{\p u^i}u^n_x$ as a total derivative with respect to $x$, exchanging the indices $i$ and $n$ in $\f{\p \Gamma^{kl}_n}{\p u^i}$, using the zero curvature condition. Indeed, the zero curvature condition reads:
\beq\label{zerocurvaturecontra}
g^{is}\left(\p_s \Gamma^{jk}_l-\p_l \Gamma^{jk}_s \right)-\Gamma^{ij}_s \Gamma^{sk}_l+\Gamma^{ik}_s\Gamma^{sj}_l=0, 
\eeq
where $\p_s:=\f{\p }{\p u^s}$. From \eqref{zerocurvaturecontra}, lowering and renaming indices and multiplying by $u^n_x$ we obtain the identity 
\beq\label{zerocurvature2}
\p_i \Gamma^{kl}_nu^n_x=\p_n \Gamma^{kl}_i u^n_x+g_{mi}\Gamma^{mk}_s\Gamma^{sl}_n u^n_x-g_{mi}\Gamma^{ml}_s \Gamma^{sk}_n u^n_x.
\eeq
Substituting \eqref{zerocurvature2} in \eqref{residual5} and using $\p_n \Gamma^{kl}_i u^n_x=\p_x \Gamma^{kl}_i $, we obtain 
\beq\label{residual6}
\alpha_k \left(\f{\p g^{kl}}{\p u^i}\p_x \beta_l\right)\delta u^i+\alpha_k\beta_l(\p_x\Gamma^{kl}_i)\delta u^i+g_{mi}\alpha_k \beta_l\left[\Gamma^{mk}_s \Gamma^{sl}_n -\Gamma^{ml}_s\Gamma^{sk}_n\right]u^n_x\delta u^i+\alpha_k \beta_l\Gamma^{kl}_i \delta u^i_{(1)}.
\eeq
Recalling that $g_{mi}\Gamma^{mk}_s=-\Gamma^{k}_{is}$ we see that 
$$g_{mi}\alpha_k \beta_l\left[\Gamma^{mk}_s \Gamma^{sl}_n -\Gamma^{ml}_s\Gamma^{sk}_n\right]u^n_x\delta u^i=-\alpha_k\beta_l\left[\Gamma^{k}_{is}\Gamma^{sl}_m-\Gamma^l_{is}\Gamma^{sk}_m\right]u^m_x\delta u^i,$$
which appears as the last term in \eqref{poissoncoordinatesa}.
Integrating by parts $\alpha_k \beta_l\Gamma^{kl}_i \delta u^i_{(1)}$ and using $\p_i g^{kl}=\Gamma^{kl}_i+\Gamma^{lk}_i$, the remaining terms in \eqref{residual6} become (up to total derivatives)
$$\left[\alpha_k \Gamma^{kl}_i\p_x\beta_l+\alpha_k \Gamma^{lk}_i \p_x\beta_l-\p_x(\alpha_k \beta_l)\Gamma^{kl}_i\right]\delta u^i,$$
which after renaming indices is equal to 
$$(\alpha_k\p_x\beta_l-\beta_k\p_x\alpha_l )\Gamma^{lk}_i\delta u^i.$$
This is the third term in \eqref{poissoncoordinatesa}. This concludes the proof of \eqref{poissoncoordinatesa}.

Finally to prove \eqref{poissoncoordinatesa2} we use \eqref{poissoncoordinatesa} specializing it to the case in which $\alpha_i$ and $\beta_i$ depend only on the coordinates $u^1, \dots, u^n$. In this case then \eqref{poissoncoordinatesa} gives
\beq\nonumber\begin{split}\{\alpha, \beta\}_i=\left( g^{kl}\p_m \beta_l+\Gamma^{kl}_m \beta_l\right) u^m_x \f{\p \alpha_i}{\p u^k}-\left(g^{kl}\p_m \alpha_l+\Gamma^{kl}_m \alpha_l\ \right)u^m_x\f{\p \beta_i}{\p u^k}\\
+\Gamma^{lk}_i\left( \alpha_k \p_m \beta_l-\beta_k \p_m \alpha_l\right) u^m_x-\alpha_n \beta_k\left[g_{sl}\Gamma^{lk}_i\Gamma^{sn}_m \right]u^m_x+\alpha_k\beta_n\left[g_{sl}\Gamma^{lk}_i\Gamma^{sn}_m \right]u^m_x\\
=g^{kl}\left( \nabla_m \beta_l\right)u^m_x\f{\p \alpha_i}{\p u^k}-g^{kl}\left(\nabla_m \alpha_l \right)u^m_x\f{\p \beta_i}{\p u^k}\\
+\Gamma^{lk}_i\alpha_k u^m_x\left[ \f{\p \beta_l}{\p u^m}-\Gamma^n_{lm}\beta_n\right]-\Gamma^{lk}_i \beta_k u^m_x\left[\f{\p \alpha_l}{\p u^m}-\Gamma^n_{lm}\alpha_n \right]\\
=(\nabla_m \beta_l)u^m_x\left[g^{kl}\f{\p \alpha_i}{\p u^k}+\Gamma^{lk}_i \alpha_k \right]-(\nabla_m \alpha_l)u^m_x\left[g^{kl}\f{\p \beta_i}{\p u^k}+\Gamma^{lk}_i \beta_k \right]\\
=\left\{(\nabla_m \beta_l)g^{kl}(\nabla_k \alpha_i)-(\nabla_m \alpha_l)g^{kl}(\nabla_k \beta_i)\right\}u^m_x.
\end{split}
\eeq
Formula \eqref{poissoncoordinatesa2} is proved. 
\endproof

\section{Properties of the bracket}
In this Section we show that the bracket previously defined enjoys the same properties of the Poisson bracket on functionals. 
Let $\alpha:=\alpha_i \delta u^i$ and $\beta:=\beta_j \delta u^j$ be two $1$-forms written in standard form. 
Then the Poisson bracket between $\alpha$ and $\beta$ is again a $1$-form which is written in flat coordinates and in standard form as follows 
\beq\label{truePoisson1}
\{\alpha, \beta\}:=\left(\eta^{kl}(\p_x^{s+1}\beta_l)\f{\p\alpha_i}{\p u^k_{(s)}}-\eta^{kl}(\p_x^{s+1}\alpha_l)\f{\p \beta_i}{\p u^k_{(s)}} \right)\delta u^i, 
\eeq

First we show that when the bracket on forms is evaluated on exact $1$-forms, then it is equal to the differential of the standard Poisson bracket between the corresponding functionals. Indeed we have the following: 

\begin{proposition}\label{usualpoisson}
If $\alpha$ and $\beta$ are exact $1$-forms, $\alpha=\delta \bar{f}$, $\beta=\delta \bar{g}$, where $\bar f$, $\bar g$ are local functionals, then 
$\{\alpha, \beta\}=\{\delta \bar f, \delta \bar g\}=\delta \{\bar f, \bar g\}$, where $\{\bar f, \bar g\}$ is the usual Poisson bracket among local functionals, while $\{\alpha, \beta\}$ is the bracket on $1$-forms defined in the previous Section. 
\end{proposition}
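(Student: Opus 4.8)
The plan is to bypass the explicit coordinate formula \eqref{truePoisson1} entirely and argue directly from the coordinate-free definition \eqref{Poissononforms}, using Cartan's formula established in Proposition \ref{cartanth}. The single decisive fact is that an exact $1$-form is $\delta$-closed: because $\delta\circ\delta=0$ on $\Lambda_k$, we have $\delta(\delta\bar f)=0$ and $\delta(\delta\bar g)=0$ in $\Lambda_2$. Since both sides of the claimed identity are intrinsically defined, it suffices to carry out the argument in flat coordinates, where Cartan's formula \eqref{cartanformula} has been verified and where $\delta\bar f=\frac{\delta f}{\delta u^i}\,\delta u^i$, $\delta\bar g=\frac{\delta g}{\delta u^j}\,\delta u^j$ are already in standard form.

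First I would apply \eqref{cartanformula} to each Lie derivative in \eqref{Poissononforms} with $\alpha=\delta\bar f$ and $\beta=\delta\bar g$. The term $i_{P\beta}\delta\alpha$ drops out because $\delta\alpha=\delta\delta\bar f=0$, giving
$$\mathrm{Lie}_{P(\delta\bar g)}(\delta\bar f)=\delta\langle\delta\bar f,P(\delta\bar g)\rangle,\qquad \mathrm{Lie}_{P(\delta\bar f)}(\delta\bar g)=\delta\langle\delta\bar g,P(\delta\bar f)\rangle.$$
Substituting these two identities into \eqref{Poissononforms} yields
$$\{\delta\bar f,\delta\bar g\}=\delta\langle\delta\bar f,P(\delta\bar g)\rangle-\delta\langle\delta\bar g,P(\delta\bar f)\rangle+\delta\langle\delta\bar g,P(\delta\bar f)\rangle,$$
and the last two terms cancel, leaving $\{\delta\bar f,\delta\bar g\}=\delta\langle\delta\bar f,P(\delta\bar g)\rangle$. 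To finish, I would identify the pairing by \eqref{Poisson2} as the ordinary Poisson bracket of functionals, $\langle\delta\bar f,P(\delta\bar g)\rangle=\{\bar f,\bar g\}$, which gives $\{\delta\bar f,\delta\bar g\}=\delta\{\bar f,\bar g\}$ as desired.

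I expect essentially no real obstacle here; the whole content is that exactness forces $\delta\alpha=0$, after which two of the three terms cancel and one recognizes the surviving term. The only points requiring care are bookkeeping ones: that Cartan's formula is being applied to representatives genuinely in standard form (which $\delta\bar f$ and $\delta\bar g$ are), and that the vanishing $\delta\delta\bar f=0$ is used as an honest identity in $\Lambda_2$ rather than merely modulo total $x$-derivatives. As an independent consistency check one could instead specialize \eqref{truePoisson1} to $\alpha_i=\frac{\delta f}{\delta u^i}$, $\beta_l=\frac{\delta g}{\delta u^l}$ and compare with the Euler--Lagrange derivative of the density $\frac{\delta f}{\delta u^i}\,\eta^{ij}\,\partial_x\frac{\delta g}{\delta u^j}$; but this direct route is longer, and the Cartan-formula argument makes it unnecessary.
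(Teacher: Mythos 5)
Your argument is correct and is essentially identical to the paper's own proof: both expand the defining formula \eqref{Poissononforms} via Cartan's identity \eqref{cartanformula}, kill the $i_{P\beta}\delta\alpha$ and $i_{P\alpha}\delta\beta$ terms using $\delta\circ\delta=0$, observe the cancellation of the remaining pair, and identify $\langle\delta\bar f,P\delta\bar g\rangle$ with $\{\bar f,\bar g\}$ via \eqref{Poisson2}. No gaps.
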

\proof
By \eqref{Poissononforms}, using the fact that Lie derivative satisfies Cartan's identity \eqref{cartanformula}, we obtain 
$$\{ \alpha, \beta\}=i_{P\beta}\delta \alpha+\delta i_{P\beta}\alpha-i_{P\alpha}\delta \beta-\delta i_{P\alpha}\beta+\delta i_{P\alpha}\beta.$$
Therefore, since $\delta\circ \delta=0$, if $\alpha$ and $\beta$ are exact we have 
$$\{\alpha, \beta\}= \delta i_{P\beta}\alpha=\delta<\alpha, P\beta>.$$
In particular, if $\alpha=\delta \bar f$, $\beta=\delta \bar g$, we obtain 
$$\{\delta \bar f, \delta \bar g\}=\delta< \delta \bar f, P \delta \bar g>.$$
It is immediate to check that $< \delta \bar f, P \delta \bar g>=\{\bar f, \bar g\},$ thus the Proposition is proved. 
%\mathrm{Lie}_{P\beta}\alpha-\mathrm{Lie}_{P\alpha}\beta+\delta<\beta, P\alpha>.
\endproof

\begin{proposition}\label{trueJacobi1}
The bracket defined in \eqref{truePoisson1} satisfies Jacobi identity. 
\end{proposition}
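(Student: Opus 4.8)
The plan is to verify the Jacobi identity
$$
\{\{\alpha,\beta\},\gamma\}+\{\{\beta,\gamma\},\alpha\}+\{\{\gamma,\alpha\},\beta\}=0
$$
for three reduced $1$-forms $\alpha$, $\beta$, $\gamma$, working entirely in flat coordinates where the bracket takes the clean form \eqref{truePoisson1}. Since the ambient identity is conceptually inherited from the finite-dimensional formula \eqref{PB1intro} (whose bracket is genuinely a Lie bracket on a Poisson manifold), the real content is bookkeeping: confirming that the infinite-jet/variational-derivative formalism does not spoil the cancellation, and that working modulo total $x$-derivatives is consistent. I would first record the convenient abbreviation from \eqref{vectorfield3components}, writing $(P\beta)^k=\eta^{kl}\p_x\beta_l$ in flat coordinates, so that \eqref{truePoisson1} reads
$$
\{\alpha,\beta\}_i=\p_x^s\bigl((P\beta)^k\bigr)\f{\p\alpha_i}{\p u^k_{(s)}}-\p_x^s\bigl((P\alpha)^k\bigr)\f{\p\beta_i}{\p u^k_{(s)}}.
$$

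Next I would compute $\{\{\alpha,\beta\},\gamma\}_i$ directly by substituting $\{\alpha,\beta\}$ for the first slot in \eqref{truePoisson1}. This produces two families of terms: those in which $\p_x^s((P\gamma)^k)$ hits the $\alpha$- or $\beta$-derivative factor of $\{\alpha,\beta\}_i$, and those in which a $P$-operator acts on the whole bracket $\{\alpha,\beta\}$. The key algebraic input I would isolate and prove as a sublemma is a \emph{Leibniz-type rule} for how $P$ interacts with the bracket components, namely an identity expressing $(P\{\alpha,\beta\})^k$ (equivalently $\p_x$ applied after $P$) in terms of $(P\alpha)$, $(P\beta)$ and their jet-derivatives; this is the infinite-dimensional shadow of the fact that $\alpha\mapsto P\alpha$ is the anchor of a Lie algebroid. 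Property (3) announced in the excerpt---that $P$ is an anti-homomorphism onto evolutionary vector fields---is exactly the statement $P\{\alpha,\beta\}=-[P\alpha,P\beta]$, and I would lean on it heavily, since the Jacobi identity for $\{\cdot,\cdot\}$ on $1$-forms then reduces, after applying $P$, to the Jacobi identity for the Lie commutator \eqref{commutator} of evolutionary vector fields, which holds automatically.

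Concretely I would organize the proof in two steps. First, establish that $P$ is an anti-homomorphism of Lie algebras (this may itself be stated/proved just after the current proposition, so I would either assume it or prove it here by a direct comparison of \eqref{truePoisson1} under $P$ with \eqref{commutator}). Second, use injectivity considerations: the anchor map $P$ is not injective on $1$-forms, so knowing that the image of the Jacobiator under $P$ vanishes is not by itself enough. To close the gap I would verify directly that each cyclic sum of the ``inner-derivative'' terms---those where $\p_x^s((P\gamma)^k)\,\p/\p u^k_{(s)}$ differentiates the explicit $\p\alpha_i/\p u^l_{(t)}$ factor---cancels in cyclic pairs, using the commutation relation \eqref{auxiliary} between $\p/\p u^i_{(s)}$ and $\p_x$ together with the flatness (constancy of $\eta^{kl}$) to freely move $\eta^{kl}$ past derivatives, and discarding total $x$-derivatives as permitted in reduced form. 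The main obstacle I anticipate is precisely this last accounting: ensuring that the ``residual'' terms generated when $P$ acts through the total derivative $\p_x$ (rather than symmetrically) reorganize into the correct commutator structure plus an exact total derivative, exactly as in the single-bracket computation \eqref{poissoncoordinatesf} where residual terms assembled into $\p_x\{\cdots\}$. Carrying this cancellation through the triple cyclic sum, rather than a single bracket, is where the combinatorics become delicate, and I would rely on the anti-homomorphism property to avoid expanding everything by brute force.
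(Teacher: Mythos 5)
Your argument, as written, does not close. You correctly observe that applying $P$ to the Jacobiator and invoking the Jacobi identity for evolutionary vector fields is insufficient because $P$ has a kernel, but the fallback you offer in its place --- ``verify directly that each cyclic sum of the inner-derivative terms cancels in cyclic pairs'' --- is a plan, not a proof, and it never isolates the identity on which the whole cancellation hinges. Concretely, when one expands $J(\alpha,\beta,\gamma)_i$ the terms quadratic in the jet-derivatives of a single component (e.g.\ $\p^2\alpha_i/\p u^k_{(s)}\p u^p_{(t)}$) cancel by a symmetry/antisymmetry argument, while the terms linear in $\p\alpha_i/\p u^k_{(s)}$ cancel only because of the nontrivial combinatorial identity
$\p_x^{s+1}\bigl[(\p_x^{t+1}\gamma_q)\f{\p\beta_l}{\p u^p_{(t)}}\bigr]=(\p_x^{t+1}\gamma_q)\f{\p}{\p u^p_{(t)}}(\p_x^{s+1}\beta_l)$
(summed over $s,t$), which is the content of \eqref{term4} and requires the commutation rule \eqref{identity} together with a binomial resummation --- this is precisely where the paper's proof spends all its effort. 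Your appeal to \eqref{auxiliary} alone does not reach this. A further caution: you propose ``discarding total $x$-derivatives'' of the components, but a reduced $1$-form $\int dx\wedge f_i\,\delta u^i$ is \emph{not} zero when $f_i=\p_x g_i$; only full elements of $\mathrm{Im}(d)$ in ${\cal A}_{1,1}$ may be discarded. Fortunately no quotienting is needed here, since \eqref{truePoisson1} gives the components of the iterated bracket exactly and the Jacobiator vanishes identically.

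That said, your instinct to lean on the anti-homomorphism \eqref{trueantihomomorphism} can be turned into a complete proof --- and a cleaner one than the paper's direct expansion --- if it is applied to the operators inside the bracket rather than to the Jacobiator. Writing $D_\beta:=\p_x^s\bigl((P\beta)^k\bigr)\f{\p}{\p u^k_{(s)}}$ for the prolongation of $P\beta$, formula \eqref{truePoisson1} reads $\{\alpha,\beta\}_i=D_\beta(\alpha_i)-D_\alpha(\beta_i)$, and a two-line computation gives
\begin{equation*}
J(\alpha,\beta,\gamma)_i=\bigl(D_{\{\beta,\gamma\}}+[D_\beta,D_\gamma]\bigr)(\alpha_i)+\bigl(D_{\{\gamma,\alpha\}}+[D_\gamma,D_\alpha]\bigr)(\beta_i)+\bigl(D_{\{\alpha,\beta\}}+[D_\alpha,D_\beta]\bigr)(\gamma_i).
\end{equation*}
Each summand vanishes: Proposition \ref{antih} gives $P\{\beta,\gamma\}=-[P\beta,P\gamma]$, and prolongation is a Lie algebra homomorphism from evolutionary vector fields with bracket \eqref{commutatorvector} to derivations of ${\cal A}$ (because prolonged evolutionary fields commute with $\p_x$ --- the same fact underlying \eqref{term4}), so $D_{\{\beta,\gamma\}}=-[D_\beta,D_\gamma]$. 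No injectivity of $P$ is required, and there is no circularity, since the paper proves Proposition \ref{antih} by a direct computation independent of the Jacobi identity. To make your proposal into a proof you must either carry out this decomposition explicitly or supply the combinatorial identity \eqref{term4}; as submitted, neither is done.
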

\proof
We compute the $i$-th component of $J(\alpha, \beta, \gamma):=\{\alpha, \{\beta,\gamma\}\}+\{\beta, \{\gamma,\alpha\}\}+\{\gamma, \{\alpha,\beta\}\}$. 
We have 
\beq\label{trueJacobi2}\nonumber
\begin{split}
J(\alpha, \beta, \gamma)_i=\eta^{kl}\left(\p_x^{s+1}\{\beta, \gamma\}_l\right)\f{\p \alpha_i}{\p u^k_{(s)}}-\eta^{kl}\left(\p_x^{s+1}\alpha_l \right)\f{\p \{\beta, \gamma\}_i}{\p u^k_{(s)}}\\
+ \eta^{kl}\left(\p_x^{s+1}\{\gamma, \alpha\}_l\right)\f{\p \beta_i}{\p u^k_{(s)}}-\eta^{kl}\left(\p_x^{s+1}\beta_l \right)\f{\p \{\gamma, \alpha\}_i}{\p u^k_{(s)}}\\
+\eta^{kl}\left(\p_x^{s+1}\{\alpha, \beta\}_l\right)\f{\p \gamma_i}{\p u^k_{(s)}}-\eta^{kl}\left(\p_x^{s+1}\gamma_l \right)\f{\p \{\alpha, \beta\}_i}{\p u^k_{(s)}}.
\end{split}
\eeq
Substituting the Poisson brackets  appearing in the previous expression with their formulas in flat coordinates and further expanding $J(\alpha, \beta, \gamma)_i$ we obtain 
\beq\nonumber
\begin{split}
J(\alpha, \beta, \gamma)_i=\eta^{kl}\eta^{pq}\left\{ \p^{s+1}_x\left[\left(\p_x^{t+1}\gamma_q \right)\f{\p \beta_l}{\p u^p_{(t)}}-\left(\p_x^{t+1}\beta_q\right)\f{\p \gamma_l}{\p u^p_{(t)}} \right]\f{\p \alpha_i}{\p u^k_{(s)}} \right. \\
-\left(\p_x^{s+1}\alpha_l\right)\f{\p }{\p u^k_{(s)}}\left(\p_x^{t+1}\gamma_q \right)\f{\p \beta_i}{\p u^{p}_{(t)}} -\left(\p_x^{s+1}\alpha_l\right)\left(\p_x^{t+1}\gamma_q\right)\f{\p^2 \beta_i}{\p u^k_{(s)}\p u^p_{(t)}}\\
+(\p_x^{s+1}\alpha_l )\f{\p }{\p u^k_{(s)}}\left(\p_x^{t+1}\beta_q\right)\f{\p \gamma_i}{\p u^p_{(t)}}+\left(\p_x^{s+1}\alpha_l \right)\left(\p_x^{t+1}\beta_q\right)\f{\p^2 \gamma_i}{\p u^p_{(t)} \p u^k_{(s)}}\\
+\p^{s+1}_x\left[\left(\p_x^{t+1}\alpha_q \right)\f{\p \gamma_l}{\p u^p_{(t)}}-\left(\p_x^{t+1}\gamma_q\right)\f{\p \alpha_l}{\p u^p_{(t)}} \right]\f{\p \beta_i}{\p u^k_{(s)}}\\
-\left(\p_x^{s+1}\beta_l\right)\f{\p }{\p u^k_{(s)}}\left(\p_x^{t+1}\alpha_q \right)\f{\p \gamma_i}{\p u^{p}_{(t)}} -\left(\p_x^{s+1}\beta_l\right)\left(\p_x^{t+1}\alpha_q\right)\f{\p^2 \gamma_i}{\p u^k_{(s)}\p u^p_{(t)}}\\
+(\p_x^{s+1}\beta_l )\f{\p }{\p u^k_{(s)}}\left(\p_x^{t+1}\gamma_q\right)\f{\p \alpha_i}{\p u^p_{(t)}}+\left(\p_x^{s+1}\beta_l \right)\left(\p_x^{t+1}\gamma_q\right)\f{\p^2 \alpha_i}{\p u^p_{(t)} \p u^k_{(s)}} \\
+\p^{s+1}_x\left[\left(\p_x^{t+1}\beta_q \right)\f{\p \alpha_l}{\p u^p_{(t)}}-\left(\p_x^{t+1}\alpha_q\right)\f{\p \beta_l}{\p u^p_{(t)}} \right]\f{\p \gamma_i}{\p u^k_{(s)}}\\
-\left(\p_x^{s+1}\gamma_l\right)\f{\p }{\p u^k_{(s)}}\left(\p_x^{t+1}\beta_q \right)\f{\p \alpha_i}{\p u^{p}_{(t)}} -\left(\p_x^{s+1}\gamma_l\right)\left(\p_x^{t+1}\beta_q\right)\f{\p^2 \alpha_i}{\p u^k_{(s)}\p u^p_{(t)}}\\
\left.+(\p_x^{s+1}\gamma_l )\f{\p }{\p u^k_{(s)}}\left(\p_x^{t+1}\alpha_q\right)\f{\p \beta_i}{\p u^p_{(t)}}+\left(\p_x^{s+1}\gamma_l \right)\left(\p_x^{t+1}\alpha_q\right)\f{\p^2 \beta_i}{\p u^p_{(t)} \p u^k_{(s)}}\right\}.
\end{split}
\eeq
Let us focus our attention on the terms involving second derivatives of $\alpha_i$. 
We have 
\beq\label{term1}\eta^{kl}\eta^{pq}\left\{ \left(\p_x^{s+1}\beta_l\right)\left( \p_x^{t+1}\gamma_q\right)- \left(\p_x^{s+1}\gamma_l\right)\left( \p_x^{t+1}\beta_q\right)\right\}\f{\p^2 \alpha_i}{\p u^k_{(s)} \p u^p_{(t)}},\eeq
which can be written as $\eta^{kl}\eta^{pq}T_{ikplq}$, where 
$$T_{ikplq}:=\left\{ \left(\p_x^{s+1}\beta_l\right)\left( \p_x^{t+1}\gamma_q\right)- \left(\p_x^{s+1}\gamma_l\right)\left( \p_x^{t+1}\beta_q\right)\right\}\f{\p^2 \alpha_i}{\p u^k_{(s)} \p u^p_{(t)}}.$$
Now observe that $T_{ipklq}=T_{ikplq}=-T_{ipkql}$, so $T$ is symmetric under exchange of $p$ and $k$ and anti-symmetric under exchange of $q$ and $l$. 
Therefore 
$$\eta^{kl}\eta^{pq}T_{ipklq}=-\eta^{kl}\eta^{pq}T_{ipkql}=-\eta^{kl}\eta^{pq}T_{ikpql}=-\eta^{pq}\eta^{kl}T_{ipklq},$$
where in the last equality we have renamed the summed indices. Therefore, the expression \eqref{term1} is zero. Analogously for the terms one obtains collecting the second derivatives in $\beta_i$ and $\gamma_i$. 

It remains to deal with term containing the first derivatives in $\alpha_i$ (a completely analogous computation will show that also the term containing first derivatives of $\beta_i$ and $\gamma_i$ will indeed vanish and it will be skipped). 
The term we are interested in is given by:
\beq\label{term2}
\begin{split}
\eta^{kl}\eta^{pq}\left\{ \p^{s+1}_x\left[\left(\p_x^{t+1}\gamma_q \right)\f{\p \beta_l}{\p u^p_{(t)}}-\left(\p_x^{t+1}\beta_q\right)\f{\p \gamma_l}{\p u^p_{(t)}} \right]\f{\p \alpha_i}{\p u^k_{(s)}}\right. \\
\left.+(\p_x^{s+1}\beta_l )\f{\p }{\p u^k_{(s)}}\left(\p_x^{t+1}\gamma_q\right)\f{\p \alpha_i}{\p u^p_{(t)}}-\left(\p_x^{s+1}\gamma_l\right)\f{\p }{\p u^k_{(s)}}\left(\p_x^{t+1}\beta_q \right)\f{\p \alpha_i}{\p u^{p}_{(t)}} \right\}.
\end{split}
\eeq
In the last two terms, we rename summed indices in order to get 
\beq\label{term3}
\begin{split}
\eta^{kl}\eta^{pq}\f{\p \alpha_i}{\p u^k_{(s)}}\left\{ \p^{s+1}_x\left[\left(\p_x^{t+1}\gamma_q \right)\f{\p \beta_l}{\p u^p_{(t)}}-\left(\p_x^{t+1}\beta_q\right)\f{\p \gamma_l}{\p u^p_{(t)}} \right]\right. \\
\left.+(\p_x^{t+1}\beta_q )\f{\p }{\p u^p_{(t)}}\left(\p_x^{s+1}\gamma_l\right)-\left(\p_x^{t+1}\gamma_q\right)\f{\p }{\p u^p_{(t)}}\left(\p_x^{s+1}\beta_l \right) \right\}.
\end{split}
\eeq
It is immediate to see that \eqref{term3} vanishes identically if 
\beq\label{term4}
\begin{split}
 \p^{s+1}_x\left[\left(\p_x^{t+1}\gamma_q \right)\f{\p \beta_l}{\p u^p_{(t)}}\right]=\left(\p_x^{t+1}\gamma_q\right)\f{\p }{\p u^p_{(t)}}\left(\p_x^{s+1}\beta_l \right),\\
 \p^{s+1}_x\left[\left(\p_x^{t+1}\beta_q\right)\f{\p \gamma_l}{\p u^p_{(t)}} \right]=\left(\p_x^{t+1}\beta_q \right)\f{\p }{\p u^p_{(t)}}\left(\p_x^{s+1}\gamma_l\right).
\end{split}
\eeq
Obviously, it is sufficient to prove the first of \eqref{term4}. 
We expand the left hand side of the first of \eqref{term4} using the binomial formula as 
\beq\label{term5}
\begin{split}
\sum_{s\geq 0,\, t\geq0} \sum_{l=0}^{s+1}\binom{s+1}{l}\left( \p_x^{t+1+l}\gamma_q\right)\p_x^{s+1-l}\left( \f{\p \beta_l}{\p u^p_{(t)}}\right)\\
=\sum_{s\geq 0} \sum_{l=0}^{s+1}\binom{s+1}{l}\sum_{t\geq 0}\left( \p_x^{t+1+l}\gamma_q\right)\p_x^{s+1-l}\left( \f{\p \beta_l}{\p u^p_{(t)}}\right).
\end{split}
\eeq
To facilitate the comparison with other terms, we split the second line in \eqref{term5} as follows:
\beq\label{termauxiliary1}
\begin{split}
\sum_{s\geq 0} \sum_{l=0}^{s}\binom{s+1}{l}\sum_{t\geq 0}\left( \p_x^{t+1+l}\gamma_q\right)\p_x^{s+1-l}\left( \f{\p \beta_l}{\p u^p_{(t)}}\right)\\
+\sum_{s\geq 0} \sum_{t\geq 0}\left( \p_x^{t+1+s+1}\gamma_q\right)\left( \f{\p \beta_l}{\p u^p_{(t)}}\right).
\end{split}
\eeq
Now we rewrite \eqref{termauxiliary1} splitting its first term, separating the sum in $t$, in the following way:
\beq\label{termauxiliary2}
\begin{split}
\sum_{s\geq 0} \sum_{l=0}^{s}\binom{s+1}{l}\sum_{t\geq s+1-l}\left( \p_x^{t+1+l}\gamma_q\right)\p_x^{s+1-l}\left( \f{\p \beta_l}{\p u^p_{(t)}}\right)\\
+\sum_{s\geq 0} \sum_{l=0}^{s}\binom{s+1}{l}\sum_{0\leq t<s+1-l}\left( \p_x^{t+1+l}\gamma_q\right)\p_x^{s+1-l}\left( \f{\p \beta_l}{\p u^p_{(t)}}\right)\\
+\sum_{s\geq 0} \sum_{t\geq 0}\left( \p_x^{t+1+s+1}\gamma_q\right)\left( \f{\p \beta_l}{\p u^p_{(t)}}\right).
\end{split}
\eeq
Putting together the first and the last line in \eqref{termauxiliary2} and rewriting the second line we get that the left hand side of the first of \eqref{term4} is given by:
\beq\label{term5.2}
\begin{split}
\sum_{s\geq 0} \sum_{l=0}^{s+1}\binom{s+1}{l}\sum_{t\geq s+1-l}\left( \p_x^{t+1+l}\gamma_q\right)\p_x^{s+1-l}\left( \f{\p \beta_l}{\p u^p_{(t)}}\right)\\
+\sum_{s\geq 0} \sum_{l=0}^{s}\binom{s+1}{l}\sum_{0\leq t< s+1-l}\left( \p_x^{t+1+l}\gamma_q\right)\p_x^{s+1-l}\left( \f{\p \beta_l}{\p u^p_{(t)}}\right).
\end{split}
\eeq
%where in the second term in \eqref{term5.2} the sum over $l$ runs from zero to $s$ necessarily because of the constraint that $t\geq0$. 
 We recall the following identity which can be easily proved by induction 
\beq\label{identity}
\f{\p }{ \p u^p_{(t)}}\circ \p^n_x=\sum_{l=\max\{0,n-t\}}^{n} \binom{n}{l}\p_x^l \circ \f{\p }{\p u^p_{(t-n+l)}}.
\eeq
To expand the right hand side of the first of \eqref{term4} we use \eqref{identity} and get
\beq\label{term6}
\left(\p_x^{t+1}\gamma_q\right)\f{\p }{\p u^p_{(t)}}\left(\p_x^{s+1}\beta_l \right)=\sum_{s\geq 0, \, t\geq 0}\sum_{l=\max\{0, s+1-t\}}^{s+1}\binom{s+1}{l}\left( \p_x^{t+1}\gamma_q  \right)\p_x^l \left(\f{\p \beta_l}{\p u^p_{(t-s-1+l)}}\right).
\eeq
We split \eqref{term6} into two pieces, according if $t\geq s+1$ or $t\leq s$. 
We obtain 
\beq\label{term7}
\begin{split}
\left(\p_x^{t+1}\gamma_q\right)\f{\p }{\p u^p_{(t)}}\left(\p_x^{s+1}\beta_l \right)=\sum_{s\geq 0}\sum_{t\geq s+1} \sum_{l=0}^{s+1}\binom{s+1}{l}\left( \p_x^{t+1}\gamma_q  \right)\p_x^l \left(\f{\p \beta_l}{\p u^p_{(t-s-1+l)}}\right)\\
+\sum_{s\geq 0}\sum_{t=0}^s \sum_{l=s+1-t}^{s+1}\binom{s+1}{l}\left( \p_x^{t+1}\gamma_q  \right)\p_x^l \left(\f{\p \beta_l}{\p u^p_{(t-s-1+l)}}\right).
\end{split}
\eeq
Defining the new index $l':=s+1-l$ we obtain 
\beq\label{term8}
\begin{split}
\left(\p_x^{t+1}\gamma_q\right)\f{\p }{\p u^p_{(t)}}\left(\p_x^{s+1}\beta_l \right)=\sum_{s\geq 0}\sum_{t\geq s+1} \sum_{l'=0}^{s+1}\binom{s+1}{l'}\left( \p_x^{t+1}\gamma_q  \right)\p_x^{s+1-l'} \left(\f{\p \beta_l}{\p u^p_{(t-l')}}\right)\\
+\sum_{s\geq 0}\sum_{t=0}^s \sum_{l'=0}^{t}\binom{s+1}{l'}\left( \p_x^{t+1}\gamma_q  \right)\p_x^{s+1-l'} \left(\f{\p \beta_l}{\p u^p_{(t-l')}}\right).
\end{split}
\eeq
We can rewrite the first term of the right hand side of \eqref{term8} as 
$$\sum_{s\geq 0}\sum_{t\geq s+1} \sum_{l'=0}^{s+1}\binom{s+1}{l'}\left( \p_x^{t+1}\gamma_q  \right)\p_x^{s+1-l'} \left(\f{\p \beta_l}{\p u^p_{(t-l')}}\right)=$$
$$=\sum_{s\geq 0}\sum_{l'=0}^{s+1}\binom{s+1}{l'}\sum_{t\geq s+1}\left( \p_x^{t+1}\gamma_q  \right)\p_x^{s+1-l'} \left(\f{\p \beta_l}{\p u^p_{(t-l')}}\right),$$
and defining $t'=t-l'$ we obtain 
$$\sum_{s\geq 0}\sum_{l'=0}^{s+1}\binom{s+1}{l'}\sum_{t\geq s+1}\left( \p_x^{t+1}\gamma_q  \right)\p_x^{s+1-l'} \left(\f{\p \beta_l}{\p u^p_{(t-l')}}\right)=$$
$$=\sum_{s\geq 0}\sum_{l'=0}^{s+1}\binom{s+1}{l'}\sum_{t'\geq s+1-l'}\left( \p_x^{t'+1+l'}\gamma_q  \right)\p_x^{s+1-l'} \left(\f{\p \beta_l}{\p u^p_{(t')}}\right), $$
which we recognize as the first of the terms in \eqref{term5.2}.

Now we rewrite the second term of the right hand side of \eqref{term8} as follows:
$$\sum_{s\geq 0}\sum_{t=0}^s \sum_{l'=0}^{t}\binom{s+1}{l'}\left( \p_x^{t+1}\gamma_q  \right)\p_x^{s+1-l'} \left(\f{\p \beta_l}{\p u^p_{(t-l')}}\right)$$
$$=\sum_{s\geq 0}\sum_{l'=0}^s\sum_{t\geq l'}^s\binom{s+1}{l'}\left( \p_x^{t+1}\gamma_q  \right)\p_x^{s+1-l'} \left(\f{\p \beta_l}{\p u^p_{(t-l')}}\right).$$
Again defining $t'=t-l'$ we can rewrite the previous expression as 
$$\sum_{s\geq 0}\sum_{l'=0}^s\sum_{0\leq t'<s+1-l'}\binom{s+1}{l'}\left( \p_x^{t'+1+l'}\gamma_q  \right)\p_x^{s+1-l'} \left(\f{\p \beta_l}{\p u^p_{(t')}}\right),$$
which we recognize as the second of the terms in \eqref{term5.2}. This proves that the first of \eqref{term4} is indeed an identity. 

Therefore the term containing the first derivatives of $\alpha_i$ vanishes and similarly for the terms containing the first derivatives of $\beta_i$ and $\gamma_i$. 
The Jacobi identity is therefore proved. 
\endproof

\begin{corollary}\label{Liealgebraforms}
The bracket \eqref{truePoisson1} equips the vector space of $1$-forms $\Lambda_1$ with a Lie algebra structure. 
\end{corollary}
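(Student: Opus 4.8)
The plan is to obtain the Lie algebra structure by checking the three defining axioms directly against the coordinate formula \eqref{truePoisson1}, observing that the only substantial one is already in hand. A Lie bracket requires $\mathbb{R}$-bilinearity, skew-symmetry, and the Jacobi identity; the last of these is exactly Proposition \ref{trueJacobi1}, so the remaining work is elementary inspection.

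First I would observe that the bracket is $\mathbb{R}$-bilinear. In \eqref{truePoisson1} the form $\alpha$ enters only through the $\mathbb{R}$-linear operations $\alpha_l\mapsto\p_x^{s+1}\alpha_l$ and $\alpha_i\mapsto\f{\p\alpha_i}{\p u^k_{(s)}}$, each summand being of first order in $\alpha$; the same holds for $\beta$, and multiplication by the constant matrix $\eta^{kl}$ is linear. Hence the bracket is separately $\mathbb{R}$-linear in each argument.

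Next I would read off skew-symmetry. Interchanging $\alpha$ and $\beta$ in \eqref{truePoisson1} exchanges the two summands
\[
\eta^{kl}(\p_x^{s+1}\beta_l)\f{\p\alpha_i}{\p u^k_{(s)}}
\quad\text{and}\quad
\eta^{kl}(\p_x^{s+1}\alpha_l)\f{\p\beta_i}{\p u^k_{(s)}},
\]
so that $\{\alpha,\beta\}_i=-\{\beta,\alpha\}_i$; hence $\{\alpha,\beta\}=-\{\beta,\alpha\}$ as elements of $\Lambda_1$. Combining $\mathbb{R}$-bilinearity, skew-symmetry, and the Jacobi identity of Proposition \ref{trueJacobi1} then endows $(\Lambda_1,\{\cdot,\cdot\})$ with the structure of a Lie algebra.

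I do not expect any genuine obstacle here: the deep computation was the Jacobi identity, and the two remaining axioms are visible from \eqref{truePoisson1}. The only point worth keeping in mind is that all these identities are to be read modulo total $x$-derivatives, i.e.\ as equalities in $\Lambda_1$ rather than in $\mathcal{A}_{1,0}$; but this is precisely the level at which $\{\alpha,\beta\}$ was already shown to be a well-defined reduced $1$-form, so no further verification is required.
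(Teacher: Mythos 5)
Your proposal is correct and follows essentially the same route as the paper: the paper's proof likewise just notes skew-symmetry and $\mathbb{R}$-bilinearity as evident from the formula and invokes Proposition \ref{trueJacobi1} for the Jacobi identity. You merely spell out the two easy axioms in slightly more detail, which changes nothing of substance.
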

\proof 
Clearly $\{\alpha, \beta\}=-\{\beta, \alpha\}$ and $\{\cdot, \cdot\}$ is $\mathbb{R}$-bilinear. By Proposition \ref{trueJacobi1} it also fulfills Jacobi identity. 
\endproof

Let us recall that the vector space of evolutionary vector fields $\Lambda^1_{\mathrm{ev}}$ is naturally equipped with the Lie product given by the commutator: 
\beq\label{commutatorvector}
[\xi, \eta]^p=\xi^i_{(s)}\f{\p }{\p u^i_{(s)}}\eta^p-\eta^i_{(s)}\f{\p }{\p u^i_{(s)}}\xi^p,
\eeq 
where $\xi^i_{(s)}:=\p_x^s(\xi^i)$ and $\xi^i$ is the $i$-th component of the evolutionary vector field $\xi$. It is known that $(\Lambda^1_{\mathrm{ev}}, [\cdot, \cdot])$ is a Lie algebra.

\begin{proposition}\label{antih}
 The Poisson structure $P$ sending $1$-forms to evolutionary vector fields satisfies the identity \beq\label{trueantihomomorphism} P\{\alpha, \beta\}=-[P\alpha, P\beta].\eeq 
 Therefore $P$ is an (anti)-homomorphism of Lie algebras, $P: (\Lambda_1, \{ \cdot, \cdot\})\to (\Lambda^1_{\mathrm{ev}}, [\cdot, \cdot])$.
 \end{proposition}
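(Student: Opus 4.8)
The plan is to establish \eqref{trueantihomomorphism} in flat coordinates and then invoke coordinate-independence. Both $P\{\alpha,\beta\}$ and $[P\alpha,P\beta]$ are evolutionary vector fields, and such a field is completely determined by its base components (the coefficients $\xi^p$ of $\partial/\partial u^p$, since the remaining coefficients are $\partial_x^s\xi^p$); hence it suffices to match these. In flat coordinates $g^{ij}=\eta^{ij}$ is constant and the Christoffel symbols vanish, so \eqref{vectorfield3components} reduces to $(P\gamma)^i=\eta^{ij}\partial_x\gamma_j$ for every $1$-form $\gamma=\gamma_j\delta u^j$, and by constancy of $\eta$ its prolongations are $(P\gamma)^k_{(s)}:=\partial_x^s(P\gamma)^k=\eta^{kl}\partial_x^{s+1}\gamma_l$.

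The first step is to recast the bracket in operatorial form. Feeding the identity $\eta^{kl}\partial_x^{s+1}\gamma_l=(P\gamma)^k_{(s)}$ into \eqref{truePoisson1}, the $j$-th component of $\{\alpha,\beta\}$ becomes
\beq\label{opform}
\{\alpha,\beta\}_j=(P\beta)^k_{(s)}\frac{\partial\alpha_j}{\partial u^k_{(s)}}-(P\alpha)^k_{(s)}\frac{\partial\beta_j}{\partial u^k_{(s)}}=(P\beta)(\alpha_j)-(P\alpha)(\beta_j),
\eeq
where $(P\beta)(\cdot)$ and $(P\alpha)(\cdot)$ denote the action of the evolutionary vector fields $P\beta,P\alpha$, regarded as the derivations $\xi^i_{(s)}\,\partial/\partial u^i_{(s)}$, on the differential functions $\alpha_j,\beta_j$.

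The second and main step is the lemma that any evolutionary vector field $\xi$ commutes with the total $x$-derivative, i.e. $\xi(\partial_x f)=\partial_x(\xi f)$ for all $f\in\mathcal{A}$. I would prove this from $\partial_x\xi^i_{(s)}=\xi^i_{(s+1)}$ together with the commutation rule \eqref{auxiliary}: applying \eqref{auxiliary} to $\xi(\partial_x f)$ produces a correction term $\xi^i_{(s)}\,\partial f/\partial u^i_{(s-1)}$ which, after reindexing $s\mapsto s+1$, matches precisely the term $\xi^i_{(s+1)}\,\partial f/\partial u^i_{(s)}$ generated by $\partial_x$ hitting $\xi^i_{(s)}$, while the $s=0$ boundary contribution vanishes. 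This commutation property is the only nonroutine ingredient, and it is exactly what makes the two sides of \eqref{trueantihomomorphism} align.

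Finally I would assemble the base components. From \eqref{opform}, the base component of $P\{\alpha,\beta\}$ is $\eta^{pj}\partial_x\{\alpha,\beta\}_j=\eta^{pj}\partial_x\!\left[(P\beta)(\alpha_j)-(P\alpha)(\beta_j)\right]$. On the other hand, writing $A^p:=(P\alpha)^p$ and $B^p:=(P\beta)^p$, formula \eqref{commutatorvector} gives $[P\alpha,P\beta]^p=(P\alpha)(B^p)-(P\beta)(A^p)$; using $B^p=\eta^{pj}\partial_x\beta_j$, the constancy of $\eta$, and the commutation lemma, one gets $(P\alpha)(B^p)=\eta^{pj}\partial_x\!\left[(P\alpha)(\beta_j)\right]$ and likewise $(P\beta)(A^p)=\eta^{pj}\partial_x\!\left[(P\beta)(\alpha_j)\right]$, whence $[P\alpha,P\beta]^p=\eta^{pj}\partial_x\!\left[(P\alpha)(\beta_j)-(P\beta)(\alpha_j)\right]$. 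Comparing the two displays shows they differ precisely by an overall sign, which proves \eqref{trueantihomomorphism}; the anti-homomorphism of Lie algebras then follows at once, since both brackets are $\mathbb{R}$-bilinear and $P$ is $\mathbb{R}$-linear.
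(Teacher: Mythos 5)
Your proof is correct, and while it is still a direct verification in flat coordinates like the paper's, it is organized around a genuinely different pivot. The paper expands $(P\{\alpha,\beta\})^i=\eta^{ik}\p_x\{\alpha,\beta\}_k$ and $-[P\alpha,P\beta]^i$ explicitly, splits the sums at $s=0$ versus $s\geq 1$, and matches the four resulting families of terms by hand using $\f{\p}{\p u^p_{(s)}}\circ\p_x=\p_x\circ\f{\p}{\p u^p_{(s)}}+\f{\p}{\p u^p_{(s-1)}}$. You instead isolate the one statement doing all the work --- an evolutionary vector field commutes with the total derivative $\p_x$ --- after which the verification collapses: the bracket becomes $\{\alpha,\beta\}_j=(P\beta)(\alpha_j)-(P\alpha)(\beta_j)$, the commutator is $[P\alpha,P\beta]^p=(P\alpha)\bigl((P\beta)^p\bigr)-(P\beta)\bigl((P\alpha)^p\bigr)$ by \eqref{commutatorvector}, and the lemma lets $\p_x$ pass outward to produce exactly the opposite sign. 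Your sketch of the lemma is sound: the reindexed correction term coming from \eqref{auxiliary} cancels against the term produced by $\p_x$ hitting $\xi^i_{(s)}$, which is where the hypothesis $\xi^{i,s}=\p_x^s\xi^i$ enters. The only point you pass over silently is that $[P\alpha,P\beta]$ is itself evolutionary, so that \eqref{commutatorvector} applies and the base components suffice to identify it; this is standard and in fact follows from the same commutation lemma, but it deserves a sentence. The trade-off is clear: the paper's version is self-contained bookkeeping, while yours names the cancellation mechanism once instead of rediscovering it index by index, and makes visible why the identity \eqref{trueantihomomorphism} holds.
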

\proof
We prove the claim in flat coordinates. We have:
$$(P\{\alpha, \beta\})^i=\eta^{ik}\p_x \{\alpha, \beta\}_k=\eta^{ik}\eta^{pl}\p_x\left((\p_x^{s+1}\beta_l)\f{\p\alpha_k}{\p u^p_{(s)}}-(\p_x^{s+1}\alpha_l)\f{\p \beta_k}{\p u^p_{(s)}} \right),$$
and further expanding
\beq\label{lefthandside}
\begin{split}
\eta^{ik}\eta^{pl}\sum_{s\geq 0}\left((\p_x^{s+2}\beta_l)\f{\p\alpha_k}{\p u^p_{(s)}}-(\p_x^{s+2}\alpha_l)\f{\p \beta_k}{\p u^p_{(s)}}\right)\\
+\eta^{ik}\eta^{pl}\sum_{s\geq 0}\left((\p_x^{s+1}\beta_l)\p_x\f{\p\alpha_k}{\p u^p_{(s)}}-(\p_x^{s+1}\alpha_l)\p_x\f{\p \beta_k}{\p u^p_{(s)}}\right),
\end{split}
\eeq
where we have explicitly inserted the summation symbol to make comparison with the next expression easier. 
On the other hand 
$$-[P\alpha, P\beta]^i=-\left((P\alpha)^p_{(s)}\f{\p }{\p u^p_{(s)}}(P\beta)^i-(P\beta)^p_{(s)}\f{\p }{\p u^p_{(s)}}(P\alpha)^i\right)$$
$$=-\left(\p_x^{s}\left( \eta^{pl}\p_x \alpha_l\right)\f{\p }{\p u^p_{(s)}}\left(\eta^{ik}\p_x \beta_k\right)-\p_x^{s}\left( \eta^{pl}\p_x \beta_l\right)\f{\p }{\p u^p_{(s)}}\left(\eta^{ik}\p_x \alpha_k\right)\right).$$
Using \eqref{identity}, this last expression is equal to 
$$-\eta^{ik}\eta^{pl}\sum_{s\geq 1}\left((\p_x^{s+1} \alpha_l)\left( \p_x\f{\p \beta_k}{\p u^p_{(s)}}+\f{\p \beta_k}{\p u^p_{(s-1)}}\right)-(\p_x^{s+1} \beta_l)\left(\p_x\f{\p \alpha_k}{\p u^p_{(s)}}+\f{\p \alpha_k}{\p u^p_{(s-1)}}\right)\right)$$
$$-\eta^{ik}\eta^{pl}\left((\p_x \alpha_l) \p_x\f{\p \beta_k}{\p u^p_{(0)}}-(\p_x \beta_l)\p_x\f{\p \alpha_k}{\p u^p_{(0)}}\right),$$
which we can rearrange 
as 
\beq\label{righthandside}
\begin{split}-\eta^{ik}\eta^{pl}\sum_{s\geq 0}\left((\p_x^{s+1} \alpha_l)\left( \p_x\f{\p \beta_k}{\p u^p_{(s)}}\right)-(\p_x^{s+1} \beta_l)\left(\p_x\f{\p \alpha_k}{\p u^p_{(s)}}\right)\right)\\
-\eta^{ik}\eta^{pl}\sum_{s\geq 0}\left((\p_x^{s+2} \alpha_l)\left( \f{\p \beta_k}{\p u^p_{(s)}}\right)-(\p_x^{s+2} \beta_l)\left(\f{\p \alpha_k}{\p u^p_{(s)}}\right)\right).
\end{split}
\eeq
Comparing \eqref{lefthandside} and \eqref{righthandside} we obtain $P\{\alpha, \beta\}=-[P\alpha, P\beta]$. 
\endproof

The following Proposition singles out the vector fields in $\Lambda^1_{\text{ev}}$ that are in the image of $P$. 
\begin{proposition}
Let $P:(\Lambda_1, \{ \cdot, \cdot\})\to (\Lambda^1_{\mathrm{ev}}, [\cdot, \cdot])$ be the Lie algebra (anti)-homomorphism given by the Poisson structure. Then the image of $P$ in $(\Lambda^1_{\mathrm{ev}}, [\cdot, \cdot])$ is a Lie subalgebra given by the evolutionary vector fields that are tangent to the symplectic leaves of $P$.
\end{proposition}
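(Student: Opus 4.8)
The plan is to prove the statement in two independent steps: first the algebraic fact that $\mathrm{Im}(P)$ is closed under the bracket, which is an immediate consequence of Proposition \ref{antih}, and then the geometric identification of $\mathrm{Im}(P)$ with the vector fields tangent to the symplectic leaves, which I would carry out in flat coordinates after determining the Casimirs of $P$. For the subalgebra claim, take $P\alpha$ and $P\beta$ in the image. Since $P$ is $\mathbb{R}$-linear, $\mathrm{Im}(P)$ is a linear subspace, and Proposition \ref{antih} gives $[P\alpha, P\beta] = -P\{\alpha,\beta\} = P(-\{\alpha,\beta\})$, so the commutator again lies in $\mathrm{Im}(P)$. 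Hence $\mathrm{Im}(P)$ is a Lie subalgebra of $(\Lambda^1_{\mathrm{ev}}, [\cdot,\cdot])$.

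For the geometric part I would first pin down the symplectic leaves by computing the Casimirs. Working in flat coordinates, where by \eqref{vectorfield3components} one has $(P\beta)^i = \eta^{ij}\partial_x\beta_j$, a functional $C = \int c\, dx$ is a Casimir iff $P\delta C = 0$, i.e. $\partial_x\bigl(\tfrac{\delta C}{\delta u^j}\bigr) = 0$ for every $j$. Since the total derivative of a differential polynomial vanishes only on constants, this forces each $\tfrac{\delta C}{\delta u^j}$ to be a constant $c_j$, so the Casimirs are spanned by $C_a = \int u^a\, dx$. The symplectic leaves are therefore the common level sets $\{u : \int_{S^1} u^a\, dx = \mathrm{const}\}$, and an evolutionary field $X$ is tangent to them precisely when it annihilates every Casimir; using the pairing formula \eqref{liederivativefunctional} this reads $X(C_a) = \int_{S^1} X^a\, dx = 0$ for all loops $u$.

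It then remains to match this tangency condition with membership in $\mathrm{Im}(P)$. In one direction, for $X = P\beta$ one has $X^a = \eta^{aj}\partial_x\beta_j = \partial_x(\eta^{aj}\beta_j)$, so $\int X^a\, dx = 0$ and $X$ is tangent to the leaves. Conversely, if $\int_{S^1} X^a\, dx = 0$ for all $u$, then the functional $\int X^a\, dx$ is the zero element of $\Lambda_0$, and by the total-derivative lemma recalled at the end of Section 2 there exist differential polynomials $Y^a$ with $X^a = \partial_x Y^a$; setting $\beta_j := \eta_{ji}Y^i$ yields a genuine reduced $1$-form with $(P\beta)^i = \eta^{ij}\partial_x\beta_j = \partial_x Y^i = X^i$, so $X = P\beta \in \mathrm{Im}(P)$. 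Since the image of $P$, the Casimirs, and the leaves are all coordinate independent, establishing these equalities in flat coordinates suffices.

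I expect the main obstacle to be this converse inclusion, and specifically the inversion of $\partial_x$: applying $\partial_x^{-1}$ to an arbitrary differential polynomial does not in general return a differential polynomial, so the passage from $X$ to $\beta$ is legitimate only because the tangency condition guarantees that each $X^a$ is a total $x$-derivative within the class $\mathcal{A}$. The point to argue carefully is that tangency to \emph{every} leaf — not merely to a single one at a fixed loop — is what forces $\int_{S^1} X^a\, dx = 0$ identically in $u$, and hence $X^a \in \mathrm{Im}(\partial_x)$; once this is in place, the Section-2 lemma supplies the $Y^a$ and the rest is the routine verification above.
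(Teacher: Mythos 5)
Your proof is correct and follows essentially the same route as the paper: closure under the bracket via Proposition \ref{antih}, tangency of $P\beta$ to the leaves because $\int \eta^{aj}\partial_x\beta_j\,dx$ is the integral of a total derivative, and the converse by writing each component $X^a$ as $\partial_x Y^a$ and setting $\beta_j=\eta_{ji}Y^i$. The only differences are that you spell out the computation of the Casimirs and the justification that $X^a\in\mathrm{Im}(\partial_x)$, which the paper treats as well known.
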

\proof
The fact that the image of $P$ in $(\Lambda^1_{\mathrm{ev}}, [\cdot, \cdot])$ is a Lie subalgebra is clear. Moreover, if $\alpha\in \Lambda_1$, $\alpha=\int dx\wedge \alpha_i \delta u^i$, then $P\alpha=\p_x^s\left(\eta^{kl}\p_x \alpha_l \right)\f{\p }{\p u^k_{(s)}}$ (in flat coordinates) which is an evolutionary vector field tangent to the symplectic leaves of $P$. To see this, it is sufficient to show that $(P\alpha)(f_i)=0$, where $f_i$ are the Casimirs of $P$, since the symplectic leaves are described by $\{f_1=c_1, \dots, f_n=c_n\}$, $c_1, \dots, c_n$ constants. It is well-known that the Casimirs of $P$ are given by the local functionals $f_i=\int u^i \, dx$ (here $u^1, \dots u^n$ are flat coordinates). Now the vector field $P\alpha$ applied to the functional $f_i$ is computed taking the pairing of $P\alpha$ with the exact $1$-form $\delta f_i$, $<P\alpha, \delta f_i>=\int dx\, 1\, \eta^{kl}\p_x \alpha_l$ which is clearly zero being a total derivative. 
Viceversa, we have to show that every evolutionary vector field $\xi$ tangent to the symplectic leaves of $P$ is of the form $P\alpha$ for some $\alpha$. Since $\xi$ is evolutionary, $\xi=\p_x^s \xi^j \f{\p }{\p u^j_{(s)}}$, for some functions $\xi^1, \dots, \xi^n$. Imposing that $\xi$ is tangent to the symplectic leaves of $P$, namely $<\xi, \delta f_i>=0$ for $i=1,\dots, n$ we obtain $\int dx\, \xi^j=0$ for all $j=1, \dots n$. Therefore, each $\xi^j$ is a total derivative $\xi^j=\p_x \zeta^j$. Since $\eta^{kl}$ is invertible, we can write $\zeta^j=\eta^{ji}\alpha_i$ for some $\alpha_i$. Therefore we obtain $\xi^j=\p_x (\eta^{ji}\alpha_i)=\eta^{ji}\p_x \alpha_i$ which proves the claim. 
\endproof

\section{$F$-manifolds and Poisson brackets on $1$-forms}
\begin{de}
An  $F$-manifold $M$ with compatible flat connection is a manifold endowed with a commutative associative product $\circ$ on vector fields and a symmetric flat connection $\nabla$ satisfying condition
$$\nabla_l c^i_{jk}=\nabla_j c^i_{lk},$$
where $c^i_{jk}$ is the $(1,2)$ tensor field representing the product $\circ$.  
\end{de}

Using commutativity of the algebra, it is easy to check  that in flat coordinates we have
$$c^i_{jk}=\d_j\d_k C^i.$$
In the Frobenius case, due to the existence of an invariant metric, one can make an additional
 step and obtain $C^i=\eta^{ij}\d_j F$ for a suitable function $F$ (the Frobenius potential).
\newline
\newline 
Given an $F$-manifold with compatible connection one can define the associated principal hierarchy in the following way. First, using a frame of flat vector fields $(X_{(1,0)}, \dots, X_{(n,0)})$, one defines the so called \emph{primary flows} by means of \beq\label{primary}u^i_{t_{(p,0)}}=c^i_{jk}X^k_{(p,0)}u^j_x,\quad p=1, \dots, n\eeq
then, using the recursive relations
\begin{equation}\label{recursive2}
\nabla_j X^i_{(p,\alpha)}=c^i_{jk}X^k_{(p,\alpha-1)}.
\end{equation}
one defines the \emph{higher flows}: 
\beq\label{higher}u^i_{t_{(p,\alpha)}}=c^i_{jk}X^k_{(p,\alpha)}u^j_x, \quad p=1, \dots, n.\eeq
The principal hierarchy associated to an $F$-manifold $(M, \circ)$ with compatible flat connection $\nabla$ is the collection of all the flows in \eqref{primary}, \eqref{higher}. 

Notice that the recursive relations \eqref{recursive2} can be written as
\beq\label{recursive3}[X_{(q,0)},X_{(p,\alpha)}]=X_{(p,\alpha-1)}\circ X_{(q,0)}.\eeq
%The proof that recursive relations are well defined is a straightforward computation. Indeed, the
% compatibility of the system
%$$\d_j X^i_{(p,\alpha)}=-\Gamma^i_{jk}X^k_{(p,\alpha)}-c^i_{kj}X^k_{(p,\alpha-1)},$$
%read
%\begin{eqnarray*}
%&&\left[\d_m\Gamma^i_{jl}-\d_j\Gamma^i_{ml}-
%\Gamma^i_{jk}\Gamma^k_{ml}+\Gamma^i_{mk}\Gamma^k_{jl}\right]X^l_{(p,\alpha)}+\\
%&&\\
%&&\left[\nabla_m c^i_{jl}-\nabla_j c^i_{ml}\right]X^l_{(p,\alpha-1)}+\\
%&&\\
%&&\left[c^i_{jk}c^k_{ml}-c^i_{mk}c^k_{jl}\right]X^l_{(p,\alpha-2)}=0.
%\end{eqnarray*}
%The quantity in the first square bracket vanishes due to the flatness of the connection. The quantity in the second square bracket vanishes due to symmetry of the tensor $\nabla_k c^i_{jl}$ with respect to the exchange of the lower indices. Finally, the quantity in the last square bracket vanishes due to the associativity condition. 
The commutativity of the flows of the principal hierarchy can be proved using the following lemma \cite{LPR}
\begin{lemma}
The flows
\begin{equation*}
u^i_t=c^i_{jk}X_{(1)}^k u^j_x\qquad i=1,\dots,n.
\end{equation*}
and
\begin{equation*}
u^i_t=c^i_{jk}X_{(2)}^k u^j_x\qquad i=1,\dots,n.
\end{equation*}
associated with different solutions of
\begin{equation}\label{av}
c^i_{jm}\nabla_k X^m=c^i_{km}\nabla_j X^m.
\end{equation}
commute.
\end{lemma}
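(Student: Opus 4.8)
The plan is to recognize each of the two flows as the image under the Poisson operator $P$ of a suitable $1$-form with $u$-dependent components, and then deduce their commutativity from the involutivity of these $1$-forms together with the anti-homomorphism property \eqref{trueantihomomorphism}. Working in flat coordinates (where the Christoffel symbols of $\nabla$ vanish and $g^{ij}=\eta^{ij}$ is constant), formula \eqref{vectorfield3components} shows that for a $1$-form $\omega=\omega_i(u)\,\delta u^i$ one has $(P\omega)^i=\eta^{il}\,\partial_m\omega_l\,u^m_x=g^{il}(\nabla_m\omega_l)\,u^m_x$. Hence the flow $u^i_t=c^i_{mk}X^k u^m_x$ coincides with $P\omega$ precisely when $\nabla_m\omega_l=c_{lmk}X^k$, where $c_{lmk}:=g_{li}c^i_{mk}$. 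So the first task is to produce, for each of $X_{(1)}$ and $X_{(2)}$, a $1$-form $\omega_{(1)},\omega_{(2)}$ solving this prescribed-covariant-derivative system.

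The existence of such $\omega_{(r)}$ is the crux of the argument, and I expect it to be the main obstacle. Since $\nabla$ is flat, the integrability condition for $\nabla_m\omega_l=c_{lmk}X^k$ is $\nabla_n(c_{lmk}X^k)=\nabla_m(c_{lnk}X^k)$. Expanding by the Leibniz rule, the terms $(\nabla_n c_{lmk})X^k$ are symmetric in $m\leftrightarrow n$ exactly because the $F$-manifold condition $\nabla_n c^i_{mk}=\nabla_m c^i_{nk}$ holds, while the remaining terms satisfy $c_{lmk}\nabla_n X^k=c_{lnk}\nabla_m X^k$ by the index-lowered form of \eqref{av}. Thus the two hypotheses together furnish precisely the integrability needed, the system is locally solvable, and $\omega_{(1)},\omega_{(2)}$ exist; this suffices, since commutativity of the flows is a pointwise identity in the jet variables that can be checked chart by chart, even if the $\omega_{(r)}$ are only locally defined.

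It then remains to show $\{\omega_{(1)},\omega_{(2)}\}=0$. As both $1$-forms have $u$-dependent components I would apply formula \eqref{poissoncoordinatesa2}; substituting $\nabla_m(\omega_{(r)})_l=c_{lmk}X^k_{(r)}$ and using $g^{kl}c_{lmp}=c^k_{mp}$ reduces the $i$-th component to $g_{is}\,c^k_{mp}c^s_{kq}\,\big(X^p_{(2)}X^q_{(1)}-X^p_{(1)}X^q_{(2)}\big)u^m_x$. Here the combination in parentheses is antisymmetric in $p\leftrightarrow q$, whereas associativity and commutativity of $\circ$ give $c^k_{mp}c^s_{kq}=c^s_{mk}c^k_{pq}$, which is symmetric in $p\leftrightarrow q$; the contraction therefore vanishes and $\{\omega_{(1)},\omega_{(2)}\}=0$. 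Finally, Proposition \ref{antih} yields $[P\omega_{(1)},P\omega_{(2)}]=-P\{\omega_{(1)},\omega_{(2)}\}=0$, and since $P\omega_{(1)},P\omega_{(2)}$ are exactly the two given flows, they commute. A direct alternative would be to compute the commutator \eqref{commutatorvector} of the two evolutionary vector fields head-on, but that route does not exploit the bracket machinery developed above and is computationally heavier, so I would favor the Poisson-involutivity argument.
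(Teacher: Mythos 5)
Your proposal is correct, but it takes a genuinely different route from the one the paper relies on. For this lemma the paper itself gives no proof: it is quoted from \cite{LPR}, where it is established by a direct computation of the commutator of the two quasilinear evolutionary vector fields $c^i_{jk}X^k_{(r)}u^j_x$, using \eqref{av} together with the symmetry $\nabla_l c^i_{jk}=\nabla_j c^i_{lk}$; the paper then merely verifies that the vector fields of the principal hierarchy satisfy \eqref{av}. Your argument instead realizes each flow as $P\omega_{(r)}$ for a potential $1$-form with $\nabla_m(\omega_{(r)})_l=g_{li}c^i_{mk}X^k_{(r)}$, proves involutivity via \eqref{poissoncoordinatesa2} and associativity, and concludes by the anti-homomorphism \eqref{trueantihomomorphism}. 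This is in substance the strategy the paper itself deploys only later, in Section 6, as an ``alternative proof'' of commutativity for the principal hierarchy --- but there the potentials come for free from the recursion \eqref{recursive2}, whereas you must construct them, and your integrability check (the $F$-manifold condition handles the $\nabla c$ terms, \eqref{av} handles the $\nabla X$ terms) is exactly the right ingredient and is correct; your remark that local existence of the potentials suffices for a pointwise differential identity is also sound. The trade-off: the direct commutator computation of \cite{LPR} is self-contained, global, and needs no auxiliary metric, while your route requires choosing a flat metric $g$ with $\nabla g=0$ and only local potentials, but in exchange it explains the lemma conceptually as involutivity of $1$-forms under the bracket \eqref{truePoisson1} and actually strengthens the paper's Section 6 theorem from the principal hierarchy to arbitrary solutions of \eqref{av}.
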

Indeed, the vector fields of the principal hierarchy satisfy \eqref{av}. For $\alpha=0$, it is trivial. For $\alpha>0$, it follows from the following chain of identities:
$$c^i_{jm}\nabla_k X^m_{(p,\alpha)}=c^i_{jm}c^m_{kl}X^l_{(p,\alpha-1)}=c^i_{km}c^m_{jl}X^l_{(p,\alpha-1)}=c^i_{km}\nabla_j X^m_{(p,\alpha)},$$
where the one in the middle is due to the associativity of $\circ$.

\section{Hamiltonian formalism for the principal hierarchy}
Using \eqref{recursive2}, it is immediate to see that all the flows of the principal hierarchy can be written as
$$u^i_{t_{(p,\alpha)}}=\nabla_j X^i_{(p,\alpha+1)} u^j_x.$$
Let $g$ be any metric compatible with $\nabla$  ($\nabla g=0$), $g$ not necessarily positive definite. Given the vector fields $X_{(p, \alpha+1)}$ define corresponding forms
$$(\omega_{(p, \alpha+1)})_l=g_{il}X^i_{(p, \alpha+1)}.$$
In this way we can write 
$$u^i_{t_{(p,\alpha)}}=\nabla_j X^i_{(p,\alpha+1)} u^j_x=\nabla_j\left(g^{il}(\omega_{(p, \alpha+1)})_l\right)u^j_x$$
$$=g^{il}\d_j(\omega_{(p,\alpha+1)})_l u^j_x-g^{im}\Gamma^l_{jm}u^j_x(\omega_{(p,\alpha+1)})_l=
\left(g^{il}\d_x+\Gamma^{il}_{j}u^j_x\right)(\omega_{(p,\alpha+1)})_l.$$
 Notice that the operator
 in the bracket is the differential operator associated
 with a Poisson bracket of hydrodynamic type.
\newline
\newline
Now two cases are possible:
\begin{itemize}
\item The metric $g$ is invariant with respect to the product, namely $g(X\circ Y, Z)=g(X, Y\circ Z)$, or equivalently $c^i_{jk}g^{kl}=c^l_{jk}g^{ki}$. 
\item The metric $g$ is not invariant with respect to the product.
\end{itemize}
The first case is less interesting since it is well known
 and well studied. Indeed in this case the 1-forms 
$\omega_{(p,\alpha+1)}$ are exact and we end up
 with the usual local Hamiltonian formalism introduced
 by Dubrovin and Novikov. To prove this fact we need the following lemma
\begin{lemma}
If $g$ is invariant then the 1-forms $\omega_{(p,\alpha)}$ defining the principal hierarchy
 satisfy the following recursive relations
\begin{equation}\label{recrel1form}
\nabla_j 
(\omega_{(p,\alpha)})_i=c^l_{ji}(\omega_{(p,\alpha-1)})_l.
\end{equation}
\end{lemma}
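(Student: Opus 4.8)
The plan is to differentiate the defining relation $(\omega_{(p,\alpha)})_i = g_{mi}X^m_{(p,\alpha)}$ and reduce the claimed recursion to a purely algebraic identity relating the structure constants $c^i_{jk}$ and the metric $g$. First I would apply $\nabla_j$ to both sides. Since $g$ is compatible with $\nabla$, i.e. $\nabla_j g_{mi}=0$, the covariant derivative commutes past the metric and gives
$$\nabla_j (\omega_{(p,\alpha)})_i = g_{mi}\,\nabla_j X^m_{(p,\alpha)}.$$
Then I would substitute the recursive relation \eqref{recursive2} for the vector fields, namely $\nabla_j X^m_{(p,\alpha)} = c^m_{jk}X^k_{(p,\alpha-1)}$, to obtain $\nabla_j(\omega_{(p,\alpha)})_i = g_{mi}c^m_{jk}\,X^k_{(p,\alpha-1)}$.

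On the other side, the target right-hand member of \eqref{recrel1form} is rewritten, using $(\omega_{(p,\alpha-1)})_l = g_{kl}X^k_{(p,\alpha-1)}$, as $c^l_{ji}(\omega_{(p,\alpha-1)})_l = g_{kl}c^l_{ji}\,X^k_{(p,\alpha-1)}$. Matching the coefficients of the arbitrary vector field $X^k_{(p,\alpha-1)}$ shows that the whole statement is equivalent to the single tensorial identity $g_{mi}c^m_{jk} = g_{kl}c^l_{ji}$ for all $i,j,k$. Introducing the fully covariant tensor $c_{ijk}:=g_{im}c^m_{jk}$, this identity reads exactly $c_{ijk}=c_{kji}$, that is, invariance of $c_{ijk}$ under exchange of its first and third indices.

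The crux of the argument — and the step I expect to be the main obstacle — is establishing precisely this symmetry, which is where the invariance hypothesis is indispensable. I would lower both contravariant indices in the invariance condition $c^i_{jk}g^{kl}=c^l_{jk}g^{ki}$ by contracting with $g_{ip}g_{lq}$; using $g_{lq}g^{kl}=\delta^k_q$ and $g_{ip}g^{ki}=\delta^k_p$ this collapses term by term to $c_{pjq}=c_{qjp}$, i.e. $c_{ijk}=c_{kji}$ after renaming. Together with the commutativity of $\circ$, which makes $c_{ijk}$ symmetric in its last two indices, this in fact renders $c_{ijk}$ totally symmetric, and the particular transposition $c_{ijk}=c_{kji}$ is exactly the one needed to equate the two coefficient tensors above. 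This proves \eqref{recrel1form}. I would close by remarking that the computation read in reverse makes clear why invariance is essential: without it $c_{ijk}$ need not be symmetric under the first–third exchange, so the recursion for the one-forms fails in general, which is consistent with the non-exactness of the $\omega_{(p,\alpha)}$ in the non-invariant case.
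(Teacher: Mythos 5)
Your proposal is correct and follows essentially the same route as the paper: both start from the vector-field recursion \eqref{recursive2}, use $\nabla g=0$ to pass to the $1$-forms, and invoke the invariance condition $c^i_{jk}g^{kl}=c^l_{jk}g^{ki}$ to swap the relevant indices (you phrase this as the symmetry $c_{ijk}=c_{kji}$ of the lowered tensor, the paper instead multiplies the contravariant identity by $g_{hi}$, but these are the same computation). The only nitpick is that "equivalent to" should be "implied by" the tensorial identity, since the $X^k_{(p,\alpha-1)}$ are not arbitrary; this does not affect the validity of the direction you actually need.
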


\n
\begin{proof}
The proof follows from the
 recursive relations for the vector fields. Indeed we can write
$$\nabla_j X^i_{(p,\alpha)}=c^i_{jk}X^k_{(p,\alpha-1)}$$
as
\begin{equation}\label{recrel1formgen}
g^{il}\nabla_j 
(\omega_{(p,\alpha)})_l=c^i_{jk}g^{kl}(\omega_{(p,\alpha-1)})_l.
\end{equation}
Using the invariance of $g$ with respect to $\circ$ we obtain
$$g^{il}\nabla_j 
(\omega_{(p,\alpha)})_l=c^l_{jk}g^{ki}(\omega_{(p,\alpha-1)})_l.$$
Multiplying both sides by $g_{hi}$ and taking the sum 
 over the index $i$ we obtain the result.
\end{proof}
\begin{corollary}
If $g$ is invariant with respect to $\circ$, then $\omega_{(p,\alpha)}$ are exact. 
\end{corollary}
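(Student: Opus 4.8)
The plan is to show that each $1$-form $\omega_{(p,\alpha)}$ on $M$ is closed, and then invoke the Poincar\'e lemma to deduce exactness. The key observation is that for the torsionless connection $\nabla$ the antisymmetrized covariant derivative of a $1$-form equals the ordinary antisymmetrized partial derivative, since the symmetric Christoffel terms cancel:
\[
\nabla_j(\omega_{(p,\alpha)})_i-\nabla_i(\omega_{(p,\alpha)})_j=\partial_j(\omega_{(p,\alpha)})_i-\partial_i(\omega_{(p,\alpha)})_j .
\]
Hence $\omega_{(p,\alpha)}$ is closed precisely when its covariant derivative $\nabla_j(\omega_{(p,\alpha)})_i$ is symmetric under the exchange $i\leftrightarrow j$.

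First I would treat the higher flows, $\alpha\geq 1$. This is exactly where the invariance of $g$ with respect to $\circ$ enters, through the preceding Lemma, which yields the recursion $\nabla_j(\omega_{(p,\alpha)})_i=c^l_{ji}(\omega_{(p,\alpha-1)})_l$. Since the product $\circ$ is commutative, the structure constants obey $c^l_{ji}=c^l_{ij}$, so the right-hand side is manifestly symmetric in $i$ and $j$. Therefore $\nabla_j(\omega_{(p,\alpha)})_i=\nabla_i(\omega_{(p,\alpha)})_j$ and $\omega_{(p,\alpha)}$ is closed. I want to stress that no inductive hypothesis on the exactness of $\omega_{(p,\alpha-1)}$ is required: closedness follows directly from the recursion together with commutativity of $\circ$.

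Next I would dispose of the base case of the primary flows, $\alpha=0$. Here $(\omega_{(p,0)})_l=g_{il}X^i_{(p,0)}$ where $X_{(p,0)}$ is a flat, i.e.\ covariantly constant, vector field, and $\nabla g=0$; hence $\nabla\omega_{(p,0)}=0$. In particular its covariant derivative is (trivially) symmetric, so $\omega_{(p,0)}$ is closed, being in fact parallel and therefore of constant components in flat coordinates.

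Finally, a closed $1$-form on a simply connected $M$ (or on any contractible coordinate chart) is exact by the Poincar\'e lemma, so there exist functions $H_{(p,\alpha)}$ on $M$ with $(\omega_{(p,\alpha)})_i=\partial_i H_{(p,\alpha)}$. Since the densities depend only on $u$, this is the same as $\omega_{(p,\alpha)}=\delta\bar H_{(p,\alpha)}$ with $\bar H_{(p,\alpha)}=\int_{S^1}H_{(p,\alpha)}\,dx$, so the corresponding flow acquires a genuine Hamiltonian functional, as claimed in the surrounding discussion. The proof carries essentially no heavy computation; its only real content is the interplay between the torsion-freeness of $\nabla$ and the commutativity of $\circ$. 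The mild points that deserve care are the separate verification of the $\alpha=0$ base case and the fact that, strictly speaking, exactness is only local unless the global topology of $M$ is taken into account.
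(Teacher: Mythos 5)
Your proof is correct and follows essentially the same route as the paper: apply the recursion $\nabla_j(\omega_{(p,\alpha)})_i=c^l_{ji}(\omega_{(p,\alpha-1)})_l$ from the preceding lemma, note that commutativity of $\circ$ makes the right-hand side symmetric in $i,j$, and conclude closedness (hence exactness). Your explicit treatment of the $\alpha=0$ base case and of the closed-implies-exact step via the Poincar\'e lemma merely fills in details the paper leaves implicit.
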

\begin{proof}
Now using the above lemma, we obtain
$$(d\omega_{(p,\alpha)})_{ij}=\nabla_j(\omega_{(p,\alpha)})_i-\nabla_i(\omega_{(p,\alpha)})_j=0,$$
due to the commutativity of the product $\circ$. \end{proof}
%\newline
%\newline
On the other hand, if $g$ is not invariant with respect to the product $\circ$, in general the 1-forms $\omega$ are not exact since
\beq\label{exdefect}
(d\omega_{(p,\alpha)})_{ij}=\nabla_j(\omega_{(p,\alpha)})_i-\nabla_i(\omega_{(p,\alpha)})_j=
[g_{ki}c^k_{jl}-g_{kj}c^k_{il}]g^{lm}(\omega_{(p,\alpha-1)}))_m
\eeq
and the quantity in square brackets do not vanishes.
Although the $1$-forms defining the principal hierarchy starting from an $F$-manifold with compatible flat structure are not in general close, we can use the Poisson bracket on $1$-forms introduced in the previous Sections to reinterpret the commutativity of the flows of the principal hierarchy as the fact that the corresponding $1$-forms are in involution. 
% However as we proved in the previous Sections,
 %it is possible to define a Poisson bracket between 1- forms 
%\begin{equation}\label{PB2}
%\{\alpha,\beta\}=
%\int_{S^1}dx \wedge\{\alpha, \beta\}_j \delta u^j 
%\end{equation}
%where $\{\alpha, \beta\}_j$ are given in \eqref{poissoncoordinatesa}. 
This is the meaning of the following:
\begin{theorem}
The 1-forms $\omega_{(p,\alpha)}$ defining the principal hierarchy are in involution with respect 
 to the Poisson bracket \eqref{truePoisson1}, and the involutivity is a consequence of the associativity of the product $\circ$. 
\end{theorem}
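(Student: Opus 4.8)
The plan is to compute the bracket directly from its coordinate expression, exploiting the fact that the $1$-forms defining the hierarchy have components that are functions of the point alone. Indeed, since $(\omega_{(p,\alpha)})_l = g_{il}X^i_{(p,\alpha)}$ with both $g$ and the $X_{(p,\alpha)}$ depending only on $u$, these forms lie exactly in the class for which formula \eqref{poissoncoordinatesa2} holds. I would therefore apply \eqref{poissoncoordinatesa2} with the first slot filled by $\omega_{(p,\alpha)}$ and the second by $\omega_{(q,\beta)}$, reducing the whole statement to showing that the resulting coefficient tensor multiplying $u^m_x$ vanishes.

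The essential input is the covariant derivative of the forms. Using $\nabla g=0$ together with the recursion \eqref{recursive2}, one finds $\nabla_k (\omega_{(p,\alpha)})_i = g_{li}\nabla_k X^l_{(p,\alpha)} = g_{li}c^l_{km}X^m_{(p,\alpha-1)}$, and similarly for $\omega_{(q,\beta)}$. Substituting these into \eqref{poissoncoordinatesa2} and contracting the inverse metric against the lowered index via $g^{kl}g_{rl}=\delta^k_r$ collapses every metric factor except a single overall $g_{ti}$, leaving
\[
\{\omega_{(p,\alpha)}, \omega_{(q,\beta)}\}_i = g_{ti}\,c^t_{kn}c^k_{ms}\bigl(X^n_{(p,\alpha-1)}X^s_{(q,\beta-1)} - X^s_{(p,\alpha-1)}X^n_{(q,\beta-1)}\bigr)u^m_x.
\]

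Now associativity enters decisively. Commutativity and associativity of $\circ$ imply that the composite tensor $S^t_{nms}:=c^t_{rn}c^r_{ms}$ is totally symmetric in its three lower indices $n,m,s$; in particular $c^t_{kn}c^k_{ms}$ is symmetric under $n\leftrightarrow s$, since $(e_m\circ e_s)\circ e_n = (e_m\circ e_n)\circ e_s$. The factor $X^n X^s - X^s X^n$ is antisymmetric under that same exchange, so the contraction over $n$ and $s$ of a symmetric against an antisymmetric tensor vanishes identically. Hence $\{\omega_{(p,\alpha)}, \omega_{(q,\beta)}\}_i = 0$ for every $i$, so the bracket is the zero $1$-form. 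This makes the role of associativity completely transparent: without it, $S^t_{nms}$ loses its $n\leftrightarrow s$ symmetry and the bracket need not vanish, which is precisely the content the theorem advertises.

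The step I expect to be most delicate is the metric bookkeeping in the contraction, together with the correct treatment of the index shift $\alpha\mapsto\alpha-1$; in particular the boundary cases $\alpha=0$ or $\beta=0$, where $X_{(p,0)}$ is flat, so $\nabla\omega_{(p,0)}=0$ and both terms vanish trivially, should be checked separately and are consistent with setting $X_{(p,-1)}=0$. As an independent cross-check I would run the alternative argument through Proposition \ref{antih}: since $P\omega_{(p,\alpha)}$ is exactly the flow $\partial_{t_{(p,\alpha-1)}}$ of the principal hierarchy and these flows commute, the anti-homomorphism property gives $P\{\omega_{(p,\alpha)}, \omega_{(q,\beta)}\} = -[P\omega_{(p,\alpha)}, P\omega_{(q,\beta)}] = 0$, so the bracket lies in $\ker P$; being already in reduced form with components proportional to $u^m_x$, it must then vanish. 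This route, however, passes through the commutativity of the flows rather than exhibiting associativity directly, so I would retain the explicit computation above as the main proof.
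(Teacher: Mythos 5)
Your proposal is correct and follows essentially the same route as the paper: both reduce to the coordinate formula for the bracket on point-dependent $1$-forms, substitute the recursion $\nabla_k(\omega_{(p,\alpha)})_i=g_{li}c^l_{km}X^m_{(p,\alpha-1)}$ (the paper uses its flat-coordinate version \eqref{recrel1formgenflat}), and conclude because the contraction $c^t_{kn}c^k_{ms}$ is symmetric in $n,s$ by associativity and commutativity while the bilinear factor in the $X$'s is antisymmetric. Working covariantly via \eqref{poissoncoordinatesa2} rather than in flat coordinates via \eqref{PB3} is only a cosmetic difference, and your remarks on the boundary case $\alpha=0$ and on the alternative argument through Proposition \ref{antih} are consistent with the paper's own comments.
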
 

\n
\emph{Proof}. In flat coordinates $t^1, \dots, t^n$ the bracket \eqref{truePoisson1} reads
\begin{equation}\label{PB3}
\{\alpha,\beta\}=\int_{S^1}\eta^{il}\left(\f{\d\beta_l}{\d t^k}\f{\d\alpha_j}{\d t^i}-
\f{\d\alpha_l}{\d t^k}\f{\d\beta_j}{\d t^i}\right)t^k_x\,dx, 
\end{equation}
for $1$-forms whose coefficients depend only on the flat coordinates, not on their derivatives. 
Using the recursive relations \eqref{recrel1formgen}
 written in flat coordinates
\begin{equation}\label{recrel1formgenflat}
\eta^{il}\d_j 
(\omega_{(p,\alpha)})_l=c^i_{jk}\eta^{kl}(\omega_{(p,\alpha-1)})_l.
\end{equation}
we obtain
\begin{eqnarray*}
\{\omega_{(p,\alpha)},\omega_{(q,\beta)}\}_j&=&\\
&&\int_{S^1}\eta^{il}\left(\f{\d(\omega_{(q,\beta)})_l}{\d t^k}\f{\d(\omega_{(p,\alpha)})_j}{\d t^i}-
\f{\d(\omega_{(p,\alpha)})_l}{\d t^k}\f{\d(\omega_{(q,\beta)})_j}{\d t^i}\right)t^k_x\,dx=\\
&&\int_{S^1}\eta_{jm}(\omega_{(q,\beta-1)})_l(\omega_{(p,\alpha-1)})_s
\left(c^i_{kn}\eta^{nl}c^m_{ih}\eta^{hs}
-c^i_{kn}\eta^{ns}c^m_{ih}\eta^{hl}\right)t^k_x\,dx=\\
&&\int_{S^1}\eta_{jm}\eta^{nl}\eta^{hs}(\omega_{(q,\beta-1)})_l(\omega_{(p,\alpha-1)})_s
\left(c^i_{kn}c^m_{ih}
-c^i_{kh}c^m_{in}\right)t^k_x\,dx=0,
\end{eqnarray*}
due to the associativity of the product $\circ$. 
\endproof
Due to Proposition \ref{antih}, the above theorem provides an alternative proof of the commutativity of the flows of the principal hierarchy.

\begin{remark}
Notice that if $\alpha$ and $\beta$ are the differentials
 of two local functionals $H[u]=\int_{S^1}h(u)\,dx$ and $K[u]=\int_{S^1}k(u)\,dx$ respectively, then as was proved in Proposition \ref{usualpoisson} one has 
 $\{\delta H,\delta K\}=\delta\{H,K\} $
where $\{H,K\}$ is the usual Poisson brackets of hydrodynamic type:
$$\{H,K\}=\int_{S^1}\f{\d h}{\d t^l}\eta^{lm}\d_x\left(\f{\d k}{\d u^m}\right)\,dx.$$
 Therefore if $g$ is invariant the fact the 1-forms defining the principal hierarchy are in involution follows immediately from the involutivity of the corresponding
 Hamiltonians.
\end{remark}

%\section{Poisson bracket of hydrodynamic type on 1-forms}
%As well-known any Poisson bivector on a manifold defines a Poisson bracket on the ring  of 
% smooth functions on the manifolds. This bracket can be extended to the space of
% 1-forms according to:
%\begin{equation}\label{PB1}
%\{\alpha,\beta\}:={\rm Lie}_{P\beta}\alpha-{\rm Lie}_{P\alpha}\beta+d\langle \beta,P\alpha\rangle
%\end{equation}
%In this letter we consider this extension in the case of Poisson bracket of hydrodynamic type
% arising in the theory of integrable PDEs. In this framework the components of a  Poisson bivector 
% are given by a distribution of the form
%\begin{equation}\label{PBHT}
%P^{ij}=g^{ij}\delta'(x-y)+\Gamma^i_{jk}u^k_x\delta(x-y) 
%\end{equation}
%Using the Cartan identity
%\begin{equation}\label{cartan}
%i_X\delta+\delta i_X={\rm Lie}_X
%\end{equation}
%we obtain

%where
%\begin{eqnarray*}
%(P\alpha)^i&=&\left(g^{ij}\f{\d\alpha_j}{\d u^k}+\Gamma^{ij}_K\alpha_j\right)u^k_x\\
%(P\beta)^i&=&\left(g^{ij}\f{\d\beta_j}{\d u^k}+\Gamma^{ij}_K\beta_j\right)u^k_x 
%\end{eqnarray*}
%In flat coordinates $(t^1,\dots,t^n)$ (after some computations) \eqref{PB2} reduces to
%\begin{equation}\label{PB3}
%\{\alpha,\beta\}=\int_{S^1}\eta^{il}\left(\f{\d\beta_l}{\d t^k}\f{\d\alpha_j}{\d t^i}-
%\f{\d\alpha_l}{\d t^k}\f{\d\beta_j}{\d t^i}\right)u^k_x\,dx 
%\end{equation}
%where $\eta^{il}$ are the components of the metric $g$ in flat coordinates. 

\section{An example}
The triple $(\mathbb{R}^n,\nabla,\circ)$ with $\nabla$ defined by
\begin{equation}
\label{nat-conn-eps}
\begin{aligned}
&\Gamma^i_{jk}=0\qquad\mbox{for $i\ne j\ne k\ne i$}\\
&\Gamma^i_{jj}=-\Gamma^i_{ji}\qquad\mbox{for $i\ne j$}\\
&\Gamma^i_{ji}=\f{\epsilon}{u^i-u^j}\qquad\mbox{for $i\ne j$}\\
&\Gamma^i_{ii}=-\sum_{k\ne i}\Gamma^i_{ik}=-\sum_{k\ne i}\f{\epsilon}{u^i-u^k}\ .
\end{aligned}
\end{equation}
and $\circ$ defined by 
$c^i_{jk}=\delta^i_j\delta^i_k$ in coordinates $(u^1, \dots, u^n)$ (canonical coordinates) is an $F$-manifold with compatible flat structure \cite{LP}. It is strictly related to a dispersionless integrable hierarchy called the $\epsilon$-system. In the case 
 $\epsilon=1$ and for $n=3$ the flat coordinates $(t_1, t_2, t_3)$  are given in terms of canonical coordinates $(u_1, u_2, u_3)$ by the formulas:
\begin{eqnarray*}
t_1&=&u_1+u_2+u_3,\\
t_2&=&\f{1}{2(u_1-u_2)(u_3-u_1)},\\
t_3&=&\f{1}{2(u_1-u_2)(u_2-u_3)}.
\end{eqnarray*}
In flat coordinates any constant non degenerate symmetric matrix define a metric compatible with $\nabla$. For instance we can take the
 antidiagonal metric 
$$
g=\begin{pmatrix}
0 & 0 & 1\cr
0 & 1 & 0\cr
1 & 0 & 0          
\end{pmatrix}
$$ 
(this might not
 be the most convenient choice). At this point we have to decide if to work in canonical coordinates or in
 flat coordinates. In the first case the components of the metric
 become much more involved. However the 1-forms 
 definining the hierarchy can be easily obtained using
 the results of \cite{LP}. This becomes non trivial if we 
 work in flat coordinates since the 1-forms defining the hierarchy satisfy the system of PDEs
\begin{equation}\label{PDEform}
g^{km} c^i_{jk}\nabla_l \alpha_m=g^{km}c^i_{lk}\nabla_j \alpha_m
\end{equation}
involving the structure constants $c^i_{jk}$. 
\subsection{Working in canonical coordinates} 
The metric $g$ in canonical coordinates  reads:
\begin{eqnarray*}
g_{11}&=&\f{1}{4\left(u_{3}-u_{2}\right) \left(u_{1}-u_{2}\right) ^{4}\left(-u_{3}+u_{1}\right) ^{4}}
\left\{4{u_{1}}^{2}{u_{3}}^{4}-16{u_{1}}^{3}{u_{3}}^{3}+24{u_{1}}^{4}{u_{3}}^{2}-
16{u_{1}}^{5}u_{3}+\right.\\
&&\left.4{u_{1}}^{6}-
8u_{1}{u_{3}}^{4}u_{2}+32{u_{1}}^{2}{u_{3}}^{3}u_{2}-48u_{2}u_{3}^{2}{u_{1}}^{3}+32u_{2}{u_{1}}^{4}u_{3}-8u_{2}{u_{1}}^{5}+4{u_{2}}^{2}{u_{3}}^{4}+\right.\\
&&\left.-16u_{1}
{u_{2}}^{2}{u_{3}}^{3}+24{u_{1}}^{2}{u_{3}}^{2}{u_{2}}^{2}-16{u_{2}}^{2}u_{3}{u_{1}}^{3}+4{u_{2}}^{2}{u_{1}}^{4}+{u_{3}}^{2}u_{2}+{u_{3}}^{3}
-4{u_{3}}^{2}
u_{1}\right.\\
&&\left.-u_{3}{u_{2}}^{2}-4{u_{1}}^{2}u_{2}+4u_{3}
{u_{1}}^{2}+4u_{1}{u_{2}}^{2}-{u_{2}}^{3}\right\},
\end{eqnarray*}
\begin{eqnarray*}
g_{12}=g_{21}&=&\f{1}{4\left( u_{3}-u_{2}\right) ^{2} \left(u_{1}-u_{2} \right)^{4}\left( -u_{3}+u_{1}\right)^{3}}
\left\{2u_{1}u_{2}^{2}+u_{3}{u_{2}}^{2}-4u_{3}u_{1}u_{2}
+6{u_{1}}^{4}{u_{3}}^{2}+\right.\\
&&\left.-6{u_{1}}^{5}u_{3}-2u_{1}^{3}u_{3}
^{3}+u_{3}^{2}u_{2}+6u_{1}^{2}u_{2}^{3}u_{3}
-6u_{1}u_{2}^{3}u_{3}^{2}+18u_{1}
^{2}u_{3}^{2}u_{2}^{2}+6u_{1}^{2}u_{3}^{3}
u_{2}+\right.\\
&&\left.-6u_{1}u_{2}^{2}u_{3}^{3}+2u_{3}^{3}
u_{2}^{3}+2u_{3}^{2}u_{1}-18u_{2}^{2}u_{3}
u_{1}^{3}+18u_{2}u_{1}^{4}u_{3}-18u_{2}u_{3}
^{2}u_{1}^{3}+6u_{2}^{2}u_{1}^{4}+\right.\\
&&\left.-6u_{2}
u_{1}^{5}-u_{3}^{3}-u_{2}^{3}-2u_{1}^{3}u_{2}^{3}+2u_{1}^{6}\right\},
\end{eqnarray*}
\begin{eqnarray*}
g_{13}=g_{31}&=&\f{1}{4(u_{3}-u_{2})^{2}( u_{1}-u_{2})^3(-u_{3}+u_{1})^4}
\left\{-2u_{1}u_{2}^{2}-u_{3}u_{2}^{2}+2
u_{1}u_{3}^{4}u_{2}+4u_{3}u_{1}u_{2}+\right.\\
&&\left.+4u_{1}^{4}u_{3}^{2}-6u_{1}^{5}u_{3}+4u_{1}^{
3}u_{3}^{3}-u_{3}^{2}u_{2}+24u_{1}^{2}u_{3}^{
2}u_{2}^{2}+12u_{1}^{2}u_{3}^{3}u_{2}-16u_{1}
u_{2}^{2}u_{3}^{3}+\right.\\
&&\left.-2u_{3}^{2}u_{1}-16u_{2}
^{2}u_{3}u_{1}^{3}+22u_{2}u_{1}^{4}u_{3}-28
u_{2}u_{3}^{2}u_{1}^{3}-2u_{3}^{5}u_{2}
+4u_{2}^{2}u_{1}^{4}-6u_{2}u_{1}^{5}+\right.\\
&&\left.u_{3}^{3
}+u_{2}^{3}-6u_{1}^{2}u_{3}^{4}+2u_{1}u_{3}
^{5}+2u_{1}^{6}+4u_{2}^{2}u_{3}^{4}\right\},
\end{eqnarray*}
\begin{eqnarray*}
g_{22}&=&-\f{1}{4( u_{1}-u_{2})^{4}(u_{2}-u_{3}) ^{2}(-u_{3}+u_{1})^{2}}\left\{-4u_{3}^{2}u_{1}^{3}-
4u_{1}^{4}u_{3}+4u_{1}^{5}+24u_{3}u_{1}^{
3}u_{2}-16u_{1}^{4}u_{2}+\right.\\
&&\left.+12u_{1}u_{2}^{2}
u_{3}^{2}-36u_{1}^{2}u_{3}u_{2}^{2}+20u_{2}
^{2}u_{1}^{3}-8u_{2}^{3}u_{3}^{2}+16u_{1}
u_{2}^{3}u_{3}-8u_{1}^{2}u_{2}^{3}+4u_{1}^{2}
u_{3}^{3}-8u_{1}u_{3}^{3}u_{2}+\right.\\
&&\left.+4u_{2}^{2}
u_{3}^{3}-u_{2}^{2}+2u_{2}u_{3}
-u_{3}^{2}\right\},
\end{eqnarray*}
\begin{eqnarray*}
g_{23}=g_{32}&=&
\f{1}{4(u_{3}-u_{1})^{3}(u_{1}-u_{2})^{3}( u_{2}- u_{3})}\left\{2u_{3}^{3}u_{2}+u_{2}-6u_{3}^{2}
u_{1}u_{2}+
6u_{1}^{2}u_{2}u_{3}-2u_{1}^{3}u_{2}+\right.\\
&&\left.-u_{3}
-6u_{3}u_{1}^{3}-2u_{3}^{3}u_{1}+2u_{1}^{
4}+6u_{3}^{2}u_{1}^{2}\right\},
\end{eqnarray*}
\begin{eqnarray*}
g_{33}&=&\f{1}{4(u_{2}-u_{3}) ^{2}(u_{1}-u_{2}) ^{2}(-u_{3}+u_{1})^{4}}
\left\{4u_{1}u_{3}^{4}-16u_{1}^{2}u_{3}^{3}+24
u_{3}^{2}u_{1}^{3}-16u_{1}^{4}u_{3}+4u_{1}^{5}
+\right.\\
&&\left.-4u_{2}u_{3}^{4}+16u_{1}u_{3}^{3}u_{2}
-24u_{1}^{2}u_{3}^{2}u_{2}+16u_{3}u_{1}^{3}
u_{2}-4u_{1}^{4}u_{2}+u_{2}^{2}-2u_{2}u_{3}+ u_{3}^{2}\right\}.
\end{eqnarray*} 
The Poisson operator in canonical coordinates is given by the formula
$$g^{ij}\d_x+\Gamma^{ij}_ku^k_x$$
where $g^{ij}$ are the contravariant components
 of the euclidean metric in canonical coordinates
\begin{eqnarray*}
g^{11}&=&\f{4}{9}u_1^6-\f{4}{3}u_1^5u_{3}-\f{4}{3}u_{2}{u_{1}}^{5}+\f{14}{3}u_{2}u_{1}^{4}u_{3}+u_{1}^{4}u_{3}^{2}
+{u_{2}}^{2}{u_{1}}^{4}+\f{4}{9}
u_{1}^{3}{u_{3}}^{3}-4{u_{2}}^{2}u_{3}{u_{1}}^{
3}+\\
&&-\f{16}{3}u_{2}{u_{3}}^{2}{u_{1}}^{3}+
6{u_{1}}^{2}u_{3}^{2}{u_{2}}^{2}-\f{2}{3}u_{1}^{2}u_{3}^{4}+\f{4}{3}u_{1}^{2}u_{3}^{3}u_{2}-4u_{1}u_{2}^{2} u_{3}^{3}+\f{4}{3}u_{1}u_{3}^{4}u_{2}+\f{1}{9}u_{3}^{6}+\\
&&-\f{2}{3}
u_{3}^{5}u_{2}+\f{2}{9}u_{3}^{3}+u_{2}^{2}u_{3}^{4}-\f{2}{9}{u_{2}}^{3}-\f{2}{3}{u_{3}}^{2}u_{2}+\f{2}{3}u_{3}u_{2}^{2},
\end{eqnarray*}
\begin{eqnarray*}
g^{12}=g^{21}&=&-\f{2}{9}u_{1}^{6}+\f{1}{3}u_{2}u_{1}^{5}+u_{1}^{5}u_{3}-\f{5}{3}{u_{1}}^{4}{u_{3}}^{2}-\f{5}{3}u_{2}
 u_{1}^{4}u_{3}+{\frac {10}{9}}{u_{1}}^{3}{u_{3}}^{3}+\f{10}{3}u_{2}{u_{3}}^{2}{u_{1}}^{3}+\\
&&-\f{10}{3}{u_{1}}^{2}u_{3}^{3}u_{2}+\f{5}{3}u_{1}{u_{3}}^{4}u_{2}-\f{1}{3}u_{1}u_{2}^{2}+\f{2}{3}u_{3}u_{1}u_{2}-\f{1}{3}{u_{3}}^{2}{
u_{1}}-\f{1}{3}u_{1}u_{3}^{5}+\f{2}{9}{u_{3}}^{3}+\\
&&-\f{1}{3} 
u_{3}^{5}u_{2}-\f{1}{3}{u_{3}}^{2}u_{2}+\f{1}{9}{u_{2}}^{3}+\f{1}{9}
{u_{3}}^{6},
\end{eqnarray*}
\begin{eqnarray*}
g^{13}=g^{31}&=&-\f{2}{9}{u_{1}}^{6}+u_{2}{u_{1}}^{5}+\f{1}{3}u_{1}^{5}u_{3}+\f{2}{3}{u_{1}}^{4}{u_{3}}^{2}-3u_{2}u_{1}^{4}u_{3}-{u_{2}}^{2}{u_{1}}^{4}+4{u_{2}}^{2}u_{3}{u_{1}}^{3}+\\
&&+2u_{2}{u_{3}}^{2}{u_{1}}^{3}-{
\frac {14}{9}}{u_{1}}^{3}{u_{3}}^{3}-6{u_{1}}^{2}u_{3}^{2}{u_{2}}^{2}+\f{2}{3}u_{1}^{2}{u_{3}}^{4}+2{u_{1}}^{
2}{u_{3}}^{3}u_{2}+4u_{1}\,{u_{2}}^{2}{u_{3}}^{3}+\\
&&-\f{2}{3}
u_{3}u_{1}u_{2}+\f{1}{3}u_{1}{u_{3}}^{5}+\f{1}{3}u_{3}^{2}u_{1}-3u_{1}{u_{3}}^{4}u_{2}+\f{1}{3}u_{1}{u_{2}}^{2}-{u_{2}}^{2}{u_{3}}^{4}-\f{1}{9}{u_{3}}^{3}-\f{2}{9}
{u_{3}}^{6}+\\
&&{u_{3}}^{5}u_{2}-\f{2}{9}{u_{2}}^{3}+\f{1}{3}u_{3}{u_{2}}^{2},
\end{eqnarray*}
\begin{eqnarray*}
g^{22}&=&\f{1}{9}{u_{1}}^{6}-\f{2}{3} 
u_{1}^{5}u_{3}+\f{5}{3}{u_{1}}^{4}{u_{3}}^{2}-{\frac {20}{9}}u_{1}^{3}{u_{3}}^{3}+\f{5}{3}{u_{1}}^{2}{u_{3}}^{4}-\f{2}{3}u_{1}{u_{3}}^{5}-\f{2}{3}{u_{3}}^{2}u_{1}-\f{2}{3}u_{1}{u_{2}}^{2}+\\
&&+\f{4}{3}u_{3}u_{1}u_{2}+\f{2}{9}{u_{3}}^{3}+\f{4}{9}
{u_{2}}^{3}+\f{1}{9}{u_{3}}^{6}-\f{2}{3}u_{3}{u_{2}}^{2},
\end{eqnarray*}
\begin{eqnarray*}
g^{23}&=&g^{32}=\f{1}{9}
u_1^6-
\f{1}{3}{u_{1}}^{5}u_{3}-\f{1}{3}u_{2}{u_{1}}
^{5}+\f{5}{3}u_{2}{u_{1}}^{4}u_{3}-\f{10}{3}u_{2}{u_{3}}^
{2}{u_{1}}^{3}+{\frac {10}{9}}{u_{1}}^{3}{u_{3}}^{3}+\f{10}{3}
{u_{1}}^{2}{u_{3}}^{3}u_{2}+\\
&&-\f{5}{3}{u_{1}}^{2}{u_{3}}^{4}
+u_{1}{u_{3}}^{5}-\f{5}{3}u_{1}{u_{3}}^{4}u_{2}-\f{1}{9}{
u_{3}}^{3}+\f{1}{3}{u_{3}}^{2}u_{2}-\f{1}{3}u_{3}{u_{2}}^{2
}-\f{2}{9}{u_{3}}^{6}+\f{1}{9}{u_{2}}^{3}+\f{1}{3}{u_{3}}^{5}u_{2},
\end{eqnarray*}
\begin{eqnarray*}
g^{33}&=&\f{1}{9}{u_{1}}^{6}-\f{2}{3}u_{2}u_{1}^{5}+{u_{2}}^{2}{u_{1}}^{4}+\f{4}{3}u_{2}{u_{1}}^{4}u_{3}-\f{2}{3}{u_{1}}^{4}{u_{3}}^{2}-4{u_{2}}^{2}u_{3}u_{1}^{3}+\f{4}{9}{u_{1}}^{3}{u_{3}}^{3}+\f{4}{3}u_{2}u_{3}^{2}{u_{1}}^{3}+\\
&&+{u_{1}}^{2}{u_{3}}^{4}+6{u_{1}}^{2}u_{3}^{2}{u_{2}}^{2}-\f{16}{3}{u_{1}}^{2}{u_{3}}^{3}u_{2}-
\f{4}{3}u_{1}{u_{3}}^{5}+\f{2}{3}{u_{3}}^{2}u_{1}-4u_{1}
{u_{2}}^{2}{u_{3}}^{3}+\f{2}{3}u_{1}{u_{2}}^{2}+\\
&&\f{4}{3}u_{1}{u_{3}}^{4}u_{2}-\f{4}{3}u_{3}u_{1}u_{2}+\f{4}{9}
{u_{3}}^{6}-\f{4}{3}{u_{3}}^{5}u_{2}-\f{4}{9}{u_{3}}^{3}+\f{2}{3}
{u_{3}}^{2}u_{2}-\f{2}{9}{u_{2}}^{3}+{u_{2}}^{2}{u_{3}}^{4},
\end{eqnarray*}
and $\Gamma^{ij}_k$ are the Christoffel symbols
 \eqref{nat-conn-eps} with $\epsilon=1$. The 1-forms $\omega^{(p,\alpha)}$  definining the principal hierarchy can be obtained starting 
 from the the differentials of the Casimirs of 
$$P^{ij}=g^{ij}\delta'(x-y)-g^{il}\Gamma^{j}_{lk}u^k_x\delta(x-y)$$
and solving the recursive relations
\begin{equation}\label{recrelforms}
\nabla_k\omega^{(p,\alpha+1)}_h=g_{ih}c^i_{kl}g^{lm}\omega^{(p,\alpha)}_m.
\end{equation}
For instance if we take the counity $\omega^{(1,0)}=3dt_3$:
\begin{eqnarray*}
\omega^{(1,0)}_1&=&-\f{3}{2}\f{1}{(u_1-u_2)^{2}(u_2-u_3)}\\ 
\omega^{(1,0)}_2&=&-\f{3}{2}\f{u_1-2u_2+u_3}{(u_1-u_2)^{2}(u_2-u_3)^2}\\
\omega^{(1,0)}_3&=&\f{3}{2}\f{1}{(u_1-u_2)(u_2-u_3)^2},\\
\end{eqnarray*}
it is easy to check that the 1-form $\omega^{(1,1)}$ has components:
\begin{eqnarray*}
\omega^{(1,1)}_1&=&-\f{1}{4(u_1-u_3)^{3}(u_1-u_2)^{3}(u_2-u_3)}\left\{4u_2u_1^4+2u_3u_1^4-
14u_2u_3u_1^3-4u_2^2u_1^3+\right.\\
&&\left.-6u_3^2u_1^3+12u_2^2u_3u_1^2+18u_2u_3^2u_1^2+6u_3^3u_1^2+2u_1u_2
-12u_2^2u_3^2u_1-2u_3u_1+\right.\\
&&\left.-2u_1u_3^{4}-10u_2u_3^{3}u_1-u_2^2+u_3^{2}+4u_2^2u_3^3+2u_2u_3^4\right\},
\\ 
\omega^{(1,1)}_2&=&-\f{1}{4(u_1-u_3)^{2}(u_1-u_2)^{3}(u_2-u_3)^{2}}\left\{2u_1^{5}-6u_2u_1^{4}+2u_3u_1^{4}+2u_2u_3u_1^{3}
+2u_2^{2}u_1^{3}+\right.\\
&&\left.-8u_3^{2}u_1^{3}+12u_2u_3^{2}u_1^{2}+2u_3^{3}u_1^{2}+2u_2^{
3}u_1^{2}-6u_2u_3^{3}u_1-6u_2^{2}
u_3^{2}u_1-4u_2^{3}u_3u_1+\right.\\
&&\left.2u_1u_3^{4}+4u_2^{2}u_3^{3}-u_2^{2}+2u_2u_3
-u_3^{2}-2u_2u_3^{4}+2u_2^{3}u_3^{2}\right\},\\
\omega^{(1,1)}_3&=&\f{1}{4(u_1-u_3)^{3}(u_1-u_2)^{2}(u_2-u_3)^{2}}\left\{2u_1^{5}-6u_3u_1^{4}+2u_2
u_1^{4}-4u_2^{2}{u_1}^{3}+6u_3^{2}u_1^{3}+\right.\\
&&\left.-6u_2u_3u_1^{3}-2u_3^{3}u_1^{2}+12u_2^{2}u_3u_1^{2}+6u_2
u_3^{2}{u_1}^{2}-12u_2^{2}{u_3}^{2}u_1-2
u_2u_3^{3}u_1-2u_2u_3+\right.\\
&&\left.{u_3}^{2
}+{u_2}^{2}+4u_2^{2}{u_3}^{3}\right\}.
\end{eqnarray*} 
The associated flow is the first primary flow:
$$\f{\d u_i}{\d t}=\f{\d u_i}{\d x},\qquad i=1,\dots,3.$$
In the same way we obtain the components of the 1-form $\omega^{(1,2)}$:
\begin{eqnarray*}
\omega^{(1,2)}_1&=&\f{1}{4(u_1-u_3)^{3}(u_1-u_2)^{3}(u_2-u_3)}\left\{2u_1^{6}-6u_2u_1^{5}-6u_3
u_1^{5}+14u_2u_1^{4}u_3+4u_2^{2
}u_1^{4}+\right.\\
&&\left.6u_3^{2}u_1^{4}-6u_2u_1^{3}u_3^{2}-8u_2^{2}u_3u_1^{3}-2u_3^{3}u_1^{3}-6u_2u_1^{2}u_3^{3}-2
u_1u_2^{2}+2u_3^{2}u_1+8u_2^{2}
u_3^{3}u_1+\right.\\
&&\left.4u_2u_1u_3^{4}-{u_3
}^{3}+u_2^{3}-4u_2^{2}u_3^{4}-u_2u_3^{2}+u_3u_2^{2}\right\},\\ 
\omega^{(1,2)}_2&=&-\f{1}{4(u_1-u_3)^{2}(u_1-u_2)^{3}(u_2-u_3)^{2}}\left\{4u_3u_1^{5}-8u_2u_1^{4}u_3
-4u_3^{2}u_1^{4}-2u_2^{2}u_1^{4
}+2u_2^{3}u_1^{3}+\right.\\
&&\left.6u_2^{2}u_3{u_1
}^{3}+12u_2u_1^{3}u_3^{2}-4u_3^{3}u_1^{3}+4u_3^{4}u_1^{2}-6u_2^{2}u_3^{2}u_1^{2}-2u_2^{3}u_3u_1^{2}-4u_2u_1u_3^{4}+2u_2^{2}u_3^{3}
u_1+\right.\\
&&\left.-2u_2^{3}u_3^{2}u_1-u_2^{3}+u_2u_3^{2}-u_3^{3}+2u_2^{3}u_3^{3}+u_3
u_2^{2}\right\},\\
\omega^{(1,2)}_3&=&\f{1}{2(u_1-u_3)^{3}(u_1-u_2)^{2}(u_2-u_3)^{2}}\left\{2u_1^{5}-6u_3u_1^{4}+2u_2
u_1^{4}-4u_2^{2}u_1^{3}+6u_3^{2}u_1^{3}-6u_2u_3u_1^{3}+\right.\\
&&\left.
-2u_3^{3}u_1^{2}+12u_2^{2}u_3u_1^{2}+6u_2u_3^{2}u_1^{2}-12u_2^{2}u_3^{2}u_1-2
u_2u_3^{3}u_1-2u_2u_3+u_3^{2
}+u_2^{2}+4u_2^{2}u_3^{3}\right\}.\\
\end{eqnarray*}
and the associated flow 
\begin{eqnarray*}
\f{\d u_1}{\d t}&=&(u_2+u_3)\f{\d u_1}{\d x},\\
\f{\d u_2}{\d t}&=&(u_1+u_3)\f{\d u_2}{\d x},\\
\f{\d u_3}{\d t}&=&(u_1+u_2)\f{\d u_3}{\d x}.
\end{eqnarray*}
Higher flows can be obtained analogously.
\begin{remark}
As it can be observed from the example above, even if the starting points of the hierarchy are the differentials of the Casimirs of $P$, hence exact $1$-forms,  the iterative procedure leads immediately to consider non exact 1-forms. Indeed,  in the example $\omega^{(1,1)}$ is exact
 and $\omega^{(1,2)}$ is not exact. The reason is that
 in the first case ($p=1,\,\alpha=1$) the r.h.s. of \eqref{exdefect} vanishes while for different values of $p$ and $\alpha$, including $p=1,\,\alpha=2$  it does not vanish. 
\end{remark}

\end{document}